\documentclass[journal,onecolumn]{IEEEtran}

\usepackage[english]{babel}
\usepackage{cite}
\usepackage{amssymb,amsthm,mathtools}
\usepackage[ruled,linesnumbered]{algorithm2e}
\usepackage{placeins}
\usepackage{array}
\usepackage[shortlabels]{enumitem}
\usepackage[hidelinks]{hyperref}
\usepackage[capitalize]{cleveref}
\usepackage{color}
\crefname{appendix}{Appendix}{Appendices}
\Crefname{appendix}{Appendix}{Appendices}
\newtheorem{theorem}{Theorem}[section]
\newtheorem{corollary}[theorem]{Corollary}
\newtheorem{lemma}[theorem]{Lemma}
\newtheorem{example}{Example}
\newtheorem{remark}{Remark}[section]
\theoremstyle{definition}
\newtheorem{definition}[theorem]{Definition}

\AddToHook{env/lemma/begin}{\crefalias{theorem}{lemma}}
\AddToHook{env/corollary/begin}{\crefalias{theorem}{corollary}}
\AddToHook{env/definition/begin}{\crefalias{theorem}{definition}}
\AddToHook{env/example/begin}{\crefalias{theorem}{example}}
\AddToHook{env/remark/begin}{\crefalias{theorem}{remark}}

\newcommand{\cA}{\mathcal{A}}
\newcommand{\cC}{\mathcal{C}}
\newcommand{\cP}{\mathcal{P}}

\renewcommand{\le}{\leqslant}
\renewcommand{\leq}{\leqslant}
\renewcommand{\ge}{\geqslant}
\renewcommand{\geq}{\geqslant}

\newcommand{\F}{\mathbb{F}}
\newcommand{\K}{\mathbb{K}}
\newcommand{\N}{\mathbb{N}}
\newcommand{\bbP}{\mathbb{P}}
\newcommand{\vi}{\mathbf{i}}
\newcommand{\vj}{\mathbf{j}}
\newcommand{\vv}{\mathbf{v}}
\newcommand{\vx}{\mathbf{x}}
\newcommand{\vy}{\mathbf{y}}
\newcommand{\vz}{\mathbf{z}}
\newcommand{\va}{\mathbf{a}}
\newcommand{\vb}{\mathbf{b}}
\newcommand{\vc}{\mathbf{c}}
\newcommand{\valpha}{\boldsymbol{\alpha}}

\newcommand{\Zero}{{\mathbf{0}}}
\newcommand{\One}{{\mathbf{1}}}
\newcommand{\eps}{\varepsilon}

\newcommand{\LCS}{\mathrm{LCS}}
\newcommand{\rank}{\mathrm{rank}}
\newcommand{\Set}{\mathrm{Set}}
\newcommand{\Var}{\mathrm{Var}}
\newcommand{\Div}{\mathrm{Div}}
\newcommand{\supp}{\mathrm{Supp}}
\newcommand{\Ev}{\mathrm{Ev}}

\title{Random Algebraic Geometry Codes Approach the Half-Singleton Bound for Insertions and Deletions}
\author{Zhihao~Guan and Hengjia~Wei%
    \thanks{This work was supported in part by the National Natural Science Foundation of China under Grant 12371523.}
	\thanks{Z. Guan and H. Wei are with the School of Mathematics and Statistics, Xi'an Jiaotong University, Xi'an 710049, China (e-mail: guanzh@stu.xjtu.edu.cn; hjwei05@gmail.com).}%
}

\begin{document}
	
	\maketitle
	
	\begin{abstract}
		In this paper, we study the performance of algebraic geometry (AG) codes against adversarial insertion-deletion (insdel) errors. The half-Singleton bound states that an $[n,k]_q$ linear code can correct at most $n-2k+1$ insdel errors. It was recently proven that random Reed-Solomon codes approach this bound. However, these constructions require the field size $q$ to grow linearly with the code length $n$. We overcome this barrier by extending the probabilistic analysis of general linear insdel codes to AG codes. We demonstrate that curves with many rational points allow for nearly optimal codes over significantly smaller alphabets. 
		
		We prove the following main asymptotic results: (1) For general smooth complete curves of fixed genus, random AG codes are nearly optimal, that is, they can correct $(1-\varepsilon)n-2k$ insdel errors with high probability over linear-sized fields ($q=\Theta(n)$). (2) By utilizing Hermitian curves, we achieve this optimality over sublinear fields of size $q=\Theta(n^{2/3})$, breaking the linear field size barrier. (3) Using asymptotically optimal Garc\'{i}a-Stichtenoth towers, we prove the existence of random AG codes that approach the half-Singleton bound with high probability over fields of size $q=2^{O_R(1/\varepsilon^2)}$, independent of $n$.
	\end{abstract}
	
	\section{Introduction}\label{Introduction}
	
	Error-correcting codes are widely used in digital communication and storage systems to ensure data reliability. In classical coding theory, the dominant model is the Hamming metric, where errors alter symbol values but preserve their positions. However, modern applications increasingly face \emph{synchronization errors}, primarily insertions and deletions (insdel errors). Unlike Hamming errors, insertions and deletions alter the sequence length, causing a loss of positional information. This ``desynchronization" means the decoder cannot easily determine which received symbol corresponds to which original index, rendering standard algebraic decoding methods ineffective and making the design of optimal insdel codes significantly more challenging.
	
	The study of codes for these errors began in the 1960s with the foundational work of Levenshtein, Varshamov \cite{lcvenshtcin1966binary} and Tenengolts \cite{tenengolts1984nonbinary}. For decades, constructing efficient codes remained a difficult open problem.  Recently, however, there has been renewed interest in this field, with numerous studies focusing on designing codes capable of correcting insdel errors \cite{guruswami2016efficiently,brakensiek2017efficient,schoeny2017codes,guruswami2017deletion,cheng2018deterministic, haeupler2019optimal, sima2020optimal, guruswami2021explicit,guruswami2022zero,smagloy2023single,yasunaga2024improved,beelen2025reed,sun2025codes}. This resurgence is driven by critical real-world applications. For instance, in DNA-based data storage systems \cite{bornholt2016dna,yazdi2017portable, jain2017duplication,bar2025scalable}, insertion and deletion errors occur naturally during the synthesis and sequencing processes. Similarly, in racetrack memory \cite{chee2017codes}, data is stored in magnetic domains that move along a wire, where imprecise shifts can cause synchronization errors. These technologies necessitate efficient codes capable of handling such disturbances.
	
	Linear codes are particularly desirable for these applications due to their compact representation and support for efficient encoding and decoding algorithms. The study of linear codes for insertion and deletion errors has advanced significantly in recent years. It was first established by Abdel-Ghaffar et al. \cite{abdel2007linear}  that any linear code capable of correcting a single deletion must have a rate of at most $1/2$. Later, Cheng, Guruswami, Haeupler, and Li \cite{cheng2023efficient} established the half-Singleton bound, which implies that any $[n,k]$ linear code can correct at most $n-2k+1$ insdel errors. Ji et al.~\cite{ji2023strict} tightened this limit to the strict half-Singleton bound, showing that codes lacking the all-one vector can correct at most $n-2k$ errors. These bounds serve as the fundamental limits for linear insdel codes, analogous to the Singleton bound for Hamming errors.
	
	Approaching this bound with efficient linear codes has been a major challenge. Cheng, Guruswami, Haeupler, and Li~\cite{cheng2023efficient} constructed linear codes that correct a $\delta$ fraction of insdel errors with rate approaching $(1-\delta)/2$ as the alphabet size increases. Later, Cheng et al.~\cite{cheng2023linear} provided constructions for high-rate and high-noise regimes. A natural candidate for optimal linear codes is the family of Reed-Solomon (RS) codes, which are MDS codes widely used for Hamming errors. However, their performance against insdel errors remained an open problem for decades. Recently, Con, Shpilka, and Tamo~\cite{con2023reed} proved that RS codes can exactly achieve the half-Singleton bound, but their existential argument required a field size exponential in the dimension ($q = O(n^{4k-2})$). Con, Guo, Li, and Zhang~\cite{con2024random} significantly improved this result: they analyzed random RS codes and proved that they approach the half-Singleton bound, that is, correct $(1-\eps)n-2k$ errors with high probability over linear field sizes ($q = \Theta(n)$). In a different direction, Gross, Con, and Yaakobi~\cite{gross2025improved} constructed explicit codes over $\F_{q^2}$, linear over $\F_q$, with rate $1/2-\delta-\eps$ that efficiently correct a $\delta$ fraction of insdel errors for $q=O(\eps^{-4})$; they also gave fully linear codes over $\F_q$ with rate $1/2-2\sqrt{\delta}-\eps$.
	
	While RS codes are arguably the most prominent linear codes, they have a significant limitation: the field size must exceed the code length ($q>n$). This restricts their use over small or fixed alphabets. To overcome this limitation, we turn to algebraic geometry (AG) codes, which generalize Reed-Solomon codes by evaluating functions on algebraic curves at rational points. Curves with many $\F_q$-rational points can support code lengths larger than $q$, making AG codes natural candidates for approaching the half-Singleton bound over smaller, and potentially constant-sized, fields.
	
	In this paper, we investigate the asymptotic performance of AG codes against adversarial insertion and deletion errors. We examine whether the algebraic structure of curves can reduce the required field size while maintaining optimal error correction capability. Specifically, we analyze random codes generated by rational points on irreducible MDS varieties, that is, varieties not contained in any hyperplane passing through the origin, and then apply this framework to AG codes approaching the half-Singleton bound.
	
	\subsection{Our Results}
	In this paper, we extend the probabilistic analysis of insdel codes to the setting of AG codes. Our main contribution is to show that curves with sufficiently many rational points yield random AG codes that approach the half-Singleton bound over smaller alphabets than those required by Reed-Solomon codes. We establish three main theorems corresponding to three regimes of the field size $q$.
	
	Throughout these informal statements, the underlying curve and one-point divisor are fixed. A random AG code is obtained by sampling and ordering $n$ distinct $\F_q$-rational points outside the support of the divisor and using them as evaluation points. Thus, only the evaluation points and their ordering are random.
	
	\begin{theorem}[Informal, see \cref{thm:prob_curve}] 
        For sufficiently large $n$, let $X /\F_q$ be a smooth complete curve of genus independent of $n$, where $q=\Theta(n)$. Then a random $[n,k]$ AG code constructed from $X$, with rate $k/n<(1-\eps)/2$, corrects $(1-\eps)n-2k$ insdel errors with probability tending to $1$ as $n\to \infty$.
	\end{theorem}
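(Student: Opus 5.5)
We will follow the strategy of Con, Guo, Li and Zhang for random Reed--Solomon codes, with the polynomial evaluation map replaced by the Riemann--Roch space $L(mP_\infty)$ of the fixed curve $X$. Let $f_1,\dots,f_k$ be a basis of $L(mP_\infty)$ with $m=k+g-1$, and let $\vv(P)=(f_1(P),\dots,f_k(P))$. The code is $\{(f(P_1),\dots,f(P_n)) : f\in L(mP_\infty)\}$. It fails to correct $t=(1-\eps)n-2k$ insdel errors exactly when two distinct codewords share a common subsequence of length $n-t=2k+\eps n$. By linearity, this happens when there are two increasing index sequences $\vi=(i_1<\dots<i_\ell)$ and $\vj=(j_1<\dots<j_\ell)$ with $\vi\neq\vj$ and $\ell=2k+\eps n$, together with a nonzero $f$ satisfying $f(P_{i_s})=f(P_{j_s})$ for all $s$. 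Such an $f$ exists iff the $\ell\times k$ matrix $A_{\vi,\vj}$ with rows $\vv(P_{i_s})-\vv(P_{j_s})$ has rank $<k$. The plan is to bound, for each fixed pair $(\vi,\vj)$, the probability over the random ordered choice of distinct rational points that $\rank A_{\vi,\vj}<k$, and then union bound. The number of pairs is at most $\binom{n}{\ell}^2\le 2^{2n}$.

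\textbf{Per-pair bound.} We reduce the event to one where a bounded number of ``fresh'' points land in the zero set of a nonzero function of bounded pole order.

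1. Following the RS argument, build a set $S$ of $k$ row indices greedily, exposing points in increasing order of index.
2. At each step where the current rows have rank $r<k$, take a nonzero vector $\vc$ in the kernel of the rows chosen so far. Then $h=\sum_c c_c f_c\in L(mP_\infty)$ is nonzero.
3. The next row $s$ fails to increase the rank iff $h(P_{i_s})=h(P_{j_s})$. If the larger index of $i_s,j_s$ is a point not yet exposed, then conditioned on everything else this is the event that a uniformly random remaining rational point lies in the zero set of $h-h(Q)$ for the other point $Q$. That function is a nonzero element of $L(mP_\infty)$ (nonconstant when $m\ge 2g$, or at least nonzero modulo constants on a suitable subspace), so it has at most $m$ zeros.
4. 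Hence this event has probability at most $m/(N-n)$, where $N=|X(\F_q)|\ge q+1-2g\sqrt q$. For $q=Cn$ with $C$ large, $N-n=\Omega(q)$, so each failure has probability $O(k/q)=O(1/C)$ for fixed $g$.

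\textbf{Combinatorial heart.} As in Con et al., the technical core is to show that among the $\ell=2k+\eps n$ row pairs one can find $\Omega(\eps n)$ ``bad'' steps that each occur with fresh randomness. We will reuse their decomposition of $(\vi,\vj)$ into runs where $i_s\ne j_s$ and use their counting lemma: the rank drops only if at least $\ell-k\ge k+\eps n$ rows are dependent, and a constant fraction of these involve a new maximal index. This gives $\Pr[\rank A_{\vi,\vj}<k]\le (c\,k/q)^{\Omega(\eps n)}$, up to a factor $\binom{\ell}{k}\le 2^{\ell}$ for the choice of the rank-deficient pattern.

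We expect this step to be the main obstacle, specifically making the fresh-point argument rigorous when $i_s=j_s$ or when the two points overlap earlier rows. On the diagonal, $i_s=j_s$ contributes a zero row; the lower bound $\vi\neq\vj$ and the LCS structure only guarantee $\Omega(\eps n)$ off-diagonal useful rows. We will first try to import the RS lemma verbatim, since it only uses that a nonzero function of degree $\le m$ has $\le m$ roots. That holds on the curve with $m=k+g-1$, and the extra $g$ is negligible since $g=O(1)$. Sampling without replacement is handled because we always condition on previously exposed points and lose only $n$ of the $N$ points.

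\textbf{Conclusion.} Combining the two counts, the failure probability is at most $2^{O(n)}(c'k/q)^{\Omega(\eps n)}$. Choosing $q=Cn$ with $C=C(\eps)$ large enough, this is at most $2^{-\Omega(n)}\to 0$. This yields correction of $(1-\eps)n-2k$ insdel errors with probability tending to $1$ whenever $k/n<(1-\eps)/2$, which is needed so that $t\ge 0$.
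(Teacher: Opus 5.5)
There is a genuine gap. Your outer shell matches the paper: a union bound over at most $2^{2n}$ index pairs, the Serre bound giving $|S|-n=\Omega(q)$, and $q=C(\eps)n$. But the core reduction is wrong. Two distinct codewords $\va=\vx G$ and $\vb=\vy G$ share a common subsequence along $(\vi,\vj)$ iff $\vx G_{\vi}=\vy G_{\vj}$. Equivalently, the nonzero vector $(\vx,-\vy)$ lies in the left kernel of the $2k\times\ell$ matrix obtained by stacking $G_{\vi}$ on top of $G_{\vj}$. Because the two codewords are read at different positions, this is not a linear condition on $\va-\vb$, and linearity does not reduce it to a single function.

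Your matrix $A_{\vi,\vj}$, with rows $\vv(P_{i_s})-\vv(P_{j_s})$, captures only the slice $\vx=\vy$. That is one codeword aligned with a shifted copy of itself, which is not a decoding failure. As a result, the event you bound has probability one, for two reasons:
\begin{itemize}
\item $D=(k+g-1)P$ is effective, so $1\in L(D)$, and the coordinate vector of the constant function lies in $\ker A_{\vi,\vj}$ for every pair. This is exactly where your step~3 breaks, as you half-noticed.
\item If $\vi,\vj$ differ in a single position, then all other rows are zero and $\rank A_{\vi,\vj}\le 1$.
\end{itemize}
More fundamentally, a system with only $k$ unknowns cannot detect the threshold $2k$ in the half-Singleton bound. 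The correct object is the verification matrix $M_{k,\ell,I,J}$ of \cref{def:verification_matrix}. Since $\One\in\cC$, one uses its $(2k-1)\times\ell$ form with an all-ones row. Full row rank is needed only for pairs with $\rank(G_{I\cap J})<k$. Pairs with large overlap are handled separately, as in the partition $\mathcal{P}_1,\mathcal{P}_2,\mathcal{P}_3$ in the proof of \cref{thm:main_prob}, not by the condition $\vi\neq\vj$.

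Even with the right matrix, your ``fresh point lands in a zero set'' step only bounds the probability that specializing a polynomial \emph{already known to be generically nonzero} kills it. The key algebraic input is missing: $\det M_{k,2k-1,I,J}$ must be generically non-vanishing when the columns range over the curve and $|I\cap J|<k$. This property must also survive replacing a failed chain by a bank chain (the re-indetermination step, \cref{lm:nonsingular}).

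For Reed--Solomon codes this input is the Con--Shpilka--Tamo theorem on generic RS codes. It does not follow from a root count on a genus-$g$ curve, so it cannot be imported verbatim. The paper obtains it in two steps:
\begin{itemize}
\item It realizes the code via $L(D)$ as points of an affine MDS curve $Y\subseteq\{x_1=1\}$ of degree $d$ (\cref{lm:code_equiv}).
\item It proves that every irreducible MDS variety is HS-optimal (\cref{thm:Variety}). This uses a power-series parametrization, a weighted reduction to univariate monomials, and the monomial base case \cref{thm:monomial}.
\end{itemize}
Only then does the chain/bank rank certification go through. Each column appears at most twice, so the determinant has degree at most $2$ in each column. \cref{lm:poly_degenerate} then yields a nonzero witness of degree at most $2$, and B\'ezout bounds its zeros on $Y$ by $2d$.

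Finally, this argument guarantees only $r=\lfloor\eps^2 n/8\rfloor$ independent failure opportunities per pair, not the $\Omega(\eps n)$ you assert. That is why the paper needs $q>(1+2/\eps)2^{2+32/\eps^2}n$.
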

	
	This demonstrates that achieving the half-Singleton bound is not unique to Reed-Solomon codes arising from the projective line, but is a general feature of codes generated from MDS varieties.

	\begin{theorem}[Informal, see \cref{thm:prob_Hermitian}] 
        For sufficiently large $n$, let $X$ be a Hermitian curve over $\F_q$, where $q=\Theta(n^{2/3})$. Then a random $[n,k]$ AG code constructed from $X$, with rate $k/n<(1-\eps)/2$, can correct $(1-\eps)n-2k$ insdel errors with probability tending to $1$ as $n\to \infty$.
	\end{theorem}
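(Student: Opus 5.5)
The plan is to follow the Con--Guo--Li--Zhang strategy for random Reed--Solomon codes, with the projective line replaced by the Hermitian curve $X: y^{q_0}+y=x^{q_0+1}$ over $\F_q$, $q=q_0^2$. This curve has $q^{3/2}+1$ rational points, genus $g=q_0(q_0-1)/2=\Theta(q)$, and a one-point divisor $mP_\infty$ with $m=k+g-1$.

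\textbf{Step 1: reduce decoding failure to a rank condition.}
A linear code fails to correct $t=(1-\eps)n-2k$ insdel errors exactly when two distinct codewords share a common subsequence of length $n-t$. By linearity this is a nonzero codeword pair $c,c'$ together with two increasing index tuples $I,J$ of length $\ell=n-t=2k+\eps n$ such that $c_I=c'_J$. For each pair $(I,J)$ with $I\neq J$, this becomes the event that a $2\ell\times 2k$-type structured matrix drops rank: its rows are $(f_1(P_{i_s}),\dots,f_k(P_{i_s}), -f_1(P_{j_s}),\dots,-f_k(P_{j_s}))$, where $f_1,\dots,f_k$ is a basis of $L(mP_\infty)$.

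Following Con et al., I would pass to a combinatorial witness. The ``bad'' event is contained in the union, over a bounded number of ``patterns'' of merged index structure, of the event that a certain $k\times k$-type minor vanishes. Each such minor is a nonzero polynomial in the coordinates of the random points.

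\textbf{Step 2: bound the probability of each bad event.}
I would invoke the earlier MDS-variety framework, taking as given that it is stated before \cref{thm:prob_Hermitian}. That framework says the evaluation map of $L(mP_\infty)$ embeds $X$ into an irreducible variety not contained in any hyperplane through the origin. Combining this with a Schwartz--Zippel/Bezout-type estimate on the curve, each vanishing event has probability at most roughly $\deg/|X(\F_q)|$ per step of a sequential revealing argument. Here the degree of the relevant function on $X$ is $O(m)=O(k+g)$, because a zero of a function in $L(m'P_\infty)$ on $X$ occurs at most $m'$ times, which replaces the univariate degree bound. The revealing argument exposes points one at a time and uses the witness structure to ensure that each new point kills a nonzero function of pole order at most $O(k+g)$. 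This gives a probability bound of $\bigl(O(k+g)/(N-n)\bigr)^{\eps n/ c}$ for each configuration, where $N=q^{3/2}$.

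\textbf{Step 3: the union bound.}
The number of configurations $(I,J)$ and patterns is at most $\binom{n}{\ell}^2\le 2^{2n}$, possibly times $n^{O(\eps n)}$ extra factors. It then suffices that $\bigl(C(k+g)/N\bigr)^{c\eps n}\cdot 4^n\, n^{O(\eps n)}\to 0$. With $N=\Theta(n/\eps')$, i.e. $q=\Theta(n^{2/3})$ with a sufficiently large constant depending on $\eps$, and $g=\Theta(q)=o(n)$, the ratio $(k+g)/N$ is bounded below $1$ by a constant small enough to beat $4^{1/(c\eps)}$. The effective rate shift $g/n\to 0$ is absorbed into $\eps$.

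\textbf{Main obstacle.}
The hardest part is Step 2. I have to show that in each reveal step the conditioned minor, viewed as a function of the newly revealed point, is a \emph{nonzero} element of $L(O(k+g)P_\infty)$. In the RS case this follows from leading-monomial arguments, whereas here the basis $\{x^ay^b\}$ is not a monomial basis in a single variable. I would first try to use the Weierstrass semigroup at $P_\infty$ (generated by $q_0$ and $q_0+1$) to order the basis by pole order, and then run the leading-term argument with pole order playing the role of degree.
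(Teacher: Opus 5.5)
Your architecture is the paper's: embed via $L(mP_\infty)$ into an MDS curve $Y$ of degree $d=k+g-1$, run the Con--Guo--Li--Zhang chain certification with a Bezout bound of $2d$ bad points per exposed column, take a union bound over $2^{2n}$ pairs, and use $|S|=q^{3/2}\geq C_\eps n$. However, Step~1 has a genuine gap, and it sits exactly where AG codes depart from the RS template.

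First, because $1\in L(mP_\infty)$, your $\ell\times 2k$ matrix always has $(e,e)$ in its kernel, where $e$ is the coordinate vector of the constant function $1$. So ``drops rank'' is a probability-one event. The correct bad event is singularity of the $(2k-1)$-row verification matrix $M_{k,\ell,I,J}$ of \cref{lm:condition}, whose relevant minors are $(2k-1)\times(2k-1)$, not $k\times k$.

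Second, and more seriously, you cannot treat all pairs $I\neq J$ alike. If $I,J$ agree in $t$ positions, those $t$ columns lie in the $k$-dimensional space $\{(a,\vb,\vb)\}$. Hence $\rank(M_{k,\ell,I,J})\leq k+\ell-t<2k-1$ identically once $t>\ell-k+1$ (e.g.\ $I,J$ differing in one position), and your claim that each minor is a nonzero polynomial fails.

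The missing ingredient is that a failure $(\vx,\vy)$ with $\vx\neq\vy$ forces $(\vx-\vy)G_{I\cap J}=0$, i.e.\ $\rank(G_{I\cap J})<k$. For RS codes this disposes of every pair with $|I\cap J|\geq k$ for free, since RS codes are MDS. For Hermitian codes it does not, because a hyperplane section of $Y$ can contain up to $d=k+g-1\geq k$ rational points; so ``following Con et al.''\ does not cover this case.

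The paper's \cref{thm:main_prob} handles it in two ranges:
\begin{itemize}
\item For $k\leq|I\cap J|<k+\tau$, it deletes $|I\cap J|-k+1$ common positions. This is why \cref{lm:prob_algorithm} is proved for every length $\ell'-\tau\leq\ell\leq\ell'$.
\item For $|I\cap J|\geq k+\tau$, it uses Bezout on hyperplane sections: no hyperplane through the origin contains $k+\tau$ points of $Y$ once $\deg Y=k+g-1<k+\tau$, i.e.\ once $g\leq\tau=\lfloor\eps n/8\rfloor$.
\end{itemize}
This condition $g_q=O(n^{2/3})\leq\eps n/8$ is a second, essential use of the sublinear genus that your plan never makes.

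There are also three smaller points.

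\emph{The $n^{O(\eps n)}$ factor.} The factor you allow in Step~3 would be fatal, since $(k+g)/N$ is only a constant below $1$. You need the CGLZ count of monotone failure sequences, $\binom{h+r-1}{r}\leq 2^{h+r-1}$ with $r=\lfloor\eps^2 n/8\rfloor$, so that all combinatorial factors are $2^{O(n)}$.

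\emph{The main obstacle is already in the paper.} Generic non-vanishing is supplied by \cref{thm:Variety} (MDS variety $\Rightarrow$ HS-optimal), not merely by the MDS embedding you invoke. Your pole-order route would also work: expand at $P_\infty$ in independent local parameters $t_u$ and substitute $t_u\mapsto\beta t_u$. The lowest $\beta$-coefficient is then the verification determinant for monomials $s_u^{\rho_i}$, with $s_u=t_u^{-1}$ and $\rho_i$ the pole orders; this is essentially \cref{lm:reduction_monomial} at the special point $P_\infty$. But the semigroup $\langle q_0,q_0+1\rangle$ has gaps, so the exponents are non-consecutive. The base case must therefore be \cref{thm:monomial}, not the generic RS result.

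\emph{Conditioned minors.} Generic non-vanishing only covers the initial determinant. That the conditioned minors stay nonzero comes from the bank replacement (\cref{lm:nonsingular}) together with \cref{lm:poly_degenerate}. The latter gives, for each exposed column, a generically nonvanishing polynomial of degree at most $2$ on $Y$. Its pullback to $X$ is a nonzero function in $L(2dP_\infty)$, hence has at most $2d$ zeros.
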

	
	We focus on Hermitian curves for two reasons. First, Hermitian codes form a well-studied family of AG codes. Second, Hermitian curves attain the Hasse-Weil upper bound and therefore have an abundance of rational points. This allows us to construct codes over the sublinear-sized fields in the theorem above.

	\begin{theorem}[Informal, see \cref{thm:prob_tower}]
		For any fixed rational rate $R\in(0,1/2)$ and any fixed $\eps\in(0,1-2R)$, there exists a prime power $q=2^{O_R(1/\eps^2)}$ and an infinite family of random $[n,k]$ AG codes over $\F_q$ with rate $k/n=R$, such that the codes correct $(1-\eps)n-2k$ insdel errors with probability tending to $1$ as the block length tends to infinity along the family.
	\end{theorem}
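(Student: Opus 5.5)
This result will follow from a general probabilistic statement for random AG codes, applied to curves from the Garc\'ia--Stichtenoth tower. The general statement would be proved in the body of the paper. The plan extends the Con--Guo--Li--Zhang analysis of random RS codes.

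\emph{Reduction to a rank condition.} A linear $[n,k]$ code corrects $t$ insdel errors iff no two distinct codewords share a common subsequence of length $n-t$. With $t=(1-\eps)n-2k$, the common subsequence has length $2k+\eps n$. Such a pair is described by two increasing index sequences $I,J$ of length $\ell=2k+\eps n$, with $I\neq J$ after the usual reduction. A bad event means some nonzero $(f,g)\in L(mP_\infty)^2$ satisfies $f(P_{i_s})=g(P_{j_s})$ for all $s$. Writing $f,g$ in a basis $\phi_1,\dots,\phi_k$ of $L(mP_\infty)$, this is the kernel of an $\ell\times 2k$ matrix with rows $(\phi(P_{i_s}),-\phi(P_{j_s}))$. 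It suffices to show this matrix has full column rank $2k$ with overwhelming probability.

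\emph{Probability bound for a fixed pattern.} Follow the RS argument. Reveal the random evaluation points one at a time. Choose a subset of $2k$ rows, forming a chain or cycle structure in the bipartite graph of $(i_s,j_s)$ pairs. When a new point is sampled, the determinant becomes a nonzero function on the curve lying in some Riemann--Roch space $L(cmP_\infty)$, and its number of zeros is at most its pole degree $cm$. Nonvanishing of this function is where irreducibility and nondegeneracy, i.e.\ the ``MDS variety'' property, are needed. The conditional failure probability is then at most $O(m)/(N-n)$, where $N=|X(\F_q)|$. Over the $\Omega(\eps n)$ independent steps available, the failure probability for a fixed $(I,J)$ becomes $(O(k+g)/(N-n))^{\Omega(\eps n)}$. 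A union bound over the at most $\binom{n}{\ell}^2\le 2^{2nH(\ell/n)}$ patterns then needs $N\ge C n$ and $g\ll\eps n$, with $C=2^{O(1/\eps)}$ roughly.

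\emph{Instantiating with the tower.} Use the Garc\'ia--Stichtenoth tower over $\F_{q}$ with $q=r^2$. Its curves satisfy $N_i/g_i\to r-1$ and $N_i\to\infty$. Take $n=\lfloor N_i/C\rfloor$ and $m=k+g-1$ with $k=Rn$; the rate requirement $R$ rational lets $k/n=R$ exactly along a subsequence. The genus is about $N_i/(r-1)$, so we need $g\le \eps' n$, i.e.\ $r\gtrsim C/\eps$. Also, $m$ now includes $g$, which weakens the per-step bound to $(k+g)/N$. Balancing the entropy term against $\eps n\log(N/(k+g))$ forces $\log(N/n)\approx \log C\approx (1/\eps)\log(1/\eps)$-type quantities, and a careful accounting with the $R$-dependence gives $\log q=O_R(1/\eps^2)$.

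\emph{Main obstacle.} The hardest step is the nonvanishing of the determinant as a function of the newly revealed point on a general curve. For RS codes this is a polynomial-degree argument; here I would use linear independence of $\phi_1,\dots,\phi_k$ and the fact that the evaluation map is injective for $m<N$. The delicate part is preserving enough independence through the combinatorial row selection when $g$ is comparable to $\eps n$. The ledger of $g$ versus $\eps n$ is what ultimately drives the $2^{O(1/\eps^2)}$ field size.
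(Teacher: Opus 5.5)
Your overall architecture matches the paper's: a union bound over pairs $(I,J)$, sequential exposure of points, bounding the zeros of the specialized determinant by $O(\deg D)$, and the Garc\'{i}a--Stichtenoth tower with $g\le\eps n/8$ and $|S|/n$ large. However, the reduction step fails as written. The bad event is $f\neq g$, not $(f,g)\neq 0$. Since $D=mP_\infty$ is effective, $1\in L(D)$, so $(f,g)=(1,1)$ lies in the kernel of your $\ell\times 2k$ matrix for every choice of evaluation points. Full column rank $2k$ therefore never holds, and every $2k\times 2k$ minor you propose to track is identically zero. You must quotient out this direction and work with the $(2k-1)$-row verification matrix: one all-ones row, then the non-constant basis functions evaluated along $I$ and along $J$, as in \cref{lm:condition}. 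This is why the paper separates the standard and strict half-Singleton cases.

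Similarly, ``$I\neq J$'' is not the right normalization. An AG code is not MDS, so agreement on $k$ positions does not force $f=g$. The paper applies the determinant argument only when $|I\cap J|<k$. It reduces $k\le|I\cap J|<k+\tau$ to $|I\cap J|=k-1$ by deleting common positions. It handles $|I\cap J|\ge k+\tau$ using $\deg D=k+g-1<k+\tau$, so any $k+\tau$ columns have rank $k$; this is where $g\le\tau$ enters.

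The more serious gap is the step you flag as the main obstacle. Linear independence of $\phi_1,\dots,\phi_k$ and injectivity of evaluation only give that generic $k\times k$ minors are nonzero, i.e.\ the MDS property. They say nothing about the $(2k-1)\times(2k-1)$ verification minors built from interleaved sequences with $|I\cap J|<k$. Even for Reed--Solomon codes, generic non-vanishing of these minors is the Con--Shpilka--Tamo theorem. The paper devotes \cref{Section3} to transferring it to arbitrary MDS curves (\cref{thm:Variety}). It does so via a power-series parametrization, a weighted substitution isolating leading monomials, and an induction for univariate monomial codes. Without this, ``the determinant becomes a nonzero function on the curve'' has no base case.

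Moreover, even with generic non-vanishing, an unlucky specialization can make the partially specialized determinant vanish identically, so the steps are not independent trials. The paper, following Con--Guo--Li--Zhang, uses a chain decomposition with a reserve bank. A re-indetermination argument (\cref{lm:nonsingular}) keeps the determinant generically non-vanishing after a failed chain is replaced. \cref{lm:poly_degenerate} turns each exposed point into a degree-$2$ condition on the curve.

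Each failure discards a chain of up to $2/\eps$ positions, and the bank holds only $O(\eps n)$ positions. So only $r=\Theta(\eps^2 n)$ failures can be afforded. Beating the $2^{2n}$ union bound with $\rho^r$, where $\rho=(2/\eps+1)\cdot 2\deg D/(|S|-n)$, then requires $|S|/\deg D\ge 2^{\Omega(1/\eps^2)}$. This is what produces the field size $2^{O_R(1/\eps^2)}$. Your $\Omega(\eps n)$ count of independent steps with $C=2^{O(1/\eps)}$ is unjustified and inconsistent with your own final accounting.
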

	This result overcomes the fundamental limitation of Reed-Solomon codes, proving that structured linear codes can approach the half-Singleton bound even when the alphabet size does not grow with the block length.
	
	\subsection{Our Approach and Related Work}

	Our goal is to provide codes that approach the half-Singleton bound over fields smaller than those required by Reed-Solomon codes.  Our approach consists of three main steps. First, we introduce the notion of HS-optimal varieties and establish a geometric characterization of half-Singleton optimality. Specifically, we prove that an irreducible variety is HS-optimal if and only if it is an MDS variety. This result reduces the study of optimal linear codes for insertion-deletion errors to the verification of certain determinant conditions on algebraic varieties. 

Second, we develop a probabilistic framework for random linear codes whose generator-matrix columns are sampled from rational points of an MDS variety over a finite field. The main challenge is to estimate the probability that the verification matrices associated with the half-Singleton bound fail to have full rank. We overcome this difficulty by combining a rank-certification procedure with geometric estimates on the number of rational points where a generic determinant vanishes. 

Finally, we apply this framework to algebraic geometry codes. We show that generator matrices of suitable AG codes can be represented by sampling points from affine MDS curves. Therefore, by selecting curves with sufficiently many rational points, our probabilistic framework yields AG codes approaching the half-Singleton bound over fields of linear, sublinear, and constant size.

Our geometric approach is inspired by the generic non-vanishing framework developed by Brakensiek, Dhar, and Gopi~\cite{brakensiek2024generalized}. They study polynomial codes generated by points on irreducible varieties and reduce generic determinant conditions on varieties to conditions on formal power series and subsequently to univariate monomials. We adapt this sequence of reductions to the verification matrices arising from insertion-deletion errors. In particular, the determinant conditions and the corresponding monomial base cases in our setting are different from those in polynomial-code constructions, as they are dictated by the half-Singleton bound rather than by the MDS or higher-order MDS conditions.

Our probabilistic analysis builds on the rank-certification framework developed by Con, Guo, Li, and Zhang~\cite{con2024random} for random Reed-Solomon codes. Their method decomposes pairs of subsequences into chains and certifies the resulting blocks sequentially, using a reserve of additional chains to handle failed determinant checks. We adapt this strategy to codes whose generator-matrix columns are sampled from general MDS varieties. The extension is nontrivial: unlike the Reed-Solomon setting, where evaluation points lie on the projective line and failure probabilities can be bounded by counting bad field elements, our setting requires controlling specialization failures on arbitrary algebraic varieties through geometric estimates of rational points.

Our application to algebraic geometry codes is also related to the recent work of Brakensiek, Dhar, Gopi, and Zhang~\cite{brakensiek2024ag}, where AG codes are analyzed through the viewpoint of generator-matrix columns sampled from algebraic varieties. Their work focuses on list-decoding properties and higher-order MDS codes, whereas we study insertion-deletion errors and the verification matrices governing HS-optimality. By combining the geometric characterization, the probabilistic framework, and the abundance of rational points on suitable algebraic curves, we obtain AG codes approaching the half-Singleton bound over substantially smaller alphabets than Reed-Solomon codes.

	\subsection{Organization}
	The paper is organized as follows. In \cref{Preliminaries} we establish necessary conditions for a linear code to correct a prescribed number of insdel errors and recall the algebraic foundations concerning varieties. \cref{Section3} provides the theoretical foundation for our work, where we prove that an irreducible variety is HS-optimal if and only if it is MDS. In other words, generic vectors from a variety generate an MDS code if and only if they generate a code attaining the half-Singleton bound. In \cref{Section4} we develop our core probabilistic framework. Specifically, we use a chain decomposition technique to derive explicit upper bounds on the failure probability of random codes constructed from these varieties.  Finally, in \cref{section5}, we apply this general framework to AG codes.  We recall the necessary background on algebraic curves and derive our main asymptotic results for smooth complete curves of fixed genus, Hermitian curves, and the Garc\'{i}a-Stichtenoth tower, giving fields of linear, sublinear, and constant size, respectively.
	
	\section{Preliminaries}\label{Preliminaries}
	
	In this section, we establish the foundational notation, define the insertion-deletion metric for linear codes, and review the necessary concepts from algebraic geometry used throughout this paper.
	
	Let $\N=\{0,1,\dots\}$ and $\N^+=\{1,2,\dots\}$. 
	For a positive integer $n$, let $[n]=\{1,2,\dots,n\}$. For a prime power $q$, let $\F_q$ denote the finite field with $q$ elements. Let $\F$ be a non-specific field, which can be either finite or infinite. Let $\F^n$ be a vector space over $\F$ with dimension $n$. 
	
	For a vector $\vv \in \F^n$ and a subset of indices $I \subseteq [n]$, we denote by $\vv_I$ the restriction of $\vv$ to the coordinates in $I$. Similarly, for a matrix $G \in \F^{k \times n}$ and $I \subseteq [n]$, let $G_I$ denote the submatrix of $G$ consisting of the columns with indices in $I$. 
	
	Given a matrix $M \in\F^{m\times n}$, we say that $M$ has full row rank if $\rank(M)=m$ and that $M$ has full column rank if $\rank(M)=n$.

	\subsection{Codes and the Half-Singleton Bound}
	
	In this paper, a vector $\vv \in \F^n$ could also be viewed as a string or a sequence over the alphabet $\F$. A \emph{subsequence} of a string $\va$ is a string obtained by removing some (possibly none) of the symbols in $\va$. Let $\va,\vb$ be strings over an alphabet $\F$. A \emph{longest common subsequence} between $\va$ and $\vb$ is a subsequence of both $\va$ and $\vb$ of maximal length. We denote by $\LCS(\va,\vb)$ the length of a longest common subsequence. 
	
	For two vectors $\va, \vb \in \F^n$, the \emph{insertion-deletion (insdel) distance} $d_I(\va, \vb)$ between $\va$ and $\vb$ is the minimal number of insertions and deletions required to transform $\va$ into $\vb$. It can be verified that $d_I(\va,\vb)$ is indeed a metric on $\F^n$.

	It has been proved that the insdel distance between any two vectors can be characterized by their longest common subsequences.
	\begin{lemma}[\cite{do2021explicit}, Lemma 1]
		Let $\va,\vb\in\F^n$. Then we have
		$$d_I(\va,\vb)=2n-2\LCS(\va,\vb).$$
	\end{lemma}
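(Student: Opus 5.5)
We prove the two inequalities $d_I(\va,\vb)\le 2n-2\LCS(\va,\vb)$ and $d_I(\va,\vb)\ge 2n-2\LCS(\va,\vb)$ separately, writing $\ell=\LCS(\va,\vb)$ throughout.

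For the upper bound, fix a longest common subsequence $\vc$ of length $\ell$. Deleting the $n-\ell$ symbols of $\va$ that lie outside a chosen occurrence of $\vc$ turns $\va$ into $\vc$. Then inserting the $n-\ell$ symbols of $\vb$ that lie outside its occurrence of $\vc$, each at its correct position, turns $\vc$ into $\vb$. This uses $2(n-\ell)$ operations in total, so $d_I(\va,\vb)\le 2n-2\ell$.

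For the lower bound, take any sequence of $D$ insertions and deletions transforming $\va$ into $\vb$. Let $s$ be the number of deletions and $t$ the number of insertions, so $D=s+t$. Since both strings have length $n$, the length bookkeeping forces $s=t$.

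Next, track the symbols of $\va$ that are never deleted. We may assume, without loss of generality, that no inserted symbol is later deleted; otherwise we drop both operations, which only shortens the sequence. Under this assumption, the surviving original symbols of $\va$ appear in $\vb$ in the same relative order. Insertions never reorder existing symbols, and deletions only remove symbols.

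There are at least $n-s$ survivors, and they form a common subsequence of $\va$ and $\vb$. Hence $n-s\le\ell$, which gives $D=2s\ge 2(n-\ell)$. Combining the two inequalities yields $d_I(\va,\vb)=2n-2\LCS(\va,\vb)$.

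The only delicate point is the reduction to operation sequences in which no inserted symbol is later deleted. I would handle it by induction on $D$: if an inserted symbol is later deleted, we cancel that insertion–deletion pair. The resulting sequence still maps $\va$ to $\vb$, because the intermediate positions shift consistently, and it is strictly shorter. Therefore a minimal sequence never contains such a pair.
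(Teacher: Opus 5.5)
Your argument is correct. The paper does not prove this lemma itself; it quotes it from \cite{do2021explicit}. Your two-inequality proof is the standard argument for this fact: delete and then insert around a fixed longest common subsequence for the upper bound, and observe that the surviving original symbols of $\va$ form a common subsequence for the lower bound.

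One simplification is available. The reduction to operation sequences in which no inserted symbol is later deleted is unnecessary, so the ``delicate point'' and its cancellation induction can be dropped. Each deletion removes at most one original symbol of $\va$, so there are at least $n-s$ survivors in any operation sequence. Survivors keep their values and their relative order under arbitrary insertions and deletions, including deletions of previously inserted symbols. Hence $n-s\le \LCS(\va,\vb)$, and $D=s+t=2s\ge 2n-2\LCS(\va,\vb)$ follows directly.
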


	An $[n,k]$ linear code $\cC\subseteq \F^n$ is a linear subspace of $\F^n$ of dimension $k$. The code rate is defined as the ratio $R=k/n$. 
	The \emph{insdel distance} of the code $\cC \subseteq \F^n$, denoted by $d_I(\cC)$, is the minimum insdel distance between any pair of distinct codewords. A code of distance $d_I(\cC)$ can correct up to $\lfloor (d_I(\cC) - 1)/2 \rfloor$ insdel errors.
	
	The fundamental limit for insdel codes is the half-Singleton Bound. 
	\begin{lemma}[Half-Singleton Bound, cf. Corollary~5.2 in \cite{cheng2023efficient}]\label{lm:half_Singleton}
		Let $n\ge 2k$, and let $\cC$ be an $[n,k]$ linear code over $\F$. Then $\cC$ can correct at most $n-2k+1$ insertion-deletion errors; equivalently,
        $$d_I(\cC) \leq \max\{2(n-2k+2), 2\}.$$
	\end{lemma}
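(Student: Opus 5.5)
Since $d_I(\va,\vb)=2n-2\LCS(\va,\vb)$, it suffices to exhibit two distinct codewords $\va\neq\vb$ in $\cC$ with $\LCS(\va,\vb)\ge 2k-2$ when $k\ge 2$. That gives $d_I(\cC)\le 2n-2(2k-2)=2(n-2k+2)$, and the error-correction statement follows from $\lfloor (d_I(\cC)-1)/2\rfloor\le n-2k+1$. The case $k=1$ is handled separately: any two distinct codewords have $d_I\le 2n$, and $2n = 2(n-2k+2)$ when $k=1$, so the bound holds there too. The term $\max\{\cdot,2\}$ only matters in degenerate small cases, where $d_I\ge 2$ holds trivially for distinct strings of equal length.

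The construction uses a shifted alignment. Write a generator matrix $G$ with columns $g_1,\dots,g_n\in\F^k$. A codeword is $\va=\vx G$, with $a_i=\langle \vx,g_i\rangle$. I want $\va=\vx G$ and $\vb=\vy G$ with $(\vx,\vy)\neq(\Zero,\Zero)$ and $\vx\neq\vy$, together with an alignment pairing $a_i$ with $b_{i+1}$ for $i$ in the first block of indices, and $a_{i+1}$ with $b_i$ for $i$ in a second block. Both pairings are order preserving, and their union is a common subsequence provided the blocks are arranged so that the indices increase in both strings. Concretely, take the common subsequence $(a_1,\dots,a_{k-1})=(b_2,\dots,b_k)$, followed by $(a_{m+2},\dots,a_{m+k})=(b_{m+1},\dots,b_{m+k-1})$ for a suitable $m\ge k$ (for instance $m=n-k$, which needs $n\ge 2k$). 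This is $2k-2$ linear homogeneous equations $\langle\vx,g_i\rangle-\langle\vy,g_{i+1}\rangle=0$, respectively $\langle\vx,g_{i+1}\rangle-\langle\vy,g_i\rangle=0$, in $2k$ unknowns $(\vx,\vy)$. The solution space therefore has dimension at least $2$.

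It remains to find a solution with $\vx\neq\vy$, which forces $\va\neq\vb$ because $G$ has full row rank. The solutions with $\vx=\vy$ form the subspace $\{(\vx,\vx):\vx\in V\}$, where $V=\{\vx:\langle\vx,g_i-g_{i+1}\rangle=0 \text{ for the relevant } i\}$. This subspace can also have dimension $2$ (for example if $\One\in\cC$), so a pure dimension count may fail. This is the main obstacle. I would first try a more generous count: use both pairings of each type but with only one position of slack, giving $2k-2$ equations. If the diagonal solutions fill the whole solution space, then every solution has $\vx=\vy$, and I would derive a contradiction by perturbing the alignment, for example shifting the second block by one index to get a different system whose diagonal part is smaller.

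If that fails, the fallback is a cleaner rank argument. The difference $\vx-\vy$ of a diagonal solution is zero, so I would instead parametrize by $\vz=\vx-\vy$ and $\vy$. In these variables the equations read $\langle \vz,g_i\rangle+\langle\vy,g_i-g_{i+1}\rangle=0$, and similarly for the second block. Requiring $\vz\neq \Zero$ amounts to showing that the map $\vy\mapsto(\langle\vy,g_i-g_{i+1}\rangle)_i$ does not cover all the constraints. Since the cited source (Corollary~5.2 of the half-Singleton paper) proves exactly this via such a linear-algebra count, I would follow that case analysis, handling separately the case in which $\One$ lies in $\cC$.
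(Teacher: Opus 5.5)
There is a genuine gap, and it is exactly the step you flag as the main obstacle. The surrounding steps are fine: reducing to distinct codewords with $\LCS(\va,\vb)\geq 2k-2$, the case $k=1$, and the conversion to the number of correctable errors. (The paper itself does not prove this lemma; it imports it from the cited work.) The whole content of the bound is producing a solution $(\vx,\vy)$ with $\vx\neq\vy$. For that step you only list strategies: perturbing the alignment, reparametrizing by $\vz=\vx-\vy$, and ``following the case analysis of the source.'' None of them is carried out. The fallback claim, that $\vy\mapsto(\langle\vy,g_i-g_{i+1}\rangle)_i$ ``does not cover all the constraints,'' just restates what has to be proved.

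Moreover, your two-block alignment genuinely fails. Take $n=6$, $k=3$, and let $\cC$ be spanned by $(1,1,1,0,0,0)$, $(0,0,0,1,1,1)$, $(1,0,0,1,0,0)$. A codeword is then $(\alpha_1+\alpha_3,\alpha_1,\alpha_1,\alpha_2+\alpha_3,\alpha_2,\alpha_2)$. With $m=n-k=3$, your equations are $a_1=b_2$, $a_2=b_3$, $a_5=b_4$, $a_6=b_5$. The first two force $\alpha_3=0$, the last two force $\beta_3=0$, and then $\va=\vb$. When $n=2k$, the requirement $k\leq m\leq n-k$ forces $m=k$, so there is no room to shift the second block either. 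For $k=2$, the code $\{(s,s,t,t)\}$ already breaks the construction.

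The missing idea is to choose the alignment so that the degenerate, all-diagonal case itself produces a good pair. Use a single shifted block: impose $a_i=b_{i+1}$ for $i=1,\dots,2k-2$, which needs only $n\geq 2k-1$.
\begin{itemize}
\item The solution space $W$ has dimension at least $2$.
\item Its diagonal part is $\{(\vx,\vx):\vx\in V\}$, where $V$ is the space of $\vx$ for which $\vx G$ is constant on positions $1,\dots,2k-1$.
\item If $\dim V\leq 1$, then $W$ contains a solution with $\vx\neq\vy$. This gives distinct codewords with common subsequence $(a_1,\dots,a_{2k-2})=(b_2,\dots,b_{2k-1})$.
\item If $\dim V\geq 2$, take two linearly independent codewords constant on $\{1,\dots,2k-1\}$. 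Some nontrivial combination $\vc\neq\Zero$ vanishes on all $2k-1$ of these positions, so $\LCS(\vc,\Zero)\geq 2k-1$.
\end{itemize}

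Your two-block system does not admit this rescue. There the diagonal codewords are constant on $\{1,\dots,k\}$ and on $\{m+1,\dots,m+k\}$, with possibly different constants. So when $\dim V=2$, a combination can in general kill only one block. That leaves $k$ zeros, which is fewer than $2k-2$ once $k\geq 3$. This is exactly what happens in the example above, where $V$ consists of the codewords $(\alpha_1,\alpha_1,\alpha_1,\alpha_2,\alpha_2,\alpha_2)$.
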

	
	If the code $\cC$ does not contain the all-one vector $\One$, this bound can be tightened to the strict half-Singleton bound:
	\begin{lemma}[Strict Half-Singleton Bound, \cite{ji2023strict}]\label{lm:strict_half-Singleton}
		Let $n\geq 2k+1$. Let $\cC$ be an $[n,k]$ linear code over $\F$. If $\One=(1, 1,\ldots,1)\notin \cC$, then it is capable of correcting at most $n-2k$ insdel errors, that is,
		$$d_I(\cC) \leq \max\{2(n-2k+1), 2\}.$$
	\end{lemma}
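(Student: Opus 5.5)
To prove the strict half-Singleton bound, I will exhibit two distinct codewords $\va,\vb\in\cC$ with $\LCS(\va,\vb)\ge 2k-1$. Then the identity $d_I(\va,\vb)=2n-2\LCS(\va,\vb)$ gives $d_I(\cC)\le 2n-2(2k-1)=2(n-2k+1)$. The case $k=1$ needs separate handling, because there the bound is $\max\{\cdot,2\}$. If $k=1$ and $\cC=\langle \vv\rangle$, then since $\One\notin\cC$, the vector $\vv$ is not constant. Hence $\vv$ has two coordinates with $v_i\neq v_j$. I can then scale so that $\va=\vv$ and $\vb=c\vv$ share a symbol, in the sense that $a_i=b_j$ with $c=v_i/v_j$ (when $v_j\ne0$). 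This gives $\LCS\ge 1=2k-1$ and $d_I\le 2n-2$, which matches $2(n-2k+1)$. If $v_j=0$, I swap the roles of $i$ and $j$, or use $\Zero$ against $\vv$ when $\vv$ has a zero coordinate.

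For general $k$, I follow the shifted-subsequence argument of Cheng et al. I look for $\va\ne\vb$ in $\cC$ with $a_{i}=b_{i+1}$ for $i=1,\dots,2k-2$ or so. More precisely, I want an alignment of length $2k-1$ between positions $I=\{1,\dots,2k-1\}$ of $\va$ and positions $J=\{2,\dots,2k\}$ of $\vb$. Writing $\va=\vx G$ and $\vb=\vy G$, the conditions $a_{i}=b_{i+1}$ for $i=1,\dots,2k-1$ form $2k-1$ homogeneous linear equations in the $2k$ unknowns $(\vx,\vy)$. So there is always a nonzero solution. The issue is to guarantee $\va\neq\vb$. If every solution had $\va=\vb$, then $\va$ would satisfy $a_i=a_{i+1}$ for $i\le 2k-1$, so its first $2k$ coordinates would be constant.

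The key step is to rule out, or exploit, this degenerate case using $\One\notin\cC$. The solution space has dimension at least $1$, and I analyze the degenerate case. Let $W$ be the space of solutions $(\vx,\vy)$. If all solutions have $\vx=\vy$, then $W$ embeds into the codewords constant on $[2k]$. I then vary the alignment, using instead $a_i=b_{i+1}$ for $i$ in a chosen set of $2k-1$ indices among the first $2k$ or so, or shifting the window along $[n]$, which is possible since $n\ge 2k+1$. Each alternative system either gives a nondegenerate pair or forces an additional constancy pattern. Combining the patterns across overlapping windows forces a nonzero codeword that is constant on all of $[n]$, hence a multiple of $\One$, contradicting $\One\notin\cC$. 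I also have to handle the case where the degenerate solution is $\vx=\vy=\Zero$, which is excluded because $W\ne0$ forces $\vx\neq\Zero$.

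I expect this degenerate-case bookkeeping to be the main obstacle. It requires choosing the family of alignments (windows of length $2k$ slid over positions $1,\dots,n$) so that the forced constancy propagates across the whole support. One subtlety is that the constant codewords produced by different windows may differ. To deal with that, I would first try a dimension count. The codewords constant on a window of $2k$ coordinates form a subspace, and a codeword that is constant on two overlapping windows is constant on their union. If all windows are degenerate, I pick one codeword $\va$ from the first window and argue, using the $1$-dimensionality forced by the $2k-1$ equations in $2k$ unknowns, that this same $\va$ serves every window. That would give $\va\in\langle\One\rangle$.
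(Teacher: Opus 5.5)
The paper does not prove this lemma; it is imported from \cite{ji2023strict}. Your argument therefore has to stand on its own.

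Your first reduction is sound. For a window $T_s=\{s,\dots,s+2k-1\}$, a solution of the $2k-1$ equations $\vx G_i=\vy G_{i+1}$ ($i=s,\dots,s+2k-2$) with $\vx\neq\vy$ gives distinct codewords with $\LCS\ge 2k-1$. If no such solution exists, you obtain a nonzero codeword that is constant on $T_s$. Your $k=1$ case is correct but unnecessary. Since $n\ge 2k+1$, the maximum in the bound is always $2(n-2k+1)\ge 4$, and the window argument covers $k=1$ as well.

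The gap is at the step you flag yourself: passing from ``each window carries some nonzero codeword constant on it'' to ``a single codeword is constant on all of $[n]$.'' The justification you propose does not work. First, $2k-1$ homogeneous equations in $2k$ unknowns give a solution space of dimension \emph{at least} one, not exactly one. Second, even if each window's space were one-dimensional, nothing in the equation count says that the line attached to $T_s$ is the line attached to $T_{s+1}$. So the codeword you pick from the first window has no reason to serve the others.

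The missing ingredient is the trivial identity alignment: two distinct codewords that coincide on some $2k-1$ common coordinates already have $\LCS\ge 2k-1$. Applied three times, it closes the argument.
\begin{enumerate}[(a)]
\item In the degenerate case, the constant value on $T_s$ is nonzero. Otherwise that codeword and $\Zero$ agree on $2k$ coordinates. So you may normalize to get $\va^{(s)}\in\cC$ equal to $1$ on $T_s$.
\item The codewords $\va^{(s)}$ and $\va^{(s+1)}$ both equal $1$ on $T_s\cap T_{s+1}$, which has exactly $2k-1$ coordinates. So either they are distinct, and you are done, or $\va^{(s)}=\va^{(s+1)}$.
\item If the latter holds for every $s\le n-2k$, the common codeword equals $1$ on $\bigcup_s T_s=[n]$. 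Then $\One\in\cC$, a contradiction.
\end{enumerate}
With this observation inserted, your sliding-window proof is complete. Without it, the ``combining patterns across overlapping windows'' step is unsupported.
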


	\begin{definition}[Half-Singleton Optimal Codes]
		Let $n\geq 2k$. We say a linear code $\cC$ is \emph{half-Singleton optimal (HS-optimal for abbreviation)} if it achieves the half-Singleton bound, i.e., $\cC$ can correct $n-2k+1$ insdel errors. A linear code $\cC$ is \emph{strict half-Singleton optimal (strict HS-optimal for abbreviation)} if $\One \notin \cC$ and $\cC$ can correct $n-2k$ insdel errors.
	\end{definition}

    These two notions are directly related: shortening an HS-optimal code $\cC$ yields a strictly HS-optimal code.
    
	\begin{lemma}\label{lm:shorten}
		Let $n\geq 2k$. Let $\cC$ be an $[n,k]$ code over $\F$ with $\One\in\cC$. By row transformations we can always suppose that $\cC$ is generated by a matrix of the form
		$$G=\begin{pmatrix}
			\One_{n-1} & 1 \\
			G' & \Zero_{k-1}
		\end{pmatrix}.$$
		Then the $[n-1,k-1]$ code $\cC'$ generated by the matrix $G'$ satisfies $d_I(\cC')\geq d_I(\cC)$. In particular, if $\cC$ is HS-optimal, $\cC'$ is strict HS-optimal. (See \cref{appendix:shorten} for proof).
	\end{lemma}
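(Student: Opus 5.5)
The plan is to embed $\cC'$ into $\cC$ by appending a zero coordinate, and to compare longest common subsequences through the LCS formula for $d_I$ (Lemma~1 of \cite{do2021explicit}). Then I will check the two parts of strict HS-optimality: the error-correction radius, and $\One\notin\cC'$.

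\emph{Step 1 (the embedding).} I first note that $\cC'$ has dimension $k-1$. Since $G$ has rank $k$ and its last column is $(1,0,\dots,0)^T$, the rows of $G'$ must be linearly independent. Now take $\vc'=\vx G'\in\cC'$ with $\vx\in\F^{k-1}$. Then
\[
(0,\vx)\,G=(\vc',0)\in\cC,
\]
so $\vc'\mapsto(\vc',0)$ is an injective linear map $\cC'\to\cC$.

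\emph{Step 2 (distance comparison).} Let $\va',\vb'\in\cC'$ be distinct, and set $\va=(\va',0)$ and $\vb=(\vb',0)$. These are distinct codewords of $\cC$. Any common subsequence of $\va'$ and $\vb'$, followed by the common final symbol $0$, is a common subsequence of $\va$ and $\vb$. Hence
\[
\LCS(\va,\vb)\ge \LCS(\va',\vb')+1.
\]
Applying the LCS formula in lengths $n$ and $n-1$ gives
\[
d_I(\va,\vb)=2n-2\LCS(\va,\vb)\le 2(n-1)-2\LCS(\va',\vb')=d_I(\va',\vb').
\]
Taking the minimum over all pairs yields $d_I(\cC')\ge d_I(\cC)$.

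\emph{Step 3 (the HS-optimal case).} Suppose $\cC$ is HS-optimal, so it corrects $n-2k+1$ insdel errors. By Step 2, $\cC'$ also corrects $n-2k+1=(n-1)-2(k-1)$ errors, which is the required radius for an $[n-1,k-1]$ code.

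It remains to show $\One_{n-1}\notin\cC'$. Suppose instead that $\One_{n-1}\in\cC'$. Then $(\One_{n-1},0)\in\cC$ by Step 1. Subtracting it from the first row of $G$ shows that the weight-one vector $\mathbf{e}=(0,\dots,0,1)$ lies in $\cC$. Now $\mathbf{e}$ and $\Zero$ share the common subsequence $0^{n-1}$, so $d_I(\mathbf{e},\Zero)\le 2$. On the other hand, HS-optimality requires $n-2k+1\ge1$ errors to be corrected. This forces $d_I(\cC)\ge 2(n-2k+1)+1\ge 3$, a contradiction.

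This last check, including the degenerate regime $n=2k$ (where $n-2k+1=1$ still gives $d_I(\cC)\ge3$), is the only point needing care; everything else is routine.
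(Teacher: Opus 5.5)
Your proposal is correct and follows essentially the same route as the paper's proof in \cref{appendix:shorten}: you pad codewords of $\cC'$ with a trailing zero to get $d_I(\cC')\ge d_I(\cC)$, and you rule out $\One_{n-1}\in\cC'$ by producing $(0,\ldots,0,1)\in\cC$, which is at insdel distance $2$ from $\Zero$. The differences are only cosmetic: you prove just the inequality $\LCS(\va,\vb)\ge\LCS(\va',\vb')+1$ that is needed, where the paper asserts equality of the two distances, and you contradict $d_I(\cC)\ge 3$ rather than $d_I(\cC)=2(n-2k+2)\ge 4$.
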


    In this paper, we show that there exist AG codes over constant-sized fields that approach this bound. To capture this asymptotic behavior, we introduce the following relaxed notion of optimality.
    
	\begin{definition}[Relaxation]\label{def:relaxation}
		Let $\eps>0$. A linear $[n,k]$ code $\cC$ is \emph{$\eps$-relaxed HS-optimal} if it can correct at least
		$$(1-\eps)n-2k+1$$
		insdel errors when $\One\in\cC$, and at least
		$$(1-\eps)n-2k$$
		insdel errors when $\One\notin\cC$.
	\end{definition}

	To analyze the insdel capability of linear codes, we utilize the increasing subsequence framework.
	\begin{definition}[Increasing Subsequence]
		An \emph{increasing subsequence} is a tuple $I=(I_1,\dots,I_\ell)\in[n]^\ell$ such that $I_1<I_2<\cdots<I_\ell$, where $\ell$ is called the \emph{length} of $I$.
	\end{definition}
	
	Note that $I$ can be regarded as a subsequence of the sequence $(1,2,\dots, n)$. For $I=(I_1,\dots,I_\ell)\in [n]^\ell$, we define the set of its indices $Set(I)=\{I_1,\dots,I_\ell\}$. 
	
	For $I,J\in [n]^\ell$, let $I \cap J$ be an increasing subsequence made up of the corresponding equal components of $I$ and $J$, i.e., $I\cap J=(e_1,e_2, \cdots,e_t)$,  where $t \leq \ell$ and $e_i = I_{r_i} = J_{r_i}$ for all $1 \leq i \leq t$. In this case, we say $I$ and $J$ agree on $t$ coordinates.
	
	We now define the specific matrices that certify optimality.
	\begin{definition}\label{def:verification_matrix}
		Let $\cC$ be an $[n,k]$ code generated by matrix $G=(G_1,G_2,\cdots, G_n)$, where $G_i$ is its $i$-th column. For an integer $\ell$ and increasing subsequences $I, J$ of length $\ell$, we define the \emph{verification matrix} $M_{k,\ell,I,J}$ as follows:
		\begin{itemize}
			\item If $\One$ is not a codeword, we define $M_{k,\ell,I,J}$ as the $2k\times \ell$ matrix
			$$M_{k,\ell,I,J}=\begin{pmatrix}
				G_{I_1} & G_{I_2} & \cdots & G_{I_{\ell}} \\
				G_{J_1} & G_{J_2} & \cdots & G_{J_{\ell}}
			\end{pmatrix}.$$
			\item If $\One$ is a codeword, without loss of generality, assume the first row of $G$ is the all-one vector. Let $G'_i$ denote the column obtained from $G_i$ by removing the first entry. We define $M_{k,\ell,I,J}$ as the $(2k-1)\times \ell$ matrix
			$$M_{k,\ell,I,J}=\begin{pmatrix}
				1 & 1 & \cdots & 1\\
				G'_{I_1} & G'_{I_2} & \cdots & G'_{I_\ell} \\
				G'_{J_1} & G'_{J_2} & \cdots & G'_{J_\ell}
			\end{pmatrix}$$
		\end{itemize} 
	\end{definition}
	
	The following lemma generalizes results from \cite{ji2023strict}, providing a sufficient condition for optimality based on the rank of these matrices.
	\begin{lemma}\label{lm:condition}
		Let $\cC$ be an $[n,k]$ linear code, and let $\ell\leq n$ be an integer such that $\ell\geq 2k-1$ when $\One\in\cC$ and $\ell\geq 2k$ when $\One\notin\cC$. Assume that for every pair of increasing subsequences $I,J\in[n]^\ell$ satisfying $\rank(G_{I\cap J})<k$, the corresponding verification matrix $M_{k,\ell,I,J}$ has full row rank. Then
		$$d_I(\cC)\geq 2(n-\ell+1).$$
		
		In particular, we have the following conditions for optimality:
		\begin{itemize}
			\item If $\One \notin \mathcal{C}$, $\ell = 2k$, and $M_{k,2k,I,J}$ has full row rank for all relevant pairs, then $\mathcal{C}$ is strict HS-optimal.
			\item If $\One \in \mathcal{C}$, $\ell = 2k - 1$, and $M_{k,2k-1,I,J}$ has full row rank for all relevant pairs, then $\mathcal{C}$ is HS-optimal.
			\item Let $\eps>0$, and set $\ell=2k-1+\lfloor\eps n\rfloor$ if $\One\in\cC$, and $\ell=2k+\lfloor\eps n\rfloor$ if $\One\notin\cC$. If $\ell\leq n$ and $M_{k,\ell,I,J}$ has full row rank for all relevant pairs, then $\cC$ is $\eps$-relaxed HS-optimal.
		\end{itemize}
	\end{lemma}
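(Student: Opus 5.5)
We argue by contraposition. Suppose $d_I(\cC)<2(n-\ell+1)$. Then there are distinct codewords $\va=\vx G$ and $\vb=\vy G$ with $\LCS(\va,\vb)\ge \ell$, by the lemma $d_I=2n-2\LCS$ from \cite{do2021explicit}. A common subsequence of length exactly $\ell$ is then given by increasing subsequences $I,J\in[n]^\ell$ with $\va_{I_r}=\vb_{J_r}$ for all $r$, that is,
$$\vx G_{I_r}-\vy G_{J_r}=0,\qquad r=1,\dots,\ell .$$
We will produce such a pair $(I,J)$ with $\rank(G_{I\cap J})<k$ for which $M_{k,\ell,I,J}$ has a nonzero vector in its left kernel. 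This contradicts the hypothesis.

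\textbf{Case $\One\notin\cC$.} Here $(\vx,-\vy)\in\F^{2k}$ lies in the left kernel of $M_{k,\ell,I,J}$, and it is nonzero because $\va\neq\vb$ forces $\vx\ne\vy$. It remains to check $\rank(G_{I\cap J})<k$. At each agreeing coordinate $e=I_r=J_r$ we get $(\vx-\vy)G_e=0$. Since $\vx-\vy\neq 0$, the columns of $G_{I\cap J}$ lie in a proper subspace, so their rank is less than $k$.

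\textbf{Case $\One\in\cC$.} Take the first row of $G$ to be $\One$ and write $\vx=(x_1,\vx')$, $\vy=(y_1,\vy')$. The relations become
$$(x_1-y_1)\cdot 1+\vx'G'_{I_r}-\vy'G'_{J_r}=0 .$$
So $(x_1-y_1,\vx',-\vy')\in\F^{2k-1}$ is in the left kernel of $M_{k,\ell,I,J}$. We must show this vector is nonzero. If it were zero, then $\vx'=\vy'=0$ and $x_1=y_1$, which gives $\vx=\vy$, a contradiction. The rank condition again follows from $(\vx-\vy)G_e=0$ on $I\cap J$ with $\vx\ne\vy$. Note that the $2k\times\ell$ matrix is not the relevant one in this case, since $\One$ would give a trivial dependency; this is why $M$ is defined with the merged row.

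The three bulleted consequences follow by substituting values of $\ell$. With $\ell=2k$ we get $d_I\ge 2(n-2k+1)$, so the code corrects $n-2k$ errors. With $\ell=2k-1$ we get $d_I\ge 2(n-2k+2)$, so it corrects $n-2k+1$ errors. With $\ell=2k+\lfloor\eps n\rfloor$ we get $d_I\ge 2(n-2k-\lfloor\eps n\rfloor+1)$, and $\lfloor (d_I-1)/2\rfloor\ge n-2k-\lfloor \eps n\rfloor\ge (1-\eps)n-2k$. The case $\One\in\cC$ is handled the same way. The hypotheses $\ell\ge 2k$ (respectively $2k-1$) and $\ell\le n$ are only needed so that full row rank is possible and the pairs $I,J$ exist.

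The only subtle point is that the kernel vector is nonzero in the $\One\in\cC$ case, together with correctly accounting for the hypothesis $\rank(G_{I\cap J})<k$. The rest is bookkeeping.
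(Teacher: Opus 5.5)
Your proposal is correct and follows the paper's proof essentially line for line: distinct codewords $\vx G,\vy G$ with a common subsequence indexed by $I,J\in[n]^\ell$ give $\rank(G_{I\cap J})<k$ via $(\vx-\vy)G_e=0$, and they give a nonzero left-kernel vector $(\vx,-\vy)$ or $(x_1-y_1,\vx',-\vy')$ of $M_{k,\ell,I,J}$; the three bullets then follow by substituting $\ell$. Your explicit check that $(x_1-y_1,\vx',-\vy')\neq\Zero$ fills in a small detail the paper leaves implicit.
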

	\begin{proof}
		Suppose, toward a contradiction, that two distinct codewords $\va=\vx G$ and $\vb=\vy G$ have a common subsequence of length at least $\ell$. Then there exist increasing subsequences $I,J\in[n]^\ell$ such that $\va_I=\vb_J$. For every index $e\in I\cap J$, we have $(\vx-\vy)G_e=0$. Since $\vx\neq\vy$, it follows that $\rank(G_{I\cap J})<k$.

		First suppose that $\One\notin\cC$. The nonzero row vector $(\vx,-\vy)$ lies in the left kernel of $M_{k,\ell,I,J}$ because
		$$(\vx,-\vy)M_{k,\ell,I,J}=\va_I-\vb_J=\Zero.$$
		Thus $M_{k,\ell,I,J}$ does not have full row rank, contradicting the hypothesis.

		Now suppose that $\One\in\cC$, and write $\vx=(x_1,\vx')$ and $\vy=(y_1,\vy')$ relative to a generator matrix whose first row is $\One$. Then the nonzero row vector $(x_1-y_1,\vx',-\vy')$ lies in the left kernel of the corresponding verification matrix, since
		$$(x_1-y_1,\vx',-\vy')M_{k,\ell,I,J}=\va_I-\vb_J=\Zero.$$
		This again contradicts the full-row-rank hypothesis. Therefore every pair of distinct codewords has longest common subsequence of length at most $\ell-1$, and hence $d_I(\cC)\geq 2(n-\ell+1)$.

		The first two particular cases follow by setting $\ell=2k$ and $\ell=2k-1$, respectively. For the final case, the distance bound shows that the code can correct at least $n-\ell$ insdel errors. Substituting the stated value of $\ell$ in each case gives the two bounds in \cref{def:relaxation}.
	\end{proof}
	
	Sometimes we do not consider $G$ as a fixed matrix, but rather as a matrix of variables, or more specifically, a collection of $n$ column vectors treated as indeterminants. In such cases, we use the notation $M_{k,\ell,I,J}[G]$ to explicitly indicate the dependency of the verification matrix on $G$. Furthermore, for a matrix $M[G]$ dependent on $G$, we denote by $M|_{G_i=\alpha_i \text{ for } i\in S}$ the specialization of $M$ where the $i$-th column $G_i$ is set to the value $\alpha_i$ for all indices $i \in S$, while all other columns remain as variables.

	\subsection{Algebraic Preliminaries}
	
	Let $\F$ be an algebraically closed field, and let $\mathbb{A}^n$ denote the $n$-dimensional affine space over $\F$. For an ideal $I\subseteq\F[x_1,\dots,x_n]$, let $V(I)$ be its common zero set in $\mathbb{A}^n$. An \emph{affine variety} is a subset $X\subseteq\mathbb{A}^n$ of the form $X=V(I)$ for some ideal $I$. Conversely, the polynomials that vanish on $X$ form its \emph{vanishing ideal}
	$$I(X)=\{f\in\F[x_1,\dots,x_n]:f(x)=0\text{ for all }x\in X\}.$$
	
	An ideal $I$ is said to be \emph{radical} if $f^r \in I$ for some integer $r$ implies $f \in I$. By Hilbert's Nullstellensatz, there is a one-to-one correspondence between algebraic varieties and radical ideals; specifically, for any variety $X$, $I(X)$ is a radical ideal and $V(I(X))=X$.
	
	The \emph{Zariski topology} on a variety $X\subseteq\mathbb{A}^n$ is the topology whose closed sets are the subsets $X\cap V(J)$, where $J\subseteq\F[x_1,\dots,x_n]$ is an ideal. A nonempty variety $X$ is called \emph{irreducible} if it cannot be written as the union of two proper Zariski-closed subsets. Equivalently, $X$ is irreducible if and only if $I(X)$ is a prime ideal.

	The \emph{coordinate ring} of $X$ is $\F[X]=\F[x_1,\dots,x_n]/I(X)$, whose elements can be viewed as polynomial functions on $X$. If $X$ is irreducible, then $\F[X]$ is an integral domain; its field of fractions, denoted by $\F(X)$, is the \emph{function field} of $X$. For $f\in\F[X]$, the set
    $$U_f=\{x\in X : f(x) \neq 0\}$$
    is a basic Zariski open set, called a \emph{principal open set}, whose complement is
    $$X\setminus U_f = \{x\in X: f(x)=0\}.$$

    We say that $f$ is \emph{generically non-vanishing} on $X$ if $U_f$ is Zariski dense in $X$, that is, if the closure of $U_f$ is $X$. Thus, $f$ may vanish on a proper closed subset of $X$, but not on a generic point of $X$.

	If $X$ is irreducible, then every nonempty Zariski open subset of $X$ is dense. Consequently, for $f\in\F[X]$, the following statements are equivalent:
	\begin{enumerate}
		\item $f$ is generically non-vanishing on $X$.
		\item There exists a point $x\in X$ such that $f(x)\neq 0$.
		\item $f$ is not the zero element in $\F[X]$.
	\end{enumerate}
	When working over a finite field $\F_q$, we view varieties as being defined over the algebraic closure $\overline{\F}_q$. An $\F_q$-rational point of $X$ is a point on $X$ whose coordinates belong to $\F_q$. The set of all such points is denoted by $X(\F_q)$. 
	
	We conclude with some lemmas summarizing the geometric properties required for our probabilistic analysis.

	\begin{lemma}[\cite{weil1946foundations}]
		If $X_1 \subseteq \F^{k_1}$ and $X_2 \subseteq \F^{k_2}$ are two varieties, then $X_1 \times X_2 \subseteq \F^{k_1+k_2}$ is also a variety. Furthermore, if $X_1$ and $X_2$ are irreducible, then so is $X_1 \times X_2$.
	\end{lemma}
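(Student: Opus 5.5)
The product $X_1\times X_2$ is the common zero set of the polynomials of $I(X_1)$ and of $I(X_2)$, viewed in disjoint sets of variables; irreducibility will follow from a fiber argument using irreducibility of each factor. Throughout we work over the algebraically closed field $\F$ fixed in the preliminaries.

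\emph{Closedness.} Write $X_1=V(I_1)$ with $I_1\subseteq\F[x_1,\dots,x_{k_1}]$ and $X_2=V(I_2)$ with $I_2\subseteq\F[y_1,\dots,y_{k_2}]$. Regard both ideals as sitting inside $\F[x_1,\dots,x_{k_1},y_1,\dots,y_{k_2}]$, and let $J$ be the ideal generated by $I_1\cup I_2$. A point $(a,b)\in\F^{k_1+k_2}$ is a common zero of $J$ exactly when every $f\in I_1$ vanishes at $a$ and every $g\in I_2$ vanishes at $b$. That is, $V(J)=X_1\times X_2$, so the product is a variety.

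\emph{Irreducibility.} Suppose $X_1\times X_2=Z_1\cup Z_2$ with $Z_1,Z_2$ closed. Fix $a\in X_1$. The slice $\{a\}\times X_2$ is isomorphic to $X_2$ via $y\mapsto(a,y)$, and this map is polynomial, hence continuous. So $\{a\}\times X_2$ is irreducible, and it is covered by the two closed sets $(\{a\}\times X_2)\cap Z_i$. Therefore it lies entirely in $Z_1$ or entirely in $Z_2$.

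Now set $A_i=\{a\in X_1:\{a\}\times X_2\subseteq Z_i\}$. The previous step gives $X_1=A_1\cup A_2$.

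The key step, and the only real obstacle, is showing that each $A_i$ is closed in $X_1$. For each $b\in X_2$, the set $\{a\in X_1:(a,b)\in Z_i\}$ is closed: it is the preimage of $Z_i$ under the polynomial map $a\mapsto(a,b)$. Concretely, it is cut out by the polynomials $h(x,b)$ for $h\in I(Z_i)$. Then $A_i$ is the intersection of these sets over all $b\in X_2$, and an intersection of closed sets is closed.

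Since $X_1$ is irreducible and $X_1=A_1\cup A_2$, we get $X_1=A_1$, say. Then every slice $\{a\}\times X_2$ lies in $Z_1$, so $X_1\times X_2=Z_1$. Hence $X_1\times X_2$ is irreducible.

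An alternative would be to show directly that $J$ is prime, using that $\F[X_1]\otimes_\F\F[X_2]$ is a domain over an algebraically closed field. This needs more commutative algebra, so I would use the topological fiber argument above.
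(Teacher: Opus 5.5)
Your proof is correct. The paper gives no proof of this lemma (it is simply quoted from Weil), so there is no in-paper argument to compare against. What you wrote is the standard fiber argument, essentially the hinted solution of Hartshorne, Exercise~I.3.15.

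Each step holds up:
\begin{itemize}
\item $V(J)=X_1\times X_2$ gives closedness.
\item Each slice $\{a\}\times X_2$ is irreducible as a continuous image of $X_2$. Continuity and surjectivity of $y\mapsto(a,y)$ suffice; you do not need an isomorphism.
\item $A_i=\bigcap_{b\in X_2}\iota_b^{-1}(Z_i)$, with $\iota_b(a)=(a,b)$, is closed because polynomial maps are Zariski-continuous.
\end{itemize}
You also rightly never assume that the Zariski topology on $\F^{k_1+k_2}$ restricts to the product of the Zariski topologies on the factors; it is strictly finer. Nonemptiness of $X_1\times X_2$, which the paper's definition of irreducibility requires, is immediate.

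Compared with the algebraic route you mention, your argument is more elementary and does not use that $\F$ is algebraically closed. The tensor-product route does use algebraic closedness, but it proves more: $J$ itself is prime. Hence $I(X_1\times X_2)=J$ and $\F[X_1\times X_2]\cong\F[X_1]\otimes_\F\F[X_2]$. That stronger statement is not needed for the lemma as written.
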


	To locally parameterize these varieties, we utilize formal power series. A formal power series ring over a field $\mathbb{F}$ in variables $z_1, \dots, z_d$, denoted by $\mathbb{F}[[z_1, \dots, z_d]]$, is a generalization of a polynomial ring that allows for infinite sums of monomials.
	\begin{lemma}[Lemma 6 in \cite{brakensiek2024generalized}]\label{lm:parameterize_variety}
		Let $\mathbb{F}$ be an algebraically closed field and $X \subseteq \F^k$ be an irreducible variety of dimension $d$. There exist functions $f_1, \dots, f_k$ (depending on $X$) in the formal power series ring $\mathbb{F}[[z_1, \dots, z_d]]$ such that for any polynomial $g \in \F[x_1,\ldots,x_k]$ over $\F^k$ it is generically non-vanishing over $X$ if and only if $g(f_1,\ldots,f_k) \in \F[[z_1,\ldots,z_d]]$ is non-zero.
		
		Consequently, for the product variety $X^n \subseteq \F^{kn}$, a multivariate polynomial $g$ over $\mathbb{F}^{kn}$ is generically non-vanishing over $X^n$ if and only if the substituted series $g(f_1(\vz_1),\ldots,f_k(\vz_1),\ldots,f_1(\vz_n),\ldots,f_k(\vz_n))$ is non-zero, where $z_i = (z_{i,1}, \dots, z_{i,d})$ are distinct $d$-tuples of formal variables for each $i \in [n]$.
	\end{lemma}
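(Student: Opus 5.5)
The plan is to obtain the parameterization from the local structure of $X$ at a smooth point. Since $X$ is irreducible, the smooth locus is a nonempty Zariski open subset, so I fix a smooth point $p\in X$. The completed local ring $\widehat{\mathcal{O}}_{X,p}$ is then a regular complete local ring of dimension $d$ containing its residue field $\F$. By Cohen's structure theorem, it is isomorphic to $\F[[z_1,\dots,z_d]]$. I define $f_i\in\F[[z_1,\dots,z_d]]$ as the image of the coordinate function $x_i$ under the composite map
$$\F[X]\to\mathcal{O}_{X,p}\to\widehat{\mathcal{O}}_{X,p}\cong\F[[z_1,\dots,z_d]].$$
For any $g\in\F[x_1,\dots,x_k]$, the series $g(f_1,\dots,f_k)$ is exactly the image of the class $\bar g\in\F[X]$ under this composite.

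The equivalence then reduces to injectivity of the composite map. The first map $\F[X]\to\mathcal{O}_{X,p}$ is injective because $\F[X]$ is a domain ($X$ is irreducible) and localization of a domain is injective. The second map $\mathcal{O}_{X,p}\to\widehat{\mathcal{O}}_{X,p}$ is injective by the Krull intersection theorem for the Noetherian local ring $\mathcal{O}_{X,p}$. Hence $g(f_1,\dots,f_k)\neq 0$ if and only if $\bar g\neq 0$ in $\F[X]$. By the equivalences listed above for irreducible $X$, this holds if and only if $g$ is generically non-vanishing on $X$.

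For the product statement, I use the earlier lemma that $X^n$ is irreducible of dimension $nd$. The point $(p,\dots,p)$ is smooth on $X^n$. The completed local ring of $X^n$ there is the completed tensor product, namely $\F[[\vz_1,\dots,\vz_n]]$, and under this identification the coordinate $x_{i,j}$ (the $j$-th coordinate of the $i$-th factor) maps to $f_j(\vz_i)$. Applying the first part to the irreducible variety $X^n$ with this parameterization gives the claim.

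The main obstacle is the identification of the completed local ring of the product. To avoid completed tensor products, I would instead argue directly. Suppose $g(f(\vz_1),\dots,f(\vz_n))=0$ and view it as a power series in $\vz_n$ with coefficients in $\F[[\vz_1,\dots,\vz_{n-1}]]$. Writing $g$ as a polynomial in the $n$-th block of variables with coefficients that are polynomials in the other blocks, every coefficient series must vanish. An induction on $n$, using the single-variety case, then shows that $g$ vanishes on $X^{n-1}\times X$, i.e.\ $g$ is zero in $\F[X^n]=\F[X]^{\otimes n}$. The converse direction is the same induction run in reverse.
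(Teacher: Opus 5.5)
Your proof is correct. The paper does not prove this lemma itself; it imports it from Lemma~6 of \cite{brakensiek2024generalized}. Your argument for the first part is the standard proof of that cited result, so there is nothing substantive to compare there. You pick a smooth point $p$ and identify $\widehat{\mathcal{O}}_{X,p}\cong\F[[z_1,\dots,z_d]]$ by Cohen's structure theorem, with $\F$ as coefficient field. Injectivity of $\F[X]\to\mathcal{O}_{X,p}\to\widehat{\mathcal{O}}_{X,p}$ then follows from the domain property and Krull's intersection theorem.

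For the product statement, your direct induction works, but you should pin down the phrase ``every coefficient series must vanish.'' Write $g=\sum_\alpha c_\alpha\vy^\alpha$ with $\vy$ the $n$-th block. The phrase must mean that, for each monomial $\vz_n^{\beta}$, its coefficient $h_\beta(f(\vz_1),\dots,f(\vz_{n-1}))$ vanishes. Here $h_\beta=\sum_\alpha a_{\alpha,\beta}c_\alpha$, and $a_{\alpha,\beta}\in\F$ is the coefficient of $\vz^\beta$ in $f(\vz)^\alpha=\prod_j f_j(\vz)^{\alpha_j}$. It cannot mean that each $c_\alpha(f(\vz_1),\dots,f(\vz_{n-1}))$ vanishes, because the series $f(\vz_n)^\alpha$ are linearly dependent over $\F$: they satisfy the equations of $X$.

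With this reading, the induction hypothesis gives $h_\beta(u)=0$ for every $u\in X^{n-1}$ and every $\beta$, that is, $g(u,f(\vz_n))=0$. The single-variety case then shows that $g(u,\cdot)$ vanishes on $X$. You should make this pointwise specialization explicit. The reverse direction runs through the same chain.

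A cleaner alternative avoids both the completed tensor product and the induction. Consider the composite $\F[X^n]=\F[X]^{\otimes n}\to\F[[\vz]]^{\otimes n}\to\F[[\vz_1,\dots,\vz_n]]$, which sends $\bar g$ to $g(f(\vz_1),\dots,f(\vz_n))$. Both maps are injective. The first is injective because tensor products of injective linear maps over a field are injective. The second is injective by comparing coefficients of $\vz_1$-monomials against an $\F$-basis of the remaining factors. This gives both directions at once.
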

	
	\begin{lemma}[Adapted from Lemma 7 in \cite{brakensiek2024ag}]\label{lm:poly_degenerate}
		Let $X \subseteq \F^{n_1}$ and $Y \subseteq \F^{n_2}$ be irreducible varieties. Let $f(\vx, \vy)$ be a polynomial in variables $\vx = (x_1, \dots, x_{n_1})$ and $\vy = (y_1, \dots, y_{n_2})$ that is generically non-vanishing over the product variety $X \times Y$. Then, there exists a polynomial $g(\vx)$ in the variables $\vx$ such that:
		\begin{enumerate}
			\item $g$ is generically non-vanishing over $X$.
			\item The degree of $g$ satisfies $\deg(g) \leq \deg_{\vx}(f)$, where $\deg_{\vx}(f)$ denotes the partial degree of $f$ with respect to the variables $\vx$.
			\item For any point $\bar{\vx} \in X$ such that $g(\bar{\vx}) \neq 0$, the specialized polynomial $f(\bar{\vx}, \vy)$ is generically non-vanishing over $Y$.
		\end{enumerate}
		(The proof is provided in \cref{appendix:poly_degenerate}).
	\end{lemma}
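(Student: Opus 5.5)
The plan is to write $f(\vx,\vy)=\sum_{\beta} c_\beta(\vx)\,\vy^\beta$ as a polynomial in $\vy$ whose coefficients $c_\beta\in\F[\vx]$ satisfy $\deg c_\beta\le\deg_{\vx}(f)$, and then to take $g$ to be a suitable $\F[\vx]$-linear combination of these coefficients. First I would reduce modulo the ideal of $Y$. The monomials $\vy^\beta$ are linearly dependent in $\F[Y]$, so I would fix a finite set $B$ of monomials whose images in $\F[Y]$ form a basis of the span of $\{\vy^\beta\}$ (for instance the standard monomials with respect to a Gröbner basis of $I(Y)$). Reducing $f$ modulo $I(Y)$, viewed as an ideal of $\F[\vx][\vy]$, gives
$$f\equiv \sum_{\gamma\in B} d_\gamma(\vx)\,\vy^\gamma \pmod{I(Y)\F[\vx,\vy]}.$$
Division by a Gröbner basis of $I(Y)$, which involves only $\vy$, acts on $\vy$-monomials and forms $\F$-linear combinations of the coefficients $c_\beta(\vx)$. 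Hence each $d_\gamma$ is an $\F$-linear combination of the $c_\beta$, and $\deg d_\gamma\le\deg_{\vx}(f)$.

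Next I would use that $I(X\times Y)=I(X)\F[\vx,\vy]+I(Y)\F[\vx,\vy]$ over an algebraically closed field. Equivalently, $\F[X\times Y]\cong\F[X]\otimes_\F\F[Y]$, and this ring is a domain by the product lemma of \cite{weil1946foundations} quoted above. Since $f$ is generically non-vanishing on $X\times Y$, its image in $\F[X]\otimes\F[Y]$ is nonzero. That image equals $\sum_\gamma \bar d_\gamma\otimes \vy^\gamma$ with the $\vy^\gamma$ linearly independent in $\F[Y]$, so some $\bar d_{\gamma_0}\neq 0$ in $\F[X]$. Set $g=d_{\gamma_0}$. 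This gives property 1 by the equivalences listed in the preliminaries, and property 2 from the degree bound above.

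For property 3, take $\bar\vx\in X$ with $g(\bar\vx)\neq0$. Specialization commutes with the reduction, since the reduction uses only the $\vy$-ideal, so $f(\bar\vx,\vy)\equiv\sum_\gamma d_\gamma(\bar\vx)\vy^\gamma$ modulo $I(Y)$. This is a combination of basis elements of $\F[Y]$ in which the $\gamma_0$ coefficient $g(\bar\vx)$ is nonzero, so $f(\bar\vx,\vy)\neq0$ in $\F[Y]$. Because $Y$ is irreducible, $f(\bar\vx,\vy)$ is therefore generically non-vanishing on $Y$.

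The step I expect to need the most care is the identification $I(X\times Y)=I(X)+I(Y)$, that is, the tensor-product description of the coordinate ring. This is exactly where algebraic closure and irreducibility, through the cited product lemma, are used; the rest is linear algebra.
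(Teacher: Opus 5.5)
Your proof is correct, but it takes a different route from the paper's proof in \cref{appendix:poly_degenerate}.

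The paper substitutes for $\vy$ the power-series parameterization $(p_1,\dots,p_{n_2})$ of $Y$ from \cref{lm:parameterize_variety}. It regards $f(\vx,p_1,\dots,p_{n_2})$ as a series in $\vz$ whose coefficients are $\F$-linear combinations of the $\vx$-coefficients $c_\beta$ of $f$. It then takes $g$ to be a $\vz$-coefficient that does not vanish on $X$, and property~3 follows by applying the same lemma to $f(\bar\vx,\vy)$.

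You instead work directly in $\F[Y]$. You reduce $f$ modulo $I(Y)\F[\vx,\vy]$ onto a finite set of monomials that is linearly independent in $\F[Y]$, and take $g$ to be a coordinate that is nonzero in $\F[X]$.

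The skeleton of the two proofs is the same. In both, $g$ is obtained by applying an $\F$-linear functional on $\F[Y]$ to the $\vy$-part of $f$, which gives the degree bound for free. What differs is the tool. Your argument needs only linear algebra in the coordinate ring, plus the fact that for irreducible varieties being nonzero in the coordinate ring is equivalent to being generically non-vanishing. The paper's argument relies on the deeper local parameterization result, which it already has available from \cref{Section3}, so its exposition stays uniform.

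One remark: the step you single out as most delicate is not actually needed. Property~1 uses only the trivial inclusion $I(X)\F[\vx,\vy]+I(Y)\F[\vx,\vy]\subseteq I(X\times Y)$. If every $d_\gamma$ lay in $I(X)$, then $f$ would lie in this sum and hence vanish on $X\times Y$. This is fortunate, because your stated justification does not quite work. Irreducibility of $X\times Y$ only makes $I(X\times Y)$ prime. It does not by itself give the isomorphism $\F[X\times Y]\cong\F[X]\otimes_{\F}\F[Y]$, which requires $I(X)+I(Y)$ to be radical. That is true over an algebraically closed field, but it is a separate fact.
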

	\section{MDS Varieties and Half-Singleton Optimality}\label{Section3}
	
	In this section, we prove that an irreducible variety is HS-optimal if and only if it is MDS. In other words, generic points on such a variety generate a code that is optimal with respect to both the Singleton bound for substitution errors and the half-Singleton bound for insdel errors.

	\subsection{Half-Singleton Optimality on Varieties}
	We begin by formally defining the MDS property for varieties.
	
	\begin{definition}[MDS variety]
		Let $\F$ be an algebraically closed field, and let $X\subseteq \F^k$ be an irreducible variety. We say that $X$ is an \emph{MDS variety} if, for every $n\geq k$, a generic $k\times n$ matrix $G$ with columns in $X$ generates an $[n,k]$ MDS code.
	\end{definition}
	\begin{remark}
		The statement ``$G$ generically has a property" means that there exists a nonempty Zariski open subset $U\subseteq X^n$ such that the matrix $G=(G_1,\ldots,G_n)$ has this property for every $(G_1,\ldots,G_n)\in U$. Since $X$ is irreducible, so is $X^n$, and every nonempty Zariski open subset of $X^n$ is dense. Equivalently, the exceptional tuples for which the property fails are contained in a proper Zariski closed subset of $X^n$. Thus, the columns of $G$ are viewed as independent generic points of $X$.
	\end{remark}
	
	Geometrically, the MDS property can be characterized by hyperplanes passing through the origin: a matrix formed by generic points of the variety has full row rank precisely when the variety is not contained in such a hyperplane.
	
	\begin{lemma}\label{lm:MDS_hyperplane}
		Let $X\subseteq \F^k$ be an irreducible variety. Then the following statements are equivalent:
		\begin{enumerate}[(a)]
			\item \label{prop:1} For every $n\geq k$, a generic $k\times n$ matrix $G$ with columns in $X$ has full row rank.
			\item \label{prop:2} $X$ is MDS.
			\item \label{prop:3} $X$ is not contained in any hyperplane passing through the origin.
		\end{enumerate}
	\end{lemma}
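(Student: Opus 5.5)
I will prove the cycle of implications \ref{prop:2}$\Rightarrow$\ref{prop:1}$\Rightarrow$\ref{prop:3}$\Rightarrow$\ref{prop:2}. The first implication is immediate. If a generic $k\times n$ matrix with columns in $X$ generates an $[n,k]$ MDS code, then in particular it generates a code of dimension $k$. Hence it has full row rank on the same nonempty Zariski open subset of $X^n$.

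For \ref{prop:1}$\Rightarrow$\ref{prop:3}, I argue by contraposition. Suppose $X\subseteq\{x:\va\cdot x=0\}$ for some nonzero $\va\in\F^k$. Then for every $(G_1,\ldots,G_n)\in X^n$ we have $\va G=\Zero$, so $G$ never has full row rank. This contradicts \ref{prop:1}, which requires a nonempty (hence, by irreducibility of $X^n$, dense) open set where the rank is $k$.

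The main step is \ref{prop:3}$\Rightarrow$\ref{prop:2}. Fix $n\ge k$. A $k\times n$ matrix generates an MDS code exactly when every $k\times k$ minor $\det G_S$, $S\subseteq[n]$ with $|S|=k$, is nonzero. The good set is therefore the intersection of the finitely many principal open sets $U_{\det G_S}\subseteq X^n$. Since $X^n$ is irreducible, a finite intersection of nonempty open subsets is nonempty (and dense). So it suffices to show that each $\det G_S$ is generically non-vanishing on $X^n$. By the equivalences for irreducible varieties recalled in the preliminaries, this amounts to exhibiting a single point of $X^n$ where the minor is nonzero. Because $\det G_S$ depends only on the columns indexed by $S$, it is enough to find $k$ points $x_1,\ldots,x_k\in X$ that are linearly independent in $\F^k$; the remaining coordinates can then be filled with arbitrary points of $X$.

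To find such points, I use hypothesis \ref{prop:3}. The linear span of $X$ is a subspace of $\F^k$. If it were proper, it would lie in some hyperplane through the origin, and then so would $X$, which \ref{prop:3} excludes. Hence $X$ spans $\F^k$, and I can greedily choose $x_1,\ldots,x_k\in X$ forming a basis. Then $\det(x_1,\ldots,x_k)\ne 0$, which completes the argument.

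The only delicate point is the bookkeeping about genericity. Nonvanishing of $\det G_S$ at one point of $X^n$ gives a nonempty open set $U_{\det G_S}$, and intersecting finitely many such sets uses irreducibility of $X^n$ (the product lemma from \cite{weil1946foundations}). No power-series machinery such as \cref{lm:parameterize_variety} is needed here.
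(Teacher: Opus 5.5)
Your proof is correct. Your arguments for (b)$\Rightarrow$(a) and (a)$\Rightarrow$(c) match the paper's essentially verbatim.

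The real difference is the remaining implication. The paper does not prove (c)$\Rightarrow$(b); it cites the equivalence (b)$\Leftrightarrow$(c) from Theorem~12 of \cite{brakensiek2024generalized}. You prove it directly:
\begin{itemize}
\item Hypothesis (c) forces the linear span of $X$ to be all of $\F^k$, so some $k$ points of $X$ form a basis.
\item Hence every maximal minor $\det G_S$ is nonzero at some point of $X^n$, so $U_{\det G_S}$ is a nonempty open set.
\item The finitely many sets $U_{\det G_S}$ have nonempty common intersection because $X^n$ is irreducible.
\end{itemize}

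Your cyclic arrangement also shows that the citation is needed only for (c)$\Rightarrow$(b). The direction (b)$\Rightarrow$(c) already follows from the paper's own chain (b)$\Rightarrow$(a)$\Rightarrow$(c).

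Your route gives a short, self-contained argument that uses only what the preliminaries already provide: generically non-vanishing functions on irreducible varieties and the product lemma. It needs no power-series parameterization or external theorem. The paper's route is shorter on the page and links the notion of MDS variety to the framework of \cite{brakensiek2024generalized}, which it reuses throughout \cref{Section3}. The cost is that an elementary step is left as a black box.
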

	\begin{proof}
		The equivalence between \cref{prop:2} and \cref{prop:3} is shown by Theorem 12 in \cite{brakensiek2024generalized}. \cref{prop:1} follows from \cref{prop:2} by the definition of MDS varieties.
		
		To show that \cref{prop:1} implies \cref{prop:3}, suppose for the sake of contradiction that $X$ is contained in the hyperplane $\{\vx:\va\cdot\vx=0\}$, where $\Zero\neq\va\in\F^k$ is a non-zero vector. Then for any matrix $G$ with columns in $X$, we have $\va G=\Zero$. This implies the rows of $G$ are linearly dependent, contradicting the fact that $G$ has full row rank.
	\end{proof}
	
	Recall from Section~\ref{Preliminaries} that the presence of the all-one vector $\One$ dictates whether a code can achieve the standard half-Singleton bound or the strict half-Singleton bound. The following lemma translates this algebraic condition into a geometric one, showing that $\One \in \mathcal{C}$ depends entirely on whether the variety $X$ is contained in an affine hyperplane.
	\begin{lemma}\label{lm:code_containing_1}
		Let $X\subseteq \F^k$ be an irreducible MDS variety, and let $n\geq k+1$. For a generic tuple $(G_1,\ldots,G_n)\in X^n$, the code $\cC$ generated by $G=(G_1,\ldots,G_n)$ contains $\One$ if and only if $X$ is contained in an affine hyperplane.
	\end{lemma}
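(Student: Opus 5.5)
The plan is to translate membership $\One\in\cC$ into a linear-algebra condition on the columns $G_i$. Note that $\One\in\cC$ if and only if there is a row vector $\va\in\F^k$ with $\va G=\One$, that is, $\va\cdot G_i=1$ for all $i\in[n]$. Geometrically, this says all $n$ points $G_1,\dots,G_n$ lie on the affine hyperplane $H_\va=\{\vx:\va\cdot\vx=1\}$. Such an $\va$ is necessarily nonzero, and $H_\va$ does not pass through the origin. Both directions then reduce to comparing "the sampled points lie on an affine hyperplane" with "$X$ lies on an affine hyperplane."

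For the direction "$\Leftarrow$", suppose $X\subseteq\{\vx:\vb\cdot\vx=c\}$ with $\vb\neq\Zero$. By \cref{lm:MDS_hyperplane}, $X$ is not contained in a hyperplane through the origin, so $c\neq 0$. Setting $\va=\vb/c$, we get $\va\cdot G_i=1$ for every tuple in $X^n$, generic or not. Hence $\One=\va G\in\cC$ for all tuples.

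For the direction "$\Rightarrow$", I will show that the bad set is contained in a proper closed subset whenever $X$ is not contained in any affine hyperplane. Consider the $(k+1)\times n$ matrix
$$\widehat G=\begin{pmatrix}G\\ \One\end{pmatrix},$$
whose columns are the points $(G_i,1)$ of the variety $\widehat X=X\times\{1\}\subseteq\F^{k+1}$. This variety is irreducible, being isomorphic to $X$. A hyperplane through the origin containing $\widehat X$ is the zero set of $(\vb,-c)\cdot(\vx,t)$, which on $\widehat X$ reads $\vb\cdot\vx=c$. If $\vb=\Zero$ this forces $c=0$, so the linear form is zero; if $\vb\neq\Zero$ it would put $X$ in an affine hyperplane. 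Hence $\widehat X$ is not contained in any hyperplane through the origin, and by \cref{lm:MDS_hyperplane} a generic $\widehat G$ has full row rank $k+1$. This uses $n\geq k+1$, which is exactly the hypothesis. Full row rank means $\One$ is not in the row span of $G$, i.e. $\One\notin\cC$. The set where this holds is a nonempty Zariski open subset of $X^n$, namely the nonvanishing of some $(k+1)$-minor. Intersecting it with the open set where $G$ has rank $k$ (from $X$ being MDS) gives generic tuples with $\dim\cC=k$ and $\One\notin\cC$. So if $X$ lies in no affine hyperplane, generically $\One\notin\cC$, which is the contrapositive.

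The only delicate point is reading "generic" correctly. The claim is that the property "$\One\in\cC$" holds either on all of $X^n$ or on a dense open set of the complement. The argument above gives exactly this dichotomy, so I expect the main care to be the reduction to $\widehat X$ and the check that it is irreducible and not contained in a linear hyperplane. Both checks are routine.
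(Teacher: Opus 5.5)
Your proof is correct and follows the same route as the paper. For $\Leftarrow$, you normalize the hyperplane to $\{\va\cdot\vx=1\}$ using \cref{lm:MDS_hyperplane}; for $\Rightarrow$, you show that generic points of $X$ do not all lie on a common affine hyperplane. Your lift to $X\times\{1\}\subseteq\F^{k+1}$, followed by \cref{lm:MDS_hyperplane} in dimension $k+1$, is a rigorous version of the paper's unproved claim that $n\geq k+1$ generic points have affine span of dimension $k$, and it exhibits the required nonempty Zariski open set explicitly.
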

	\begin{proof}
		If $X$ is contained in an affine hyperplane, then by \cref{lm:MDS_hyperplane}, this hyperplane does not pass through the origin. Therefore, it can be written as $\{\vx:\va\cdot\vx=1\}$ for some $\Zero\neq\va\in\F^k$. Then for every tuple $(G_1,\ldots,G_n)\in X^n$, we have
		$$\One = \va G\in \cC.$$ 
		
		Conversely, suppose $X$ is not contained in any affine hyperplane. Thus its affine span is the entire space $\mathbb{F}^k$. Therefore, a choice of $n$ generic points $G_1,G_2,\dots,G_n\in X$ will also have an affine span of dimension $k$. If the code $\mathcal{C}$ contained the all-one vector $\One$, there would exist some $\va\in\mathbb{F}^k$ such that $\va G=(1,\dots,1)$. This implies that $\va\cdot G_i=1$ for all $i\in[n]$, meaning all $n$ points lie on the affine hyperplane defined by $\va\cdot\vx=1$. However, because the affine span of $n$ generic points is $k$, they cannot be contained in a $(k-1)$-dimensional affine hyperplane. Thus, there is no $\va\in\mathbb{F}^k$ such that $\va G=\One$, meaning $\One\notin\mathcal{C}$ generically.
	\end{proof}
	
	We now define HS-optimality for varieties.
	\begin{definition}\label{def:variety_half_bound}
		Let $\F$ be an algebraically closed field, and let $X\subseteq\F^k$ be an irreducible variety. We say that $X$ is \emph{HS-optimal} if, for every integer $n\geq 2k$, a generic matrix $G$ with columns in $X$ generates an $[n,k]$ code $\cC$ that
		\begin{itemize}
			\item contains the all-one codeword $\One$ and can correct $n-2k+1$ insdel errors; or
			\item does not contain the all-one codeword and can correct $n-2k$ insdel errors.
		\end{itemize}
	\end{definition}

	We now state the main result of this section: for irreducible varieties, the MDS property is equivalent to HS-optimality.
    
	\begin{theorem}\label{thm:Variety}
		Let $\mathbb{F}$ be an algebraically closed field, and let $X \subseteq \mathbb{F}^k$ be an irreducible variety. Then $X$ is an MDS variety if and only if it is an HS-optimal variety.
	\end{theorem}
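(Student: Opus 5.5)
The plan is to prove the two implications separately. The easy direction (HS-optimal $\Rightarrow$ MDS) goes by contrapositive through \cref{lm:MDS_hyperplane}. The hard direction (MDS $\Rightarrow$ HS-optimal) uses \cref{lm:condition}: we show that every relevant verification matrix $M_{k,\ell,I,J}[G]$ has a nonzero maximal minor that is generically non-vanishing on $X^n$. Since there are finitely many pairs $(I,J)$, intersecting finitely many nonempty open sets stays inside the irreducible $X^n$ and gives a dense open set of good tuples.

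\textbf{HS-optimal $\Rightarrow$ MDS.} Suppose $X$ is not MDS. By \cref{lm:MDS_hyperplane}, $X$ lies in a hyperplane $\va\cdot\vx=0$. Then every $G$ with columns in $X$ has rank at most $k-1$. The resulting code therefore has dimension less than $k$, so two distinct messages $\vx\ne\vy$ give the same codeword, and the code corrects no errors at all. This contradicts HS-optimality, since $n-2k\ge 0$ errors must be correctable. The degenerate value $n=2k$ with strict bound $0$ needs a little care: there I would use that "an $[n,k]$ code" requires dimension $k$, which fails. So $X$ must be MDS.

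\textbf{MDS $\Rightarrow$ HS-optimal.} Assume $X$ is MDS, and use \cref{lm:code_containing_1} to split into two cases. If $X$ lies in an affine hyperplane $\va\cdot\vx=1$, an invertible linear change of coordinates makes the first coordinate identically $1$ on $X$. This preserves the MDS property and the code, and in this case we take $\ell=2k-1$. Otherwise $\One\notin\cC$ generically and we take $\ell=2k$. Fix increasing $I,J$ of length $\ell$ with $|I\cap J|<k$; generically $\rank(G_{I\cap J})<k$ exactly when $|I\cap J|<k$, by the MDS property. We must show that some maximal minor of the square matrix $M_{k,\ell,I,J}[G]$ is not identically zero on $X^n$. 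By \cref{lm:parameterize_variety}, this reduces to showing that the determinant, after substituting the power-series parametrization $f(\vz_i)$ for each column, is a nonzero series.

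The key step, and the main obstacle, is this nonvanishing. I would argue by induction on $\ell$ (equivalently $k$), in the style of the Reed--Solomon argument of \cite{con2023reed,ji2023strict}, using \cref{lm:poly_degenerate} to specialize columns one at a time.

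1. Pick an index $e$ belonging to exactly one of $I,J$, or an extreme index such as $\max(I_\ell,J_\ell)$. The corresponding column of $M$ then has the shape $(G_e;\Zero)$ or $(\Zero;G_e)$ up to shared columns.
2. Write the columns as $(f(\vz_{I_r});f(\vz_{J_r}))$ and expand the determinant in a Laplace/Cauchy--Binet fashion along the top and bottom $k$ rows.
3. Look for a leading term under a suitable monomial order on the $\vz$ variables. Each term factors as a product of two $k\times k$ minors of generic points of $X$, and these are nonzero by the MDS property.
4. Show the leading term is not cancelled, because the index patterns forced by $|I\cap J|<k$ and the increasing condition single out a unique extremal term.

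If this cancellation analysis is too delicate, the fallback is a reduction to the rational normal curve $t\mapsto(1,t,\dots,t^{k-1})$, for which the result is known from \cite{con2023reed}. The idea is to find a curve germ inside $X$, after a linear change of coordinates, whose leading behaviour is monomial $(t^{a_1},\dots,t^{a_k})$ with distinct $a_i$. Such a germ should exist because $X$ spans $\F^k$. Degenerating along it then reduces the claim to generalized Vandermonde verification matrices, whose nonsingularity I would establish by the same combinatorial leading-term argument on univariate monomials. This monomial base case is where I expect the real work to lie.
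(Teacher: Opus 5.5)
Your easy direction is fine and is essentially the paper's argument. For the hard direction, your primary route has no working mechanism yet. Step 1 misdescribes the matrix: column $r$ of $M_{k,\ell,I,J}$ is $(G_{I_r};G_{J_r})$, so an index occurring only in $I$ still sits in a column whose lower half is some other $G_{J_r}$. And step 4, that an extremal Laplace term survives, is exactly what has to be proved; for arbitrary power-series coordinates no monomial order makes this visible, which is why the paper first degenerates to monomials.

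Your fallback is essentially the paper's route. You parametrize by power series (\cref{lm:parameterize_variety}) and echelonize the leading terms (\cref{lm:coefficient_choose}). You pass to one variable by $z_{u,m}\mapsto t_u^{w_m}$; this substitution, not merely ``$X$ spans $\F^k$'', is what guarantees your germ keeps independent coordinates with distinct leading exponents. You then rescale $t_u\mapsto\beta t_u$ and read off the lowest power of $\beta$.

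You leave the monomial base case unproved. The paper proves it by induction on $k$. Choose positions $i\neq j$ so that $\alpha_{I_i}$ and $\alpha_{J_j}$ each occur in only one column: take $I_i=J_i$ and $J_j\notin\Set(I)$ when $|I\cap J|=k-1$, and $I_i\notin\Set(J)$, $J_j\notin\Set(I)$ otherwise. Then the coefficient of $\alpha_{I_i}^{e_k}\alpha_{J_j}^{e_k}$ is $\pm\det M_{k-1,2k-3,I',J'}$ with $|I'\cap J'|<k-1$.

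The more serious gap is the strict case ($\One\notin\cC$). If $\Zero\notin X$, every germ on $X$ has a nonzero constant term, so after echelonizing the leading tuple is necessarily $(1,t^{a_2},\dots,t^{a_k})$. Its strict $2k\times 2k$ verification matrix has two identical all-ones rows (rows $1$ and $k+1$). So the lowest coefficient in $\beta$ vanishes identically, and your claimed reduction to generalized Vandermonde verification matrices certifies nothing there.

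The paper needs an extra step here:
\begin{enumerate}
\item In the strict determinant, subtract the upper copy of the row belonging to the basis element with constant term $1$ from its lower copy. The new row then starts in degree $p>0$, the first nonconstant exponent of that element.
\item Pick $J_s\notin\Set(I)$; it exists because $I\neq J$.
\item The coefficient of $t_{J_s}^{p}$ in the new leading determinant is, up to a nonzero scalar, a \emph{standard} $(2k-1)\times(2k-1)$ verification determinant for $(1,t^{e_2},\dots,t^{e_k})$ with position $s$ deleted. Here $e_2,\dots,e_k>0$ are the leading exponents of the other basis elements.
\end{enumerate}
This reduces the strict case to the non-strict monomial case. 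Without this step, or an equivalent second-order argument, your proof does not establish the strict half of the theorem for any $X$ that avoids the origin.
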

	
	\noindent\textit{Proof overview:}
		The direction from HS-optimality to the MDS property is immediate. If $X$ is not an MDS variety, then by \cref{lm:MDS_hyperplane}, a matrix formed by generic points of $X$ does not have full row rank and therefore does not generate an $[n,k]$ code.

		For the converse, suppose that $X$ is an MDS variety. 
        We need to prove that the verification matrices $M_{k,\ell,I,J}$ are generically nonsingular. The proof proceeds through three steps. First, in \cref{subsection3.2}, \cref{lm:parameterize_variety} translates generic non-vanishing on $X$ into non-vanishing of formal power series. Second, in \cref{subsection3.3}, we reduce the resulting multivariate power series to univariate monomials. Finally, in \cref{subsection3.4}, we prove the required determinant conditions for these monomials. Applying the reductions in reverse proves the result for generic points on $X$.


		%
	
	\subsection{Reduction to Power Series}\label{subsection3.2}
	
	In this subsection, we formalize the first step of our reduction by translating the geometric properties of varieties into algebraic conditions on formal power series. Recall that a formal power series ring over a field $\mathbb{F}$ in variables $z_1, \dots, z_d$, denoted by $\mathbb{F}[[z_1, \dots, z_d]]$, is a generalization of a polynomial ring that allows for infinite sums of monomials. Rather than evaluating these series at specific numerical points, we treat them as formal algebraic objects where properties like linear independence are determined by their coefficients.
	
	\begin{definition}
		Let $\F$ be an algebraically closed field. Given a $k$-tuple of power series $F=(f_1,\ldots,f_k)\in (\F[[z_1,\ldots,z_d]])^k$, we define the $k \times n$ matrix $G^F$ as:
		\begin{align*}
			G^F&=(F(\vz_1),\ldots,F(\vz_n))\\
			&=\begin{pmatrix}
				f_1(\vz_1) & f_1(\vz_2) & \cdots & f_1(\vz_n) \\
				f_2(\vz_1) & f_2(\vz_2) & \cdots & f_2(\vz_n) \\
				\vdots & \vdots & \ddots &\vdots \\
				f_k(\vz_1) & f_k(\vz_2) & \cdots & f_k(\vz_n)
			\end{pmatrix}.
		\end{align*}
		where $\vz_i=(z_{i,1},\ldots,z_{i,d})$, $i\in [n]$, are distinct $d$-tuples of variables. Each entry in $G^F$ is an element in the formal power series ring.
	\end{definition}
	
	\begin{definition}[MDS power series]\label{def:MDS_power_series}
		A tuple of power series $F=(f_1,\ldots,f_k)\in (\F[[z_1,\ldots,z_d]])^k$ is said to be MDS if the following determinant is nonzero:
		$$\det\begin{pmatrix}
			f_1(\vz_1) & f_1(\vz_2) & \cdots & f_1(\vz_k) \\
			f_2(\vz_1) & f_2(\vz_2) & \cdots & f_2(\vz_k) \\
			\vdots & \vdots & \ddots &\vdots \\
			f_k(\vz_1) & f_k(\vz_2) & \cdots & f_k(\vz_k) \\
		\end{pmatrix}\neq 0.$$
	\end{definition}
	
	\begin{lemma}[Lemma 8 in \cite{brakensiek2024generalized}]
		A $k$-tuple $F=(f_1,f_2,\ldots,f_k)\in (\F[[z_1,\ldots,z_d]])^k$ is MDS if and only if $f_1,\ldots,f_k$ are linearly independent as power series over $\F$.
	\end{lemma}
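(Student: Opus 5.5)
The plan is to prove the two implications separately. The forward direction (MDS implies independent) is a one-line kernel argument. The reverse direction goes by induction on $k$, expanding the determinant along its last column.

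\emph{Dependent implies not MDS.} Suppose $\sum_{i=1}^k a_i f_i = 0$ with $\va=(a_1,\ldots,a_k)\in\F^k$ nonzero. Then $\sum_i a_i f_i(\vz_j)=0$ for every $j$, because substituting the variables $\vz_j$ for $(z_1,\ldots,z_d)$ is a ring homomorphism. Hence $\va$ lies in the left kernel of the $k\times k$ matrix $(f_i(\vz_j))$, and its determinant is $0$ in $\F[[\vz_1,\ldots,\vz_k]]$.

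\emph{Independent implies MDS, by induction on $k$.} For $k=1$, independence just says $f_1\neq 0$, which is exactly the claim. For $k\ge 2$, note that $f_1,\ldots,f_{k-1}$ are still independent. By the induction hypothesis the $(k-1)\times(k-1)$ minor $D=\det(f_i(\vz_j))_{i,j\le k-1}$ is nonzero. Expanding along the last column gives
$$\det(f_i(\vz_j))_{i,j\le k}=\sum_{i=1}^k c_i\, f_i(\vz_k),\qquad c_i\in R:=\F[[\vz_1,\ldots,\vz_{k-1}]],$$
with $c_k=\pm D\neq 0$.

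Next, view this element as a power series in the variables $\vz_k$ with coefficients in $R$. Write $f_i(\vz_k)=\sum_{\mathbf m} a_{i,\mathbf m}\vz_k^{\mathbf m}$, where $a_{i,\mathbf m}\in\F$ and $\mathbf m$ ranges over multi-indices in $\N^d$. The coefficient of $\vz_k^{\mathbf m}$ in the determinant is then $\sum_i c_i a_{i,\mathbf m}$. Suppose, for contradiction, that the determinant is $0$. Then the vector $(c_1,\ldots,c_k)$ is annihilated by every column of the $k\times\infty$ matrix $A=(a_{i,\mathbf m})$ over $\F$.

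The key step is that linear independence of $f_1,\ldots,f_k$ over $\F$ means exactly that the $k$ rows of $A$ are independent over $\F$. Since the row rank of $A$ is $k$, some $k$ columns of $A$ form an invertible $k\times k$ matrix $B$ over $\F$; this holds even though $A$ has infinitely many columns, because the rows span a $k$-dimensional space determined by finitely many coordinates. Now $R$ is an integral domain, so we can pass to its fraction field $K$. There $B$ remains invertible, and $(c_1,\ldots,c_k)B=0$ forces $c=0$, contradicting $c_k\neq0$.

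The main point needing care is this last step: selecting finitely many columns with nonzero $\F$-determinant and transporting invertibility to the coefficient domain $R$. Everything else is formal.
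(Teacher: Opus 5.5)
Your proof is correct. The paper gives no argument of its own for this statement; it imports it as Lemma~8 of \cite{brakensiek2024generalized}, so there is no in-paper proof to match.

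Two small tightenings:
\begin{itemize}
\item The existence of an invertible $k\times k$ column submatrix $B$ is best justified by saying that the columns of $A$ span $\F^k$. Any $\mathbf b\in\F^k$ orthogonal to every column gives a linear relation $\sum_i b_if_i=0$, which independence rules out.
\item The fraction field of $R$ is unnecessary. Since $B^{-1}$ has entries in $\F\subseteq R$, you get $c=cBB^{-1}=0$ directly.
\end{itemize}

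The induction and cofactor expansion are more than you need. Because $\vz_1,\ldots,\vz_k$ are disjoint sets of variables, multilinearity of the determinant gives
\[
\det\bigl(f_i(\vz_j)\bigr)_{i,j\le k}=\sum_{\mathbf m_1,\ldots,\mathbf m_k\in\N^d}\det\bigl(a_{i,\mathbf m_j}\bigr)_{i,j\le k}\,\vz_1^{\mathbf m_1}\cdots\vz_k^{\mathbf m_k}.
\]
Distinct tuples $(\mathbf m_1,\ldots,\mathbf m_k)$ give distinct monomials. So the determinant is nonzero if and only if some $k\times k$ minor of your coefficient matrix $A$ is nonzero, that is, if and only if $A$ has rank $k$. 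That is exactly linear independence of the $f_i$, so both directions follow at once.

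With the paper's \cref{lm:coefficient_choose} available, this becomes fully explicit. After the change of basis that makes $\va_{\vj_1^*},\ldots,\va_{\vj_k^*}$ the standard basis vectors, the coefficient of $\vz_1^{\vj_1^*}\cdots\vz_k^{\vj_k^*}$ in the determinant is $\det(I_k)=1$.

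Your argument rests on the same key fact, a nonzero $k\times k$ minor of $A$, but reaches it through induction. The direct expansion is shorter and names an explicit monomial witnessing non-vanishing. That matches the leading-term style of argument used in \cref{subsection3.3}.
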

	
	\begin{definition}[HS-optimal power series]\label{def:optimal_power_series}
		Let $F=(f_1,\ldots,f_k)\in (\F[[z_1,\ldots,z_d]])^k$. We say that $F$ is \emph{HS-optimal} if, for every $n\geq 2k$, the following condition holds:
		\begin{itemize}
			\item If $1\in\operatorname{Span}_{\F}\{f_1,\ldots,f_k\}$, apply an invertible row transformation and assume that $f_1=1$. Write $F'=(f_2,\ldots,f_k)^T$. Then, for all increasing subsequences $I,J\in[n]^{2k-1}$ satisfying $|I\cap J|<k$,
			\begin{equation*}
				\det\begin{pmatrix}
					1 & 1 & \cdots & 1\\
					F'(\vz_{I_1}) & F'(\vz_{I_2}) & \cdots & F'(\vz_{I_{2k-1}}) \\
					F'(\vz_{J_1}) & F'(\vz_{J_2}) & \cdots & F'(\vz_{J_{2k-1}})
				\end{pmatrix}\neq 0.
			\end{equation*}
			\item If $1\notin\operatorname{Span}_{\F}\{f_1,\ldots,f_k\}$, then, for all increasing subsequences $I,J\in[n]^{2k}$ satisfying $|I\cap J|<k$,
			\begin{equation*}
				\det\begin{pmatrix}
					F(\vz_{I_1}) & F(\vz_{I_2}) & \cdots & F(\vz_{I_{2k}}) \\
					F(\vz_{J_1}) & F(\vz_{J_2}) & \cdots & F(\vz_{J_{2k}})
				\end{pmatrix}\neq 0.
			\end{equation*}
		\end{itemize}
	\end{definition}
	
	Every HS-optimal tuple is MDS. Indeed, if $f_1,\ldots,f_k$ are linearly dependent, then the rows of each corresponding verification matrix would be linearly dependent, contradicting its nonzero determinant.
	
	\begin{lemma}\label{thm:Power_series}
		Every MDS tuple $F=(f_1,\ldots,f_k)\in(\F[[z_1,\ldots,z_d]])^k$ is HS-optimal.
	\end{lemma}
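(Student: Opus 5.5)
Following the proof overview, the plan is to reduce the multivariate power-series statement to univariate monomials and then verify the determinant condition for monomials directly.

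\emph{Step 1: normalize $F$.} Since $f_1,\ldots,f_k$ are linearly independent, apply an invertible row transformation (which does not affect the vanishing of the determinants) so that the $f_i$ have distinct leading monomials $m_1,\ldots,m_k$ with respect to a fixed monomial order on $\F[[z_1,\ldots,z_d]]$ (Gaussian elimination on coefficients). When $1\in\operatorname{Span}_\F\{f_i\}$ we can take $f_1=1$, whose leading monomial is $1$.

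\emph{Step 2: pass to monomials.} Choose a weight vector $w\in\N^d$, generic enough to separate the finitely many relevant monomials. Substitute $z_{i,j}\mapsto t_i^{w_j}$, or better, keep a weighted degree filtration. The determinant of the verification matrix is then a polynomial in the entries. Its lowest-order part with respect to the weighting is the same determinant with each $f_i$ replaced by its leading term $c_i\vz^{\alpha_i}$, possibly after cancellation. The cleanest way to organize this is to substitute $\vz_i\mapsto (x_i^{w_1},\ldots,x_i^{w_d})$ with $w$ chosen so that the leading monomials become distinct univariate powers $x^{a_1},\ldots,x^{a_k}$, with $a_1=0$ in the $\One$ case. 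It then suffices to prove that the determinant for the monomial tuple $(x^{a_1},\ldots,x^{a_k})$ is a nonzero polynomial whose lowest term cannot be cancelled by higher-order corrections. To get that, I would show the monomial determinant has a \emph{unique} minimal-weight monomial, which then survives in the full determinant.

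\emph{Step 3: the monomial case, which is the main obstacle.} We must show that for distinct exponents $a_1<\cdots<a_k$ and increasing $I,J$ of length $2k$ (or $2k-1$ with $a_1=0$, the all-ones row counted once), if $|I\cap J|<k$ then
$$\det\begin{pmatrix} x_{I_s}^{a_r}\\ x_{J_s}^{a_r}\end{pmatrix}_{r,s}\neq0.$$
Expand this as a generalized Vandermonde-type determinant in the variables $x_1,\ldots,x_n$. Each column is indexed by $s$ and involves the two variables $x_{I_s}$ and $x_{J_s}$, which coincide when $I_s=J_s$. I would use the lexicographic order with $x_1>x_2>\cdots>x_n$ and argue inductively, in the spirit of Ji et al.\ and Con--Shpilka--Tamo. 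Start from the last columns: assign the largest exponents to the largest-index variables greedily, in an order compatible with the increasing structure. Then show the extremal monomial arises from exactly one permutation (a matching of rows to columns), using that each variable appears in at most two columns and that rows split into the $I$-block and the $J$-block. The condition $|I\cap J|<k$ is needed to prevent the two blocks from collapsing to a rank-deficient configuration: a coinciding column contributes identical entries to both blocks. The key combinatorial step is to show that a chain/graph decomposition of the columns (edges $I_s$--$J_s$) yields a unique optimal assignment. Otherwise I would fall back on induction on $k$, deleting the column containing the largest variable $x_{\max(I_{2k},J_{2k})}$ and the row of the largest exponent, and checking that the condition $|I\cap J|<k$ is preserved for the residual pair. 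This induction is where I expect the real difficulty, particularly handling the case $I_{2k}=J_{2k}$.

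Finally, lift the result back through the weighting. Distinct leading exponents and the uniqueness of the extremal term guarantee that the full power-series determinant has that term with a nonzero coefficient, which proves that $F$ is HS-optimal.
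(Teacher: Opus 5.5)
Your overall route is the paper's: normalize to distinct leading monomials, substitute $z_{u,m}\mapsto x_u^{w_m}$, reduce to univariate monomial tuples, then treat the monomial case. The reduction, however, \emph{fails in the strict case}. The hypothesis $1\notin\operatorname{Span}_{\F}\{f_i\}$ does not force the constant terms to vanish. Take $F=(1+z,z^2)$ with $k=2$, $d=1$. Every basis of the span has an element with nonzero constant term, so your leading tuple is $(1,x^{2})$. Its strict $2k\times 2k$ verification matrix then contains the all-ones row twice, once from the $I$-block and once from the $J$-block. That determinant is identically zero, so the lowest-order part of the true determinant tells you nothing, and your Step~3 assertion for length $2k$ is false whenever $a_1=0$.

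The missing step is the one the paper adds:
\begin{enumerate}
\item Normalize so that exactly one element, say $f_1$, has constant term $1$; the others then have zero constant term and positive leading exponents $a_2,\dots,a_k$.
\item Subtract the $I$-row of $f_1$ from its $J$-row. The new row has leading part $c\,(x_{J_s}^{p}-x_{I_s}^{p})$ with $p>0$, and you can arrange $p$ to differ from $a_2,\dots,a_k$.
\item Extract the coefficient of $x_{J_s}^{p}$ for some $J_s\notin\Set(I)$, which exists since $I\neq J$.
\end{enumerate}
That coefficient is $\pm c$ times a standard length-$(2k-1)$ verification determinant for $(1,x^{a_2},\dots,x^{a_k})$. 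So the strict case with constant terms reduces to the $\One$ case.

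Separately, Step~3 is not actually carried out, and the plan you sketch would not work as stated. You do not need a unique extremal monomial. If you replace every $x_u$ by $\beta x_u$, the lowest $\beta$-coefficient of the full determinant is the whole leading-term determinant. So it suffices that this is a nonzero polynomial, and no cancellation against higher-order terms can occur.

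Your fallback induction deletes one column and ``the row'' of the top exponent, but there are two such rows. Deleting only one breaks the paired $I$/$J$ row structure, so the induction hypothesis cannot be applied to the minor. The induction that works deletes both rows of $x^{a_k}$ and two columns:
\begin{itemize}
\item If $|I\cap J|=k-1$, take $i$ with $I_i=J_i$ and $j$ with $J_j\notin\Set(I)$.
\item Otherwise, take $i\neq j$ with $I_i\notin\Set(J)$ and $J_j\notin\Set(I)$.
\end{itemize}
Since $x_{I_i}$ occurs only in column $i$ and $x_{J_j}$ only in column $j$, the monomial $x_{I_i}^{a_k}x_{J_j}^{a_k}$ can arise only from the $J$-row of $a_k$ in column $j$ together with the $I$-row of $a_k$ in column $i$. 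Its coefficient is therefore $\pm$ the verification determinant for $(x^{a_1},\dots,x^{a_{k-1}})$ with positions $i,j$ removed. That shortened pair agrees on fewer than $k-1$ positions, so the hypothesis applies. This choice also removes your difficulty with $I_{2k}=J_{2k}$.
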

	
	\noindent\textit{The first reduction:}
	Assume \cref{thm:Power_series}. Let $X\subseteq\F^k$ be an irreducible MDS variety, and let $F=(f_1,\ldots,f_k)$ be the tuple given by \cref{lm:parameterize_variety}. The polynomial $g(\vx_1,\ldots,\vx_k)=\det(\vx_1,\ldots,\vx_k)$ is generically nonzero on $X^k$. By \cref{lm:parameterize_variety}, $g(F(\vz_1),\ldots,F(\vz_k))$ is nonzero, and hence $F$ is MDS.

	We next distinguish the two cases in \cref{def:optimal_power_series}. If $X$ is contained in an affine hyperplane, then by \cref{lm:MDS_hyperplane}, it is contained in a hyperplane $\{\vx:\va\cdot\vx=1\}$. The polynomial $\va\cdot\vx-1$ vanishes on $X$, so \cref{lm:parameterize_variety} gives $\va\cdot F=1$. Thus, $1\in\operatorname{Span}_{\F}\{f_1,\ldots,f_k\}$. Conversely, if $\va\cdot F=1$ for some $\va\in\F^k$, then the same lemma implies that $\va\cdot\vx-1$ vanishes on $X$, and hence $X$ is contained in an affine hyperplane.

	By \cref{thm:Power_series}, the tuple $F$ is HS-optimal. Applying \cref{lm:parameterize_variety} to each verification determinant shows that it is generically nonzero on $X^n$. Therefore, $X$ is HS-optimal, which proves the nontrivial direction of \cref{thm:Variety}.

	\subsection{Reduction of Power Series to Univariate Monomials}\label{subsection3.3}
	
	In the previous subsection, we established that proving \cref{thm:Power_series} is sufficient to prove our main result, \cref{thm:Variety}. In this subsection, we reduce \cref{thm:Power_series} to a base case involving univariate monomials. We first use an auxiliary variable to retain suitable leading monomials of the power series. Since only finitely many monomials remain, we then encode the multivariate monomials as univariate ones.
	
	For a multi-index $\vj\in\N^d$, we denote the multivariate monomial by $\vz^\vj=\prod_{m=1}^d z_m^{j_m}$. By treating the $k$-tuple $F=(f_1,f_2,\ldots,f_k)\in (\F[[z_1,\ldots,z_d]])^k$ as a column vector of power series, we can expand it as
	\begin{equation*}
		\begin{aligned}
			F(\vz)=
			\begin{pmatrix}
				f_1(\vz) \\ f_2(\vz) \\ \vdots \\ f_k(\vz)
			\end{pmatrix}
			=\begin{pmatrix}
				\sum_{\vj \in \N^d} a_{1,\vj}\vz^{\vj} \\ \sum_{\vj \in \N^d} a_{2,\vj}\vz^{\vj} \\ \vdots \\ \sum_{\vj \in \N^d} a_{k,\vj}\vz^{\vj}
			\end{pmatrix}
			=\sum_{\vj \in \N^d} \va_{\vj}\vz^{\vj},
		\end{aligned}
	\end{equation*}
	where $\va_\vj$ is a coefficient column vector whose $i$-th entry is the coefficient of $\vz^{\vj}$ in $f_i$.
    
    Fix a graded lexicographic order on $\N^d$ and use the same order for the corresponding monomials $\vz^\vi$, $\vi\in\N^d$.
	Under this ordering, if $F$ is an MDS tuple, we can identify its linearly independent leading terms.
	
	\begin{lemma}[Lemma 7 in \cite{brakensiek2024generalized}]\label{lm:coefficient_choose}
		Let $F=(f_1,f_2,\ldots,f_k)\in (\F[[z_1,\ldots,z_d]])^k$ be a linearly independent tuple, and write $F(\vz)=\sum_{\vj\in\N^d}\va_\vj\vz^\vj$. There exist multi-indices $\vj_1^*<\cdots<\vj_k^*$ such that the coefficient vectors $\va_{\vj_1^*},\ldots,\va_{\vj_k^*}$ are linearly independent and, for every $\beta\in[k]$ and every $\vj<\vj_\beta^*$,
		$$\va_\vj\in\operatorname{Span}_{\F}\{\va_{\vj_1^*},\ldots,\va_{\vj_{\beta-1}^*}\},$$
		where the span is $\{\Zero\}$ when $\beta=1$.
	\end{lemma}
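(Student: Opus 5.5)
The plan is a greedy selection of coefficient vectors along the monomial order, using linear independence of $f_1,\ldots,f_k$ to guarantee that the selection never stalls before reaching $k$ vectors.

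\textbf{Step 1: the order is a well-order.} The graded lexicographic order on $\N^d$ first compares total degree. There are only finitely many multi-indices of each total degree, so every $\vj\in\N^d$ has only finitely many predecessors. In particular, every nonempty subset of $\N^d$ has a least element. This is what allows us to speak of ``the smallest index with a given property''.

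\textbf{Step 2: the coefficient vectors span $\F^k$.} Let $W=\operatorname{Span}_{\F}\{\va_\vj:\vj\in\N^d\}\subseteq\F^k$, and suppose $W\neq\F^k$. Then there is a nonzero row vector $\vc\in\F^k$ with $\vc\cdot\va_\vj=0$ for every $\vj$. Consequently
$$\sum_{i=1}^k c_i f_i=\sum_{\vj}(\vc\cdot\va_\vj)\vz^\vj=0$$
in $\F[[z_1,\ldots,z_d]]$. This contradicts the linear independence of $f_1,\ldots,f_k$, so $W=\F^k$.

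\textbf{Step 3: greedy choice and verification.} Define the indices inductively. For $\beta=1,\ldots,k$, let $\vj_\beta^*$ be the smallest $\vj\in\N^d$ such that
$$\va_\vj\notin V_{\beta-1}:=\operatorname{Span}_{\F}\{\va_{\vj_1^*},\ldots,\va_{\vj_{\beta-1}^*}\},$$
with $V_0=\{\Zero\}$.
\begin{itemize}
\item \emph{Existence.} Inductively $\dim V_{\beta-1}=\beta-1<k$. By Step 2 some $\va_\vj$ lies outside $V_{\beta-1}$, and by Step 1 a least such $\vj$ exists.
\item \emph{Independence.} Each new vector lies outside the span of the earlier ones, so $\va_{\vj_1^*},\ldots,\va_{\vj_k^*}$ are linearly independent.
\item \emph{Span property.} By minimality of $\vj_\beta^*$, every $\vj<\vj_\beta^*$ satisfies $\va_\vj\in V_{\beta-1}$, which is exactly the required condition.
\item \emph{Strict increase.} Every $\vj\le\vj_{\beta-1}^*$ satisfies $\va_\vj\in V_{\beta-1}$. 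For $\vj<\vj_{\beta-1}^*$ this holds because $\va_\vj\in V_{\beta-2}\subseteq V_{\beta-1}$. For $\vj=\vj_{\beta-1}^*$ it holds because $\va_{\vj_{\beta-1}^*}\in V_{\beta-1}$ by construction. Hence $\vj_\beta^*>\vj_{\beta-1}^*$.
\end{itemize}

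The only point needing care is Step 1. The ``smallest index'' must exist, which requires a well-order, and graded lexicographic order provides one. Step 2 is where the hypothesis on $F$ enters. The rest is bookkeeping.
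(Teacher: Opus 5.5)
Your greedy construction is correct, and it is essentially the standard argument for this lemma: the paper gives no proof of its own, citing Lemma~7 of \cite{brakensiek2024generalized}, and that lemma is proved by the same greedy selection of least multi-indices (in the well-ordered graded lexicographic order) whose coefficient vectors leave the span of those already chosen. Linear independence of $f_1,\ldots,f_k$ is exactly what guarantees that $k$ such indices exist, and your checks of the well-ordering, of $\operatorname{Span}_{\F}\{\va_\vj\}=\F^k$, and of strict monotonicity cover every point that argument needs.
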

	After an invertible row transformation, we may therefore assume that $\va_{\vj_1^*},\ldots,\va_{\vj_k^*}$ are the standard basis vectors. It follows that $\vz^{\vj_i^*}$ is the leading monomial of the $i$-th transformed series for every $i\in[k]$.
	
	This allows us to reduce the problem to monomials. The strict case requires a small additional argument because the constant terms need not vanish even though the constant series does not lie in the span of the tuple.
	\begin{lemma}\label{lm:reduction_monomial}
		Suppose that the following tuples are HS-optimal in their respective cases:
		\begin{itemize}
			\item $(1,t^{e_1},\ldots,t^{e_{k-1}})$ for all distinct positive integers $e_1,\ldots,e_{k-1}$;
			\item $(t^{e_1},\ldots,t^{e_k})$ for all distinct positive integers $e_1,\ldots,e_k$.
		\end{itemize}
		Then \cref{thm:Power_series} holds.
	\end{lemma}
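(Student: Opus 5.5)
The plan is to reduce an arbitrary MDS tuple to its leading monomials in two steps. First, \cref{lm:coefficient_choose} together with a row transformation puts $F$ in a triangular normal form. Second, an auxiliary scaling variable turns each verification determinant into a power series whose lowest-order coefficient is the corresponding monomial determinant.

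\emph{Step 1 (normal form).} Let $F$ be MDS. By \cref{lm:coefficient_choose} there are multi-indices $\vj_1^*<\cdots<\vj_k^*$ in graded lexicographic order. After an invertible row transformation we may take $\va_{\vj_\beta^*}=\mathbf e_\beta$. Every $\va_\vj$ with $\vj<\vj_\beta^*$ lies in the span of $\mathbf e_1,\ldots,\mathbf e_{\beta-1}$, so its $\beta$-th entry is zero. Hence $f_\beta=\vz^{\vj_\beta^*}+(\text{terms with monomials larger than }\vj_\beta^*)$.

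Invertible row transformations of $F$ act block-diagonally on every verification matrix. They therefore preserve both the nonvanishing of its determinant and the question of whether $1\in\operatorname{Span}_\F\{f_i\}$. If $1$ lies in the span, we can arrange $f_1=1$ and $\vj_1^*=\Zero$; all other $\vj_\beta^*$ are then nonzero, after subtracting multiples of $f_1$ from the other rows.

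\emph{Step 2 (univariate encoding).} Only the finitely many monomials up to the maximal total degree $D=\max_\beta|\vj_\beta^*|$ matter. Choose an integer $N>D$ and substitute $z_{i,m}\mapsto t_i^{N^{d-m}}$ for each column variable block $\vz_i$. This sends $\vz^{\vj}$ to $t_i^{\phi(\vj)}$ with $\phi(\vj)=\sum_m j_mN^{d-m}$. The map $\phi$ is injective and order-preserving on monomials of degree $\le D$, which is the range of the lexicographic comparison we need.

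Graded lex only compares degree first, so higher-degree terms could in principle receive small exponents. To avoid this, I would use the weight $\phi'(\vj)=|\vj|N^{d}+\phi(\vj)$, realized by $z_{i,m}\mapsto t_i^{N^d+N^{d-m}}$. Then all remaining terms of $f_\beta$ map to strictly larger powers of $t_i$. The substitution is a ring map, and it sends nonzero series to nonzero series because $\phi'$ is injective on $\N^d$ once $N$ exceeds every coordinate that needs comparing. Hence a nonzero image implies a nonzero original.

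Next, introduce $\lambda$ and replace $t_i$ by $\lambda t_i$. Writing $e_\beta=\phi'(\vj_\beta^*)$, every entry in the $\beta$-th row of either the $I$-block or the $J$-block becomes $\lambda^{e_\beta}(t^{e_\beta}+O(\lambda))$. Factoring $\lambda^{e_\beta}$ out of each row gives
$$\det M=\lambda^{2\sum_\beta e_\beta}\bigl(\det M_{\mathrm{mono}}+O(\lambda)\bigr),$$
where $M_{\mathrm{mono}}$ is the verification matrix of $(t^{e_1},\ldots,t^{e_k})$. In the case $1\in\operatorname{Span}$, the row of ones has exponent $0$, only one copy of it appears, and $M_{\mathrm{mono}}$ is the verification matrix of $(1,t^{e_2},\ldots,t^{e_k})$. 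By hypothesis $\det M_{\mathrm{mono}}\ne0$, so $\det M\neq0$. This settles the case $1\in\operatorname{Span}$, and also the strict case when $\vj_1^*\ne\Zero$, since then all $e_\beta$ are distinct and positive.

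\emph{Main obstacle: the strict case with $\vj_1^*=\Zero$.} Here $f_1=1+g$ with $g\ne0$ having no constant term. The leading-term argument above degenerates: the monomial matrix has two identical rows of ones (row $1$ and row $k+1$), so $\det M_{\mathrm{mono}}=0$. I would first replace row $1$ by row $1$ minus row $k+1$, which gives entries $g(\vz_{I_c})-g(\vz_{J_c})$. Since $1\notin\operatorname{Span}$, the series $g$ is independent of $f_2,\ldots,f_k$ modulo constants. I would therefore reduce $g$ against the $f_\beta$ (for $\beta\ge2$) so that its leading exponent $e_0$ is distinct from and smaller than $e_2,\ldots,e_k$; the ordering in \cref{lm:coefficient_choose} applied to $(1,g,f_2,\ldots)$ provides such an $e_0$.

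After this, expanding in $\lambda$ as before, the lowest-order coefficient is the determinant whose rows are
$$t^{e_0}(I)-t^{e_0}(J),\quad t^{e_2}(I),\ldots,t^{e_k}(I),\quad \One,\quad t^{e_2}(J),\ldots,t^{e_k}(J).$$
I would then show this determinant is nonzero, either by relating it through multilinearity to the hypothesized HS-optimality of $(1,t^{e_0},t^{e_2},\ldots,t^{e_k})$ or of $(t^{e_0},\ldots,t^{e_k})$, or, failing that, by a direct leading-term argument in the variables $t_{I_c},t_{J_c}$. Verifying that this particular mixed determinant is covered by the stated monomial hypotheses is the step I expect to require the most care.
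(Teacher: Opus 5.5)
Up to the last case your argument is the paper's. It uses the same leading-term normal form, the same weighted univariate substitution (your $\phi'$ is a concrete choice of the paper's weights), and the same scaling by an auxiliary variable. This correctly settles the case $1\in\operatorname{Span}_{\F}\{f_1,\ldots,f_k\}$ and the strict case with all constant terms zero. The gap is in the remaining strict case, $f_1=1+g$. You correctly remove the duplicated row of ones and identify the lowest-order coefficient as $\det L$, where $L$ has rows $t^{e_0}(I)-t^{e_0}(J)$, then $t^{e_\beta}(I)$ for $2\le\beta\le k$, then $\One$, then $t^{e_\beta}(J)$ for $2\le\beta\le k$. But you never show $\det L\neq 0$. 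The multilinearity route you suggest does not reach the hypotheses: splitting the first row yields determinants whose $I$-block comes from $(t^{e_0},t^{e_2},\ldots,t^{e_k})$ while the $J$-block comes from $(1,t^{e_2},\ldots,t^{e_k})$. These are not verification determinants of any single tuple. So the proof is incomplete exactly in the case that distinguishes the strict bound.

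The missing idea, which is what the paper does, is to extract one coefficient. Since $I\neq J$ are increasing of the same length, $\Set(I)\neq\Set(J)$, so some position $s$ has $J_s\notin\Set(I)$. Then $t_{J_s}$ appears only in column $s$ of $L$. It occurs with exponent $e_0$ in the first row, not at all in the $I$-rows or the row of ones, and with exponent $e_\beta$ in the other $J$-rows. Because $e_0>0$ and $e_0\notin\{e_2,\ldots,e_k\}$, expanding along column $s$ shows that the coefficient of $t_{J_s}^{e_0}$ in $\det L$ is, up to sign, the minor obtained by deleting the first row and column $s$. After moving the row of ones to the top, this minor is the standard verification determinant of $(1,t^{e_2},\ldots,t^{e_k})$ for the shortened sequences $I',J'\in[n]^{2k-1}$. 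Here $|I'\cap J'|=|I\cap J|<k$ since $I_s\neq J_s$, so the first hypothesis makes it nonzero.

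Only $e_0\notin\{0,e_2,\ldots,e_k\}$ is used, and it comes for free. Your normalization $\va_{\vj_\beta^*}=\mathbf e_\beta$ already gives $g$ zero coefficient at every $\vz^{\vj_\beta^*}$, and $g\neq 0$ because $1\notin\operatorname{Span}_{\F}\{f_i\}$. So no further reduction of $g$ is needed. You just have to choose $N$ larger than the degree of the leading monomial of $g$, so that its image has lowest term a nonzero multiple of $t^{e_0}$. Your additional claim that $e_0$ can be made smaller than $e_2,\ldots,e_k$ is false in general. For $f_1=1+z^5$ and $f_2=z$, the elements of the span with constant term $1$ are $1+z^5+cz$, and requiring $e_0\neq e_2=1$ forces $c=0$ and $e_0=5$. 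That claim is also unnecessary.
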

		\begin{proof}
		Let $F$ be an MDS tuple. Then the components $f_1,\ldots,f_k$ are linearly independent. We apply \cref{lm:coefficient_choose} to identify the leading multi-indices $\vj_1^*<\cdots<\vj_k^*$. As discussed above, we may apply an invertible change of basis to the span of $F$ so that the corresponding leading coefficient vectors become the standard basis vectors. Following this transformation, the leading term of the $i$-th component $f_i(\vz)$ is exactly the monomial $\vz^{\vj_i^*}$.

		We now employ a weighted substitution to map these multivariate power series to univariate ones while preserving their leading terms. For a positive integer vector $\mathbf{w}=(w_1,\ldots,w_d)$, define the weight of $\vz^{\vj}$ by
		$$|\vj|_{\mathbf{w}}=\sum_{m=1}^d w_mj_m.$$
		Write $|\vj|=\sum_mj_m$ and let $D_0=\max_i|\vj_i^*|$. Since the graded lexicographic order first compares total degrees and only finitely many monomials have degree at most $D_0$, we may choose $\mathbf{w}$ so that every $\vz^{\vj_i^*}$ has smaller weight than all the later monomials appearing in $f_i$. Indeed, first choose weights that preserve the lexicographic order among monomials of the same degree at most $D_0$, and then add the same sufficiently large integer to all the weights. This leaves comparisons within the same degree unchanged and ensures that every monomial whose degree exceeds $|\vj_i^*|$ has weight larger than $|\vj_i^*|_{\mathbf{w}}$, for every $i$. For each $\vz_u$, where $u\in[n]$, substitute
		$$z_{u,m}\mapsto t_u^{w_m}\quad\text{for }m\in[d].$$
		Under this map, $\vz_u^{\vj}$ becomes $t_u^{|\vj|_{\mathbf{w}}}$. Let $e_i=|\vj_i^*|_{\mathbf{w}}$. The weights are chosen so that $e_1<\cdots<e_k$. If $\widetilde F(t_u)$ denotes the tuple obtained from $F(\vz_u)$ by this substitution, then its $i$-th component has the form
		$$\widetilde f_i(t_u)=t_u^{e_i}+\sum_{e>e_i}c_{i,e}t_u^e.$$

		Suppose first that $1\in\operatorname{Span}_{\F}\{f_1,\ldots,f_k\}$. Choose $f_1=1$ and let the other basis elements have zero constant term. Applying the preceding leading-term construction to this basis gives $e_1=0$. For fixed increasing subsequences $I,J\in[n]^{2k-1}$ with $|I\cap J|<k$, let $D(\vz_1,\ldots,\vz_n)$ be the corresponding verification determinant:
		$$D=\det\begin{pmatrix}
			1&\cdots&1\\
			f_2(\vz_{I_1})&\cdots&f_2(\vz_{I_{2k-1}})\\
			\vdots&\ddots&\vdots\\
			f_k(\vz_{I_1})&\cdots&f_k(\vz_{I_{2k-1}})\\
			f_2(\vz_{J_1})&\cdots&f_2(\vz_{J_{2k-1}})\\
			\vdots&\ddots&\vdots\\
			f_k(\vz_{J_1})&\cdots&f_k(\vz_{J_{2k-1}})
		\end{pmatrix}.$$
		Let $\widetilde D(t_1,\ldots,t_n)$ be the determinant obtained from $D$ by the above substitution. To isolate the leading monomials, replace every $t_u$ with $\beta t_u$, where $\beta$ is a formal variable. Factoring out $\beta^{e_i}$ from the two rows associated with $\widetilde f_i$, for every $2\leq i\leq k$, gives
		\begin{align*}
			\widetilde D(\beta t_1,\ldots,\beta t_n)
			={}&\beta^{2\sum_{i=2}^ke_i}
			\det\begin{pmatrix}
				1&\cdots&1\\
				t_{I_1}^{e_2}&\cdots&t_{I_{2k-1}}^{e_2}\\
				\vdots&\ddots&\vdots\\
				t_{I_1}^{e_k}&\cdots&t_{I_{2k-1}}^{e_k}\\
				t_{J_1}^{e_2}&\cdots&t_{J_{2k-1}}^{e_2}\\
				\vdots&\ddots&\vdots\\
				t_{J_1}^{e_k}&\cdots&t_{J_{2k-1}}^{e_k}
			\end{pmatrix}
			+\text{terms of higher degree in }\beta.
		\end{align*}
		By the first assumption, the leading determinant is nonzero. Thus, $\widetilde D(\beta t_1,\ldots,\beta t_n)$ is nonzero. Since it is obtained from $D$ by specialization, the original determinant $D$ is also nonzero.

		If $1\notin\operatorname{Span}_{\F}\{f_1,\ldots,f_k\}$, the same argument applies to the strict verification determinant. When all constant terms vanish, its lowest-degree coefficient in $\beta$ is the strict verification determinant for $(t^{e_1},\ldots,t^{e_k})$, which is nonzero by the second assumption. Otherwise, choose a leading-term basis in which one element has constant term $1$ and the others have zero constant term, and let $p>0$ be the smallest exponent of a nonconstant term in that element after substitution. Subtract its upper row from the corresponding lower row. After interchanging $I$ and $J$ if necessary, choose $J_s\notin\Set(I)$; after the same substitution and scaling, the coefficient of $t_{J_s}^{p}$ in the resulting leading determinant is a standard verification determinant for $(1,t^{e_1},\ldots,t^{e_{k-1}})$, where $e_1,\ldots,e_{k-1}$ are the positive leading exponents of the remaining basis elements. It is therefore nonzero by the first assumption. Hence the strict verification determinant is nonzero in both cases, completing the proof.
	\end{proof}

	\subsection{Monomial Codes}\label{subsection3.4}
	We now verify the base case of our reduction: codes generated by univariate monomials. This family includes the generalized Reed-Solomon codes.
	
	\begin{definition}[Generic Reed-Solomon Code, \cite{brakensiek2023generic}]
		Let $\F$ be an arbitrary field. For a positive integer $n\in \N$, let $\alpha_1,\dots,\alpha_n$ be algebraically independent variables, and let $\K$ be the field of fractions of the polynomial ring $\F[\alpha_1, \ldots, \alpha_n]$. A \emph{generic $[n,k]$ Reed-Solomon code} is the subspace $\cC_{gen} \subseteq \K^n$ generated by the matrix
		\begin{equation}\label{Matrix:Generic_Reed-Solomon}
			\begin{pmatrix}
				1 & 1 & \cdots & 1 \\
				\alpha_1 & \alpha_2 & \cdots & \alpha_n \\
				\vdots & \vdots & \ddots & \vdots \\
				\alpha_1^{k-1} & \alpha_2^{k-1} & \cdots & \alpha_n^{k-1}
			\end{pmatrix}
		\end{equation}
	\end{definition}
	
	\begin{remark}
		The variables $\alpha_1, \dots, \alpha_n$ represent generic evaluation points. By substituting these variables with specific, distinct elements from a finite field, one recovers a standard Reed-Solomon code. 
	\end{remark}
	
	In their foundational work, Con, Shpilka, and Tamo \cite{con2023reed} analyzed the verification matrices of generic Reed-Solomon codes. Specifically, as detailed in the extended preprint of their paper, they demonstrated a crucial algebraic property: if a code is generated by the matrix \eqref{Matrix:Generic_Reed-Solomon} and the elements $\alpha_1,\dots,\alpha_n$ are treated as formal variables, then for any two increasing subsequences agreeing on at most $k-1$ coordinates, the determinant of the corresponding verification matrix is nonzero as a polynomial.
	By combining this non-vanishing determinant property with the rank condition established in \cref{lm:condition}, we immediately obtain the following consequence.
	
	\begin{lemma}\label{lm:generic_RS} 
		A generic $[n,k]$ Reed-Solomon code is HS-optimal.
	\end{lemma}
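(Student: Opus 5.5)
The plan is to check, for the generic Reed--Solomon code, exactly the hypothesis of \cref{lm:condition} in the case $\One\in\cC$ with $\ell=2k-1$. The conclusion is then read off from that lemma, applied over the field $\K=\F(\alpha_1,\ldots,\alpha_n)$. The first row of the matrix \eqref{Matrix:Generic_Reed-Solomon} is already $\One$, so no row transformation is needed. Deleting that row gives $G'_i=(\alpha_i,\ldots,\alpha_i^{k-1})^T$. The verification matrix $M_{k,2k-1,I,J}$ is therefore the square $(2k-1)\times(2k-1)$ matrix with first row all ones, then the rows $\alpha_{I_r}^{j}$ for $1\le j\le k-1$, then the rows $\alpha_{J_r}^{j}$ for $1\le j\le k-1$. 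Full row rank of this matrix is the same as its determinant being nonzero in $\K$, that is, nonzero as a polynomial in $\alpha_1,\ldots,\alpha_n$.

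Next I reduce the rank hypothesis to a purely combinatorial one. \Cref{lm:condition} only asks about pairs with $\rank(G_{I\cap J})<k$. Any $k$ columns of the generic RS matrix form a Vandermonde matrix in distinct indeterminates, which is invertible over $\K$. Hence $\rank(G_{I\cap J})<k$ holds exactly when $|I\cap J|\le k-1$. So it suffices to show that for all increasing $I,J\in[n]^{2k-1}$ agreeing on at most $k-1$ coordinates, the determinant $\det M_{k,2k-1,I,J}$ is a nonzero polynomial. This is precisely the non-vanishing statement of Con, Shpilka and Tamo~\cite{con2023reed} quoted just above, so I invoke it. With that in hand, \cref{lm:condition} gives $d_I(\cC_{gen})\ge 2(n-2k+2)$. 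This means $\cC_{gen}$ corrects $n-2k+1$ insdel errors, which is HS-optimality (for $n\ge 2k$, so that $\ell\le n$).

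The only substantive step is the non-vanishing of these determinants, which we are taking from \cite{con2023reed}. If I had to argue it myself, I would use their approach: expand the determinant as a polynomial in the $\alpha$'s and exhibit a monomial, extremal under a suitable order, that arises from a unique choice of permutation. One processes the indices from the largest downwards and assigns the highest available powers greedily. The condition $|I\cap J|\le k-1$ is what prevents the assignment from forcing two identical columns. This uniqueness argument is the expected obstacle. Since the paper treats the result as established, I would simply cite it and keep the proof to the reduction above.
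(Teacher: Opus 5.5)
Your proposal is correct and follows the paper's argument: the paper also obtains this lemma by combining the Con--Shpilka--Tamo result, that the square verification determinants are nonzero polynomials for pairs agreeing on at most $k-1$ coordinates, with \cref{lm:condition} at $\ell=2k-1$. Your explicit Vandermonde argument that $\rank(G_{I\cap J})<k$ if and only if $|I\cap J|\le k-1$ is the same reduction the paper spells out at the start of the proof of \cref{thm:monomial}.
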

	
	We now extend this result to monomial codes, where the generators are arbitrary distinct monomials $t^{e_1}, \dots, t^{e_k}$ rather than consecutive powers $1, t, \dots, t^{k-1}$. This generalization is necessary because the reduction in \cref{subsection3.3} produces non-consecutive exponents.

	\begin{lemma}\label{thm:monomial}
		Let $n\geq 2k$ and let $0\leq e_1<e_2<\cdots<e_k$ be integers. Let $\alpha_1, \dots, \alpha_n$ be generic points. Consider the code generated by the matrix
		$$G=\begin{pmatrix}
			\alpha_1^{e_1} & \alpha_2^{e_1} & \cdots & \alpha_n^{e_1} \\
			\alpha_1^{e_2} & \alpha_2^{e_2} & \cdots & \alpha_n^{e_2} \\
			\vdots & \vdots & \ddots & \vdots \\
			\alpha_1^{e_k} & \alpha_2^{e_k} & \cdots & \alpha_n^{e_k}
		\end{pmatrix}.$$
		If $e_1=0$, this code is HS-optimal. If $e_1>0$, it is strict HS-optimal.
	\end{lemma}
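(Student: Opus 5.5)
Our plan is to reduce \cref{thm:monomial} to \cref{lm:generic_RS} by a degeneration argument, in the spirit of the leading-term reduction of \cref{lm:reduction_monomial}. By \cref{lm:condition}, it suffices to show that for all increasing $I,J\in[n]^{\ell}$ with $|I\cap J|<k$ the verification determinant $D_{I,J}(\alpha_1,\ldots,\alpha_n)$ is a nonzero polynomial. Here $\ell=2k-1$ if $e_1=0$ and $\ell=2k$ if $e_1>0$. Note that $\One\in\cC$ generically exactly when $e_1=0$: for $e_1>0$, a relation $\sum_i a_i\alpha^{e_i}=1$ in $\F[\alpha]$ is impossible since no exponent is $0$. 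Also, $|I\cap J|<k$ is the generic form of $\rank(G_{I\cap J})<k$, since $G$ is generically MDS (any $k\times k$ minor of a generalized Vandermonde matrix in distinct variables is a nonzero polynomial).

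The key step is to view $D_{I,J}$ as a polynomial in $\alpha_1,\ldots,\alpha_n$ and exhibit one monomial with nonzero coefficient. We expand by a generalized Laplace expansion along the row blocks. Take the rows $\alpha_{I_c}^{e_i}$ (top block) and $\alpha_{J_c}^{e_i}$ (bottom block). When $I_c=J_c$, the corresponding column has identical top and bottom entries in variable $\alpha_{I_c}$. Setting $\alpha_u=\gamma^{N^u}$ for a fast-growing weight (or using a lexicographic term order on the variables) isolates a leading term. The leading monomial of a generalized Vandermonde-type determinant whose entries are powers of distinct variables is governed only by the relative order of the exponents $e_i$, not by their values. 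We will make this precise by arguing that the combinatorial structure of the maximal-weight matching (which column gets which exponent in which block) depends only on the ranks of $e_1<\cdots<e_k$. Hence the leading coefficient is the same nonzero integer combination (up to sign) as in the case $e_i=i-1$, respectively $e_i=i$.

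For the Reed--Solomon case $e_i=i-1$, \cref{lm:generic_RS} gives nonvanishing. For the strict case we use the tuple $(t,t^2,\ldots,t^k)$. Factoring $\alpha_u$ out of each column is not directly available because each column mixes two variables. Instead we derive the strict case from the HS-optimal $(k+1)$-tuple $(1,t,\ldots,t^k)$. Delete the all-ones row and one column, then expand the $(2k+1)\times(2k+1)$ HS determinant for length $2k+1$ along the first row. By choosing an extra column index appropriately (after relabeling, so that $|I\cap J|$ stays below $k+1$), one cofactor is the strict determinant, while the others are distinguished by their degree in the extra variable. Alternatively we can appeal directly to the results of Ji et al.~\cite{ji2023strict} for strict HS-optimality of generic RS-type codes.

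The main obstacle is the claim that the leading coefficient under a suitable monomial order is independent of the actual values of the $e_i$ and depends only on their order. Coincidences among sums of exponents could create cancellations for particular $e_i$ that do not occur for consecutive exponents. To handle this, we first apply an additional weight to the variables, ordering them by a rapidly increasing sequence, so that the leading term is determined greedily column by column (largest-weight variable receives the largest available exponent). Greedy selection is order-theoretic, so the leading coefficient is a product of small determinants of $0/\pm1$ patterns that coincide with those of the RS case. If this greedy argument fails at columns shared by $I$ and $J$ (where the same variable appears twice), we will fall back on adapting the inductive proof of Con--Shpilka--Tamo directly to arbitrary exponents, since their argument only uses the ordering of exponents.
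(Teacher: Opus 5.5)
The gap is at exactly the step you flag as the main obstacle, and the greedy mechanism you propose does break at shared columns.

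\textbf{Your greedy coefficient can vanish.} With $\alpha_u=\gamma^{N^u}$, take $k=3$, $e_1=0$, $I=(1,2,3,7,8)$ and $J=(4,5,6,7,8)$, so $|I\cap J|=2$. The top-priority variables $\alpha_8$ and $\alpha_7$ each occur only in their agreement columns. The greedy step therefore gives one of these columns the top $e_3$-row and the other the bottom $e_3$-row, in either order. Both choices produce the same monomial, and the remaining greedy choices are identical. The two permutations differ by a transposition, since the block $\begin{pmatrix}\alpha_7^{e_3}&\alpha_8^{e_3}\\ \alpha_7^{e_3}&\alpha_8^{e_3}\end{pmatrix}$ is singular. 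Hence the greedy leading coefficient is $0$ for every choice of exponents, including the Reed--Solomon one. Your argument then gives no control over where the actual leading term sits. Locating it can require comparing exponent sums $e_a+e_b$ carried by a variable that occurs in an $I$-column and a different $J$-column, and comparisons such as $e_1+e_4$ versus $e_2+e_3$ are not determined by the order of the $e_i$.

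\textbf{The transfer from Reed--Solomon does not follow.} \cref{lm:generic_RS} says the Reed--Solomon determinant is a nonzero polynomial. It does not say that the particular coefficient your order isolates is nonzero. Which permutations collapse onto a common monomial depends on the actual values of the $e_i$, so nonvanishing does not transfer from one exponent pattern to another.

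\textbf{The strict-case reduction has a similar hole.} With a fresh variable $\beta$ in the extra column, expanding along the all-ones row only identifies the strict determinant as the $\beta^0$-coefficient of the nonzero polynomial $D^+$. Nothing rules out $\beta\mid D^+$.

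\textbf{What is missing.} Neither a term order nor the Reed--Solomon case is needed. A direct induction on $k$ works verbatim for arbitrary distinct exponents, in both cases, and this is the paper's proof. For $k\geq2$, choose positions $i\neq j$ with $J_j\notin\Set(I)$, and with $I_i=J_i$ if $|I\cap J|=k-1$, or $I_i\notin\Set(J)$ otherwise. Such a pair exists: otherwise the common elements of $I$ and $J$ would occupy the same $\ell-1$ positions in both, forcing $|I\cap J|=\ell-1\geq k$. Then $\alpha_{J_j}$ occurs only in column $j$ and $\alpha_{I_i}$ only in column $i$. So in the coefficient of $\alpha_{I_i}^{e_k}\alpha_{J_j}^{e_k}$, the factor $\alpha_{J_j}^{e_k}$ must come from the bottom $e_k$-row of column $j$, and then $\alpha_{I_i}^{e_k}$ must come from the top $e_k$-row of column $i$. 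This coefficient is therefore $\pm$ the verification determinant for the exponents $e_1,\ldots,e_{k-1}$ on the sequences $I',J'$ obtained by deleting positions $i$ and $j$, and $|I'\cap J'|<k-1$. The base cases are $\det(1)\neq0$ for $e_1=0$, and $\alpha_{I_1}^{e_1}\alpha_{J_2}^{e_1}-\alpha_{I_2}^{e_1}\alpha_{J_1}^{e_1}\neq0$ for $e_1>0$, since the two monomials coincide only if $I_1=J_1$. Your fallback of adapting Con--Shpilka--Tamo to arbitrary exponents amounts to this, but it carries all of the content. It also makes the detour through \cref{lm:generic_RS} and the separate strict-case reduction unnecessary.
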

	
		\begin{proof}
		We first prove the standard case $e_1=0$. Since the $\alpha_i$ are algebraically independent, every generalized Vandermonde determinant formed by $k$ columns of $G$ is nonzero. Thus, $\rank(G_{I\cap J})<k$ if and only if $|I\cap J|<k$. By \cref{lm:condition}, it suffices to prove that the standard verification determinant $\det(M_{k,2k-1,I,J})$ is nonzero whenever $I,J\in[n]^{2k-1}$ satisfy $|I\cap J|<k$.

		We proceed by induction on $k$. For $k=1$, the verification matrix is $(1)$, so its determinant is nonzero. Now suppose $k\geq2$. If $|I\cap J|=k-1$, choose a position $i$ with $I_i=J_i$ and a position $j$ such that $J_j\notin\Set(I)$. If $|I\cap J|<k-1$, choose different positions $i,j$ such that $I_i\notin\Set(J)$ and $J_j\notin\Set(I)$. Delete positions $i$ and $j$ from both sequences to obtain $I',J'\in[n]^{2k-3}$. By construction, $|I'\cap J'|<k-1$.

		Consider the coefficient of $\alpha_{I_i}^{e_k}\alpha_{J_j}^{e_k}$ in $\det(M_{k,2k-1,I,J})$. These factors must be selected from the two rows corresponding to $e_k$ and columns $i,j$, respectively. Hence this coefficient, up to sign, is
		$$\det(M_{k-1,2k-3,I',J'}),$$
		which is nonzero by the induction hypothesis. Therefore, $\det(M_{k,2k-1,I,J})\neq0$, and the code is HS-optimal.

		The case $e_1>0$ follows similarly, using the strict verification matrix.
	\end{proof}

    \begin{proof}[Proof of \cref{thm:Variety}]
		Suppose first that $X$ is HS-optimal. By definition, a matrix formed by generic points of $X$ has full row rank. Hence $X$ is not contained in a hyperplane passing through the origin, and \cref{lm:MDS_hyperplane} implies that $X$ is MDS.

		Conversely, suppose that $X$ is MDS. The zero-dimensional case is immediate, since an MDS point can occur only when $k=1$. We may therefore assume that $X$ has positive dimension. By \cref{thm:monomial}, univariate monomial tuples satisfy the required verification determinant conditions in both cases: the standard case when $e_1=0$ and the strict case when $e_1>0$. Applying \cref{lm:reduction_monomial} proves \cref{thm:Power_series}. Finally, \cref{lm:parameterize_variety} transfers the non-vanishing of these power series determinants back to generic points on $X$. Therefore, $X$ is HS-optimal.
	\end{proof}

	\section{Probabilistic Analysis of Random Codes on MDS Varieties}\label{Section4}
	
	In \cref{Section3}, we established that codes generated by generic points on an irreducible MDS variety that is contained in an affine hyperplane are half-Singleton optimal. However, ``generic points" inherently assume an infinite field, which is not realizable in practical storage or communication systems. To construct practical codes, we must restrict our attention to \emph{rational points} over a finite field $\F_q$.
	
	Our goal in this section is to prove that this optimal performance is asymptotically preserved with high probability, when the columns of the generator matrix are sampled uniformly from the rational points $S \subseteq X(\F_q)$. We first develop a general probabilistic framework and state the main theorem, and then provide detailed estimates for the relevant failure probabilities.
    In the standard case, when $X$ is contained in an affine hyperplane, we may normalize this hyperplane to $\{x_1=1\}$ by an invertible change of coordinates, which preserves all rank conditions. In the strict case, $X$ is not contained in any affine hyperplane, so no such normalization is made.

	\subsection{Probabilistic Analysis of Random Codes}\label{subsection:probabilistic_analysis}
	
	Let $X \subseteq \overline{\F}_q^k$ be an irreducible MDS variety contained in an affine hyperplane, and let $S\subseteq X(\F_q)$ be a set of $\F_q$-rational points on $X$. Let $G$ be a $k \times n$ matrix whose columns are sampled uniformly and without replacement from $S$.

    Fix $\eps > 0$.  We are going to show that with high probability, the code generated by $G$ is $\eps$-relaxed HS-optimal. Denote 
    $$h=2k-1 \quad \textrm{ and } \quad \ell' = h + \lfloor \eps n \rfloor.$$
	By \cref{lm:condition}, the code fails to be $\eps$-relaxed HS-optimal if there exists a pair of increasing subsequences $I, J$ of length $\ell'$ such that
    \begin{enumerate}
        \item $\rank(G_{I \cap J}) < k$, and
        \item the verification matrix $M_{k,\ell',I,J}$ fails to have full row rank.
    \end{enumerate}
    
    To analyze this failure probability, we introduce three parameters:
	\begin{equation}\label{eq:para_pick}
		\tau=\left\lfloor \frac{\eps n}{8}\right\rfloor, \quad \eps_0=\frac{\eps}{2},\quad r=\left\lfloor \frac{\eps^2 n}{8}\right\rfloor.
	\end{equation}
    We will use the following parameter inequality below:
	\begin{equation}\label{eq:para_condition}
		(1-\eps_0)(\ell'-\tau) \geq h+\frac{r+1}{\eps_0}.
	\end{equation}
	\begin{lemma}\label{lm:parameter_condition}
		Assume that $\ell'\leq n$. For all sufficiently large $n$, the parameters in \eqref{eq:para_pick}, together with $h=2k-1$, satisfy \eqref{eq:para_condition}.
	\end{lemma}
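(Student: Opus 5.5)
We verify \eqref{eq:para_condition} by a direct estimate, writing $\ell'=h+\lfloor\eps n\rfloor$ and discarding floors at a cost of $O(1)$. The key observation is that the ``excess'' $\ell'-h$ on the left-hand side is of order $\eps n$. The right-hand term $(r+1)/\eps_0$ is of order $(\eps^2 n/8)/(\eps/2)=\eps n/4$. The remaining losses, namely $\tau\approx\eps n/8$ and the factor $(1-\eps_0)$ applied to $\ell'$, must then fit into the leftover slack. The one subtlety is that $(1-\eps_0)$ multiplies $h$ as well, producing a loss $\eps_0 h=\eps h/2$. This loss is not automatically small compared with $\eps n$, so the hypothesis $\ell'\le n$, which gives $h\le n-\lfloor\eps n\rfloor$, must be used.

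\emph{Step 1 (expand).} The inequality \eqref{eq:para_condition} is equivalent to
\[
(1-\eps_0)(\lfloor\eps n\rfloor-\tau)-\eps_0 h \;\ge\; \frac{r+1}{\eps_0}.
\]
The quantities in it satisfy
\[
\lfloor\eps n\rfloor-\tau\ge \eps n-\tfrac{\eps n}{8}-1=\tfrac{7\eps n}{8}-1,
\]
\[
\eps_0 h\le \tfrac{\eps}{2}(n-\eps n+1),
\]
\[
\frac{r+1}{\eps_0}\le \frac{\eps^2 n/8+1}{\eps/2}=\frac{\eps n}{4}+\frac{2}{\eps}.
\]

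\emph{Step 2 (compare leading terms).} Substituting these bounds, it suffices to show
\[
\Bigl(1-\tfrac{\eps}{2}\Bigr)\tfrac{7\eps n}{8}-\tfrac{\eps}{2}(1-\eps)n-\tfrac{\eps n}{4} \;\ge\; C(\eps)
\]
for a constant $C(\eps)$ independent of $n$ (it absorbs the $-1$ terms and $2/\eps$). The coefficient of $n$ on the left is
\[
\eps\Bigl[\tfrac78-\tfrac{7\eps}{16}-\tfrac12+\tfrac{\eps}{2}-\tfrac14\Bigr]=\eps\Bigl[\tfrac18+\tfrac{\eps}{16}\Bigr]>0.
\]
So the left-hand side grows linearly in $n$ with positive slope, and it exceeds $C(\eps)$ for all sufficiently large $n$. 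This is where the bound $h\le n-\lfloor \eps n\rfloor$ is essential. If $h$ were left unbounded, the term $\eps_0 h$ could destroy the inequality.

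\emph{Main obstacle.} The only delicate point is the bookkeeping of the $\eps_0 h$ term. One must choose to bound $h$ via $\ell'\le n$ rather than trying to absorb it using the rate. The margin $\eps/8$ per unit of $n$ is small but positive, so no case analysis on $\eps$ is needed, beyond noting that $\eps<1$ is implicit in $\ell'\le n$ with $h\ge1$.
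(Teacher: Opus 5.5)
Your proof is correct and takes essentially the same route as the paper. Both arguments expand \eqref{eq:para_condition}, drop the floors at $O(1)$ cost, and use $\ell'\le n$ to control the $\eps_0$-loss, which leaves a margin linear in $n$. The only difference is cosmetic: you split off $\eps_0 h$ and bound $h\le n-\lfloor\eps n\rfloor$, whereas the paper bounds $\eps_0(\ell'-\tau)\le\eps_0 n$ directly, so it gets slope $\eps/8$ instead of your $\eps/8+\eps^2/16$.
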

	\begin{proof}
		Since $\ell'=h+\lfloor\eps n\rfloor$ and $\tau=\left\lfloor\eps n/8\right\rfloor$, and since $\ell'\leq n$, we have $\ell'-\tau\leq n$ and $r\leq \eps^2n/8$. Therefore,
		\begin{align*}
			(1-\eps_0)(\ell'-\tau)-h-\frac{r+1}{\eps_0}
			&=\lfloor\eps n\rfloor-\left\lfloor\frac{\eps n}{8}\right\rfloor-\frac{\eps(\ell'-\tau)}{2}-\frac{2(r+1)}{\eps}\\
			&\geq \frac{\eps n}{8}-1-\frac{2}{\eps}.
		\end{align*}
		The last expression is positive for all sufficiently large $n$, which proves \eqref{eq:para_condition}.
	\end{proof}
    
    The following lemma estimates the probability that verification matrix $M_{k,\ell,I,J}$ fails to have full row rank; its proof is deferred to \cref{subsection4.2}.
	
	\begin{lemma}\label{lm:prob_algorithm}
		Let $X\subseteq\overline{\F}_q^k$ be an irreducible MDS variety contained in an affine hyperplane, and let $S\subseteq X(\F_q)$ satisfy $|S|>n$. Fix $0<\eps<1$, set $h=2k-1$ and $\ell'=h+\lfloor\eps n\rfloor$, and define $\tau$, $\eps_0$, and $r$ as in \eqref{eq:para_pick}. Assume that $\ell'\leq n$ and $n$ is sufficiently large. Let $\ell$ satisfy $\ell'-\tau\leq\ell\leq\ell'$, and let $I,J\in[n]^\ell$ be fixed increasing subsequences with $|I\cap J|<k$. If the columns of $G$ are sampled uniformly without replacement from $S$, then the probability that $M_{k,\ell,I,J}[G]$ fails to have full row rank $h$ is at most
		\begin{equation*}
			2^{2k+r-1}\left(\frac{(1/\eps_0+1)P_q(X,2)}{|S|-n}\right)^r,
		\end{equation*}
		where $P_q(X,d)$ denotes the maximum number of $\F_q$-rational points on $X$ that can lie in the zero set of a polynomial with coefficients in $\overline{\F}_q$, of degree at most $d$, and generically non-vanishing on $X$.
	\end{lemma}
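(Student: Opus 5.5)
We adapt the chain-decomposition and rank-certification strategy of Con, Guo, Li, and Zhang~\cite{con2024random}, replacing their root-counting on the projective line with the quantity $P_q(X,2)$.

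\textbf{Chain decomposition.} Fix $I,J\in[n]^\ell$ with $|I\cap J|<k$. Form the bipartite-type graph on positions $[\ell]$ in which position $s$ is linked to every position $s'$ with $I_s=J_{s'}$ or $J_s=I_{s'}$. Its connected components are paths (chains), because each index of $[n]$ occurs at most once in $I$ and at most once in $J$. Each chain involves a set of variable columns $G_u$, and distinct chains involve disjoint sets of variables. Positions with $I_s=J_s$ are singleton chains, and there are fewer than $k$ of them. We then use \eqref{eq:para_condition} (guaranteed by \cref{lm:parameter_condition}, since $\ell\ge\ell'-\tau$) to discard chains of length exceeding $1/\eps_0$. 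A counting argument shows that at most an $\eps_0$ fraction of the positions lie in long chains. Consequently, at least $h+(r+1)/\eps_0$ positions remain in short chains, and the number of short chains is at least $h+r$ (more precisely, enough to fill $h$ columns plus $r$ reserve chains).

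\textbf{Sequential certification.} We reveal the columns chain by chain, sampling without replacement from $S$. We maintain a set of selected columns of $M=M_{k,\ell,I,J}$ whose $h\times(\text{selected})$ submatrix has a nonzero minor of full column size; we call this the generically nonsingular invariant. By \cref{thm:Variety} (the minor conditions on $X$ proved via \cref{lm:reduction_monomial} and \cref{thm:monomial}), the relevant minor involving the next chain's columns, with earlier chains already specialized, is generically non-vanishing provided the earlier specializations were good. When processing a chain, we view the current minor as a polynomial $f(\vx,\vy)$, where $\vx$ ranges over the variables of the current chain and $\vy$ over the future columns. We apply \cref{lm:poly_degenerate} with $X$ replaced by $X^{m}$, where $m$ is the chain size. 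This yields a polynomial $g$, generically non-vanishing, such that any point avoiding $Z(g)$ keeps the invariant. To control the degree, we apply \cref{lm:poly_degenerate} one column at a time within the chain. Each column appears in at most two rows-blocks of $M$ (once via $I$, once via $J$), so its partial degree in the minor is at most $2$, coordinatewise linear in each occurrence. Hence each single-column step fails only if the sampled point lies in a zero set of degree at most $2$. Given the points already drawn, this happens with probability at most $P_q(X,2)/(|S|-n)$. A union bound over the at most $1/\eps_0+1$ columns of a short chain gives a per-chain failure probability of at most $(1/\eps_0+1)P_q(X,2)/(|S|-n)$.

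\textbf{Accounting for failures.} If a chain fails, we discard it and use a reserve chain in its place. Conditional on the past, each failure has the bound above. The final rank can be short of $h$ only if at least $r$ chains fail among the at most $h+r$ attempts; otherwise the $h$ successful chains (or columns) certify full row rank $h$. Bounding the number of possible failure patterns by $2^{h+r}=2^{2k+r-1}$ and multiplying the conditional probabilities yields the stated bound.

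\textbf{Main obstacle.} The delicate step is ensuring that a full-size minor exists which stays generically non-vanishing after partial specialization. In particular, within a chain the same column $G_u$ appears in two different columns of $M$, so success must be certified per chain rather than per column. The verification determinants in \cref{thm:Variety} must also be set up so that the chains chosen for certification satisfy the $|I\cap J|<k$ hypothesis when restricted. I would handle this by choosing, greedily, a selection of chains whose restricted subsequences $I',J'$ have total length $h$ and agree in fewer than $k$ coordinates. \cref{thm:Variety} then guarantees that the corresponding square verification determinant is generically nonzero. Each chain specialization is subsequently handled by iterating \cref{lm:poly_degenerate} as described above.
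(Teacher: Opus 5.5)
Your outline (chain decomposition, column-by-column application of \cref{lm:poly_degenerate} with the degree-$2$ bound, a reserve of chains, and a union bound over failure patterns) follows the paper's structure. However, the step that makes the reserve usable is missing. Your per-chain bound $(1/\eps_0+1)P_q(X,2)/(|S|-n)$ is valid only if, just before the chain is exposed, the current $h\times h$ determinant, with all previously accepted chains specialized, is generically non-vanishing in the remaining variables. After a failure you ``use a reserve chain in its place.'' This changes the determinant, while the earlier chains stay specialized at sampled values that were certified only for the old determinant. \cref{thm:Variety}, which you invoke in your last paragraph, gives generic non-vanishing only for the unspecialized determinant. It says nothing about the new determinant after these specializations. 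If that determinant were identically zero in the remaining variables, every later chain would fail with probability $1$, and the product of conditional probabilities collapses.

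The paper closes this gap with a re-indetermination argument (\cref{lm:nonsingular}). Chains are sorted by size with Type I chains first, and the last (largest) chains form the bank. A failed chain $Q_a$ is replaced by the smallest $|Q_a|$ positions of a fresh bank chain. The new block is then obtained from the old one by renaming fresh variables: reverse the order for opposite orientation, or identify two variables when the old chain is Type I. Hence $\Delta_{\mathrm{old}}=\pm\sigma(\Delta_{\mathrm{new}})$. Since $\Delta_{\mathrm{old}}$ was generically non-vanishing before the failed chain was exposed, so is $\Delta_{\mathrm{new}}$. This is why the replacement rule must preserve size and shape, which your scheme leaves unspecified. Such a deterministic rule is also what makes the failure pattern determine which chain occupies which slot (\cref{lm:determined}), and your union bound over patterns needs that.

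A secondary error: your claim that a counting argument puts at most an $\eps_0$ fraction of positions in long chains is false. For $I=(2,\dots,\ell+1)$ and $J=(1,\dots,\ell)$ we have $|I\cap J|=0$, yet the whole pair is a single chain, so discarding long chains discards everything. Long chains must instead be cut into pieces of length at most $1/\eps_0$ by deleting one position between consecutive pieces. This keeps the pieces variable-disjoint and loses only an $\eps_0$ fraction of positions, as in \cref{lm:split}.
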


    For $k\leq t\leq n$, let $p_t(S)$ denote the probability that a
	$k\times t$ matrix whose columns are sampled uniformly without
	replacement from $S$ fails to have full row rank $k$.

    \begin{theorem}\label{thm:main_prob}
		Let $X$, $S$, and the parameters $\eps$, $\ell'$, $\tau$, $\eps_0$, and $r$ be as in \cref{lm:prob_algorithm}.
		Let $G$ be a $k\times n$ matrix with columns sampled uniformly and without replacement from $S$. Then the code generated by $G$ is $\eps$-relaxed HS-optimal with probability at least
		\begin{equation*}
			1-2^{2n+2k+r} \left(\frac{(1/\eps_0+1)P_q(X,2)}{|S|-n}\right)^r-2^{2n}p_{k+\tau}(S)
		\end{equation*}
	\end{theorem}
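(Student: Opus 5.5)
The plan is a union bound over all pairs of increasing subsequences $(I,J)$ of length $\ell'$, with \cref{lm:prob_algorithm} handling the pairs of small agreement. By \cref{lm:condition} (the $\eps$-relaxed case, with $\ell=\ell'=2k-1+\lfloor\eps n\rfloor$ and $\One\in\cC$, since $X$ lies in an affine hyperplane), the code fails to be $\eps$-relaxed HS-optimal only if some pair $I,J\in[n]^{\ell'}$ has $\rank(G_{I\cap J})<k$ while $M_{k,\ell',I,J}$ is not of full row rank. The number of pairs $(I,J)$ is at most $\binom{n}{\ell'}^2\le 2^{2n}$, and this produces the $2^{2n}$ factors in the statement. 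For a fixed pair we split into two cases according to $t=|I\cap J|$.

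\textbf{Case $t\ge k+\tau$.} Here the bad event forces $\rank(G_{I\cap J})<k$. Any $k+\tau$ columns of $G_{I\cap J}$ are distributed as $k+\tau$ points sampled uniformly without replacement from $S$, and a rank deficiency of $G_{I\cap J}$ implies a rank deficiency of that submatrix. So the event has probability at most $p_{k+\tau}(S)$. Summing over at most $2^{2n}$ pairs gives the term $2^{2n}p_{k+\tau}(S)$.

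\textbf{Case $t<k+\tau$.} We reduce to \cref{lm:prob_algorithm}, which requires $|I\cap J|<k$.
\begin{itemize}
\item If $t<k$, apply \cref{lm:prob_algorithm} directly with $\ell=\ell'$.
\item If $k\le t<k+\tau$, delete $t-k+1\le\tau$ agreeing positions from both $I$ and $J$, giving subsequences $\tilde I,\tilde J$ of length $\ell=\ell'-(t-k+1)\in[\ell'-\tau,\ell']$ with $|\tilde I\cap\tilde J|=k-1<k$.
\end{itemize}
The matrix $M_{k,\ell,\tilde I,\tilde J}$ is $M_{k,\ell',I,J}$ with some columns removed. Hence if the former has full row rank $h$, so does the latter. (Deleting a common entry could create new coincidences only at equal positions, but it removes the same position index from both sequences, so the agreement count drops by exactly one each time.) Thus the failure probability for this pair is at most the bound of \cref{lm:prob_algorithm}, namely $2^{2k+r-1}\bigl(\frac{(1/\eps_0+1)P_q(X,2)}{|S|-n}\bigr)^r$. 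Multiplying by $2^{2n}$ pairs gives at most $2^{2n+2k+r}(\cdots)^r$, with room to spare for the factor $2$ relative to $2^{2k+r-1}$.

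Combining the two cases by a union bound gives the stated lower bound on the success probability. I expect no real obstacle, since the heavy lifting sits in \cref{lm:prob_algorithm}. The point needing care is the reduction step in the second case: the deleted columns must be agreeing positions, so that the resulting length $\ell$ stays within the window $[\ell'-\tau,\ell']$ where \cref{lm:prob_algorithm} applies. One must also check that the hypotheses $|S|>n$ and $\ell'\le n$ carry over unchanged.
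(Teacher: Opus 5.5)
Your proposal is correct and takes essentially the paper's route: a union bound over at most $2^{2n}$ pairs, the split by $|I\cap J|$ (you merge the paper's $\mathcal{P}_1$ and $\mathcal{P}_2$ into one case), deletion of $|I\cap J|-k+1\le\tau$ agreeing positions to reach \cref{lm:prob_algorithm}, and the bound $p_{k+\tau}(S)$ when the agreement is large. The only difference is that you count the first two classes together and get the slightly sharper constant $2^{2n+2k+r-1}$, whereas the paper counts $\mathcal{P}_1$ and $\mathcal{P}_2$ separately and gets $2^{2n+2k+r}$.
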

	\begin{proof}
		Let $\mathcal{E}$ denote the failure event that the code is not $\eps$-relaxed HS-optimal. For any fixed pair of increasing subsequences $(I, J) \in [n]^{\ell'} \times [n]^{\ell'}$, let $\mathcal{A}_{I,J}$ be the event that $\text{rank}(G_{I \cap J}) < k$, and let $\mathcal{B}_{I,J}$ be the event that the verification matrix $M_{k,\ell',I,J}[G]$ fails to have full row rank $2k-1$.
        If $\rank(G)<k$, then both $\mathcal{A}_{I,J}$ and $\mathcal{B}_{I,J}$ occur for every pair $(I,J)$, so this case is already included in the union of failure events below and requires no separate treatment.
		
		By \cref{lm:condition}, the failure event $\mathcal{E}$ occurs only if both $\mathcal{A}_{I,J}$ and $\mathcal{B}_{I,J}$ occur for at least one pair $(I,J)$. Therefore,
		$$\mathcal{E}\subseteq\bigcup_{(I,J)}(\mathcal{A}_{I,J}\cap\mathcal{B}_{I,J}).$$
		Throughout the proof, $I$ and $J$ range over increasing subsequences in $[n]^{\ell'}$. Partition their pairs into
		\begin{align*}
			\mathcal{P}_1&=\{(I,J):|I\cap J|<k\},\\
			\mathcal{P}_2&=\{(I,J):k\leq|I\cap J|<k+\tau\},\\
			\mathcal{P}_3&=\{(I,J):|I\cap J|\geq k+\tau\}.
		\end{align*}
		Applying the union bound gives
		\begin{align*}
			\Pr(\mathcal{E})\leq {}&\sum_{(I,J)\in\mathcal{P}_1}
			\Pr(\mathcal{A}_{I,J}\cap\mathcal{B}_{I,J})\\
			&+\sum_{(I,J)\in\mathcal{P}_2}
			\Pr(\mathcal{A}_{I,J}\cap\mathcal{B}_{I,J})\\
			&+\sum_{(I,J)\in\mathcal{P}_3}
			\Pr(\mathcal{A}_{I,J}\cap\mathcal{B}_{I,J}).
		\end{align*}
		
		For any pair $(I, J) \in \mathcal{P}_1$, \cref{lm:prob_algorithm} directly bounds $\Pr(\mathcal{B}_{I,J})$, yielding:
		\begin{align*}
			\Pr(\mathcal{A}_{I,J}\cap\mathcal{B}_{I,J})
			&\leq\Pr(\mathcal{B}_{I,J})\\
			&\leq2^{2k+r-1}
			\left(\frac{(1/\eps_0+1)P_q(X,2)}{|S|-n}\right)^r.
		\end{align*}
		
		For any pair $(I,J)\in\mathcal{P}_2$, set $d=|I\cap J|-k+1$. Since $k\leq |I\cap J|<k+\tau$, we have $1\leq d\leq\tau$. Delete from both subsequences the same $d$ positions on which they agree, and denote the resulting subsequences by $I'$ and $J'$. Their length is $\ell=\ell'-d\geq\ell'-\tau$, and $|I'\cap J'|=k-1$. Let $\mathcal{B}'_{I',J'}$ be the event that $M_{k,\ell,I',J'}[G]$ fails to have full row rank $h$. The shortened matrix is obtained from the original matrix by deleting columns, so $\mathcal{B}_{I,J}\subseteq\mathcal{B}'_{I',J'}$. Therefore, \cref{lm:prob_algorithm} applies to $\mathcal{B}'_{I',J'}$ and gives the same probability bound as in Case 1.
		
		The total number of possible sequence pairs in $\mathcal{P}_1$ is bounded by $\binom{n}{\ell'}^2 \le 2^{2n}$, and the number of pairs in $\mathcal{P}_2$ is similarly bounded by $2^{2n}$. Applying the union bound to $\mathcal{P}_1$ and $\mathcal{P}_2$ separately, the combined failure probability for the first two cases is:
		\begin{align*}
			&\sum_{(I,J)\in\mathcal{P}_1}
			\Pr(\mathcal{A}_{I,J}\cap\mathcal{B}_{I,J})
			+\sum_{(I,J)\in\mathcal{P}_2}
			\Pr(\mathcal{A}_{I,J}\cap\mathcal{B}_{I,J})\\
			&\qquad\leq2^{2n+2k+r}
			\left(\frac{(1/\eps_0+1)P_q(X,2)}{|S|-n}\right)^r.
		\end{align*}
		
		For any pair $(I, J) \in \mathcal{P}_3$, we bound the joint probability by the marginal probability of the intersection rank failing: $\Pr(\mathcal{A}_{I,J} \cap \mathcal{B}_{I,J}) \le \Pr(\mathcal{A}_{I,J})$. Because $|I \cap J| \ge k + \tau$, the event $\mathcal{A}_{I,J}$ implies that any subset of $k+\tau$ columns chosen from this intersection must also fail to achieve full rank $k$. This probability is bounded above by the probability that a random $k\times(k+\tau)$ submatrix sampled from $S$ is singular. Thus:
		$$\Pr(\mathcal{A}_{I,J} \cap \mathcal{B}_{I,J}) \le \Pr(\mathcal{A}_{I,J}) \le p_{k+\tau}(S).$$
		
		Applying the union bound over all at most $2^{2n}$ possible pairs in $\mathcal{P}_3$ yields the final term $2^{2n} p_{k+\tau}(S)$. Adding this to the previous sum bounds the total failure probability $\Pr(\mathcal{E})$. The probability that the code succeeds is $1 - \Pr(\mathcal{E})$, which yields the stated bound and completes the proof.
	\end{proof}

    For curves, Bezout's theorem gives explicit bounds for both $P_q(X,2)$ and the rank failure probability $p_{k+\tau}(S)$, leading to the following specialization of \cref{thm:main_prob}.

    \begin{lemma}[Bezout's theorem]\label{lm:bezout}
		Let $X$ be an algebraic curve of degree $d$. If $f$ is a polynomial of degree $d'$ that is generically non-vanishing on $X$, then $f$ has at most $d\cdot d'$ zeros on $X$.
	\end{lemma}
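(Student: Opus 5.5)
The plan is to deduce the affine statement from the projective intersection theory of the closure of $X$, using Hilbert polynomials. Here ``curve'' means an irreducible variety of dimension one, and ``degree'' means the degree of its projective closure. Throughout we work over the algebraic closure $\overline{\F}$, since counting zeros over $\overline{\F}$ only strengthens the bound for rational points.

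\textbf{Step 1: the zero set is finite.} Let $X\subseteq\mathbb{A}^m$ be the irreducible curve and $f$ generically non-vanishing on $X$. Then $Z=X\cap V(f)$ is a proper Zariski-closed subset of $X$. Since $X$ is irreducible of dimension one, every proper closed subset of $X$ has dimension zero, so $Z$ is finite.

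\textbf{Step 2: pass to the projective closure.} Let $\overline X\subseteq\bbP^m$ be the projective closure of $X$ and $S=\overline{\F}[x_0,\ldots,x_m]/I(\overline X)$ its homogeneous coordinate ring. This ring is a domain because $\overline X$ is irreducible. Its Hilbert polynomial is $H_S(s)=ds+c$ for some constant $c$, since $\overline X$ is a curve of degree $d$. Let $F=x_0^{d'}f(x_1/x_0,\ldots,x_m/x_0)$ be the homogenization of $f$. Then $F$ is nonzero in $S$, because $f$ does not vanish identically on $X$, and $X$ is dense in $\overline X$. Since $S$ is a domain, multiplication by $F$ is injective $S_{s-d'}\to S_s$. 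Therefore
$$H_{S/FS}(s)=H_S(s)-H_S(s-d')=ds+c-\bigl(d(s-d')+c\bigr)=dd'$$
for all large $s$.

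\textbf{Step 3: compare with the number of zeros.} The projective scheme $\overline X\cap V(F)$ has the affine zeros $Z$ among its points. It is zero-dimensional by the same argument as in Step 1, applied to $\overline X$. For a zero-dimensional scheme, the constant Hilbert polynomial equals the total length, summed over all points. Each point contributes length at least $1$. Concretely, for large $s$ the evaluation map $(S/FS)_s\to\overline{\F}^{|Z|}$, dehomogenized at each point of $Z$, is surjective, because finitely many points impose independent conditions on forms of large degree. This gives
$$|Z|\le H_{S/FS}(s)=dd'.$$

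\textbf{Main obstacle.} The delicate point is Step 3: one must identify the Hilbert polynomial of $S/FS$ with something bounded below by the point count. The cleanest route is the surjectivity argument: choose, for each point of $Z$, a form of degree $s$ vanishing at all the other points but not at that one. Such forms exist once $s\ge|Z|-1$, for example products of linear forms. Alternatively, one can simply cite the refined Bezout inequality (Fulton, \emph{Intersection Theory}, Ex.~8.4.6), since this is a classical result.
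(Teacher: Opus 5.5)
Your proof is correct. The paper gives no proof of this lemma; it simply invokes the classical Bezout theorem. That is, it relies implicitly on the projective statement that $\overline X\cap V(F)$, counted with intersection multiplicities, has total degree $dd'$, and the affine bound follows by discarding multiplicities and points at infinity.

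Your Hilbert-polynomial argument proves exactly the inequality the paper uses, which is only needed to bound $P_q(X,2)\leq 2d$ and the number of sampled points on a hyperplane in \cref{coro:curve}. It needs no intersection multiplicities or lengths, and it works verbatim for singular irreducible curves. The citation gives the sharper equality with multiplicities, which is never needed.

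Two small remarks. First, the scheme-theoretic sentences in Step~3, and Step~1 itself, can be dropped. For any finite $Z'=\{P_1,\ldots,P_N\}\subseteq Z$, the map $(S/FS)_s\to\overline{\F}^{N}$, $G\mapsto(G(1,P_i))_{i}$, is well defined because $I(\overline X)$ and $FS$ vanish on $Z'$. It is surjective once $s\geq N-1$, via the forms $x_0^{\,s-N+1}\prod_{j\neq i}L_{ij}$ with $L_{ij}$ linear and $L_{ij}(1,P_j)=0\neq L_{ij}(1,P_i)$. The power of $x_0$ makes the degree exactly $s$ and is harmless because $x_0\neq 0$ at affine points. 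Taking $s$ large gives $N\leq dd'$, which yields finiteness and the bound at once.

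Second, you take $d$ to be the leading coefficient of the Hilbert polynomial of $\overline X$. The paper's appendix instead defines the degree of a curve through hyperplane-section divisors. The two agree by a standard fact that you should cite.

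In the paper's smooth setting there is an even shorter route. The divisor of the form $F$ on the smooth model is effective and equals $d'$ times a hyperplane section plus a principal divisor, so it has degree $dd'$. Every zero of $f=F/x_0^{d'}$ on $X$ lies in its support.
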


	\begin{corollary}\label{coro:curve}
		Under the assumptions of \cref{thm:main_prob}, suppose in addition that $X$ is an affine curve of degree $d$. Then the code generated by $G$ is $\eps$-relaxed HS-optimal with probability at least
		$$1-2^{2n+2k+r} \left( \frac{2d(1/\eps_0+1)}{|S|-n} \right)^r-2^{2n}\frac{q^k-1}{q-1}\left( \frac{d}{|S|} \right)^{k+\tau}.$$
		Moreover, if $d<k+\tau$, the probability is at least
		$$1-2^{2n+2k+r} \left( \frac{2d(1/\eps_0+1)}{|S|-n} \right)^r.$$
	\end{corollary}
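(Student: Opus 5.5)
The corollary follows by substituting two curve-specific estimates into \cref{thm:main_prob}: a bound on $P_q(X,2)$ and a bound on $p_{k+\tau}(S)$.

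\textbf{Bounding $P_q(X,2)$.} By \cref{lm:bezout}, a polynomial of degree at most $2$ that is generically non-vanishing on the curve $X$ of degree $d$ has at most $2d$ zeros on $X$. In particular it has at most $2d$ rational zeros, so $P_q(X,2)\le 2d$. Substituting this gives the first failure term
$$2^{2n+2k+r}\left(\frac{2d(1/\eps_0+1)}{|S|-n}\right)^r.$$

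\textbf{Bounding $p_{k+\tau}(S)$.} Let $t=k+\tau$. A $k\times t$ matrix with columns in $S$ fails to have full row rank exactly when all $t$ columns lie in some hyperplane $\{\vx:\va\cdot\vx=0\}$ through the origin, with $\va\neq\Zero$. Since the columns are $\F_q$-rational, we may take $\va\in\F_q^k$ up to scaling, which gives $(q^k-1)/(q-1)$ candidate hyperplanes. We union bound over these.

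Fix one such hyperplane $H$. Because $X$ is MDS, \cref{lm:MDS_hyperplane} says $X\not\subseteq H$. So the linear form $\va\cdot\vx$ is generically non-vanishing on the irreducible curve $X$, and by \cref{lm:bezout} it has at most $d$ zeros on $X$. Hence $|S\cap H|\le d$.

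The probability that all $t$ columns, sampled without replacement, land in $S\cap H$ is
$$\binom{|S\cap H|}{t}\Big/\binom{|S|}{t}\le \prod_{i=0}^{t-1}\frac{d-i}{|S|-i}\le (d/|S|)^t,$$
using $d\le|S|$ for the last step. Therefore
$$p_{k+\tau}(S)\le \frac{q^k-1}{q-1}\left(\frac{d}{|S|}\right)^{k+\tau}.$$
Multiplying by $2^{2n}$ gives the second failure term.

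\textbf{The case $d<k+\tau$.} Each hyperplane contains at most $d<t$ points of $S$. Since sampling is without replacement, the $t$ columns are distinct, so they cannot all lie in one hyperplane. Hence $p_{k+\tau}(S)=0$, which gives the second bound.

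The only step needing care is the use of the MDS property. It is what guarantees that every linear form $\va\cdot\vx$ is generically non-vanishing on $X$, which is the hypothesis \cref{lm:bezout} requires. The rest is direct substitution.
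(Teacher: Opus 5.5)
Your proposal is correct and matches the paper's proof essentially step for step. Bezout gives $P_q(X,2)\le 2d$ and, via the MDS property, $|S\cap H|\le d$ for each of the $(q^k-1)/(q-1)$ hyperplanes through the origin, and the case $d<k+\tau$ forces $p_{k+\tau}(S)=0$. Your hypergeometric ratio is the same computation as the paper's sequential exposure of conditional probabilities; the paper bounds each factor by $|S\cap H|/|S|$, which avoids needing $d\le|S|$, though that case is trivial anyway.
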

	\begin{proof}
		First, we bound the geometric parameter $P_q(X,2)$. Since $X$ is an irreducible curve of degree $d$, by \cref{lm:bezout}, a polynomial of degree $2$ that does not vanish identically on $X$ has at most $2d$ zeros on $X$. Hence,
		$$P_q(X,2)\leq 2d.$$
		
		Second, we bound the rank failure probability $p_{k+\tau}(S)$. If a $k\times t$ matrix fails to have full row rank, its columns lie in a hyperplane $H\subseteq\F_q^k$ through the origin. Since the MDS curve $X$ is not contained in such a hyperplane, \cref{lm:bezout} gives $|S\cap H|\leq d$. For a fixed $H$, expose the sampled points one at a time. If the first $i$ points all lie in $H$, then the conditional probability that the next point also lies in $H$ is at most
		$$\frac{|S\cap H|-i}{|S|-i}\leq\frac{|S\cap H|}{|S|}\leq\frac{d}{|S|}.$$
		Multiplying these conditional probabilities, the probability that all $t$ sampled points lie in $H$ is at most $(d/|S|)^t$.
		
		There are $(q^k-1)/(q-1)$ hyperplanes through the origin in $\F_q^k$. Taking $t=k+\tau$ and applying the union bound gives
		$$p_{k+\tau}(S)\leq \frac{q^k-1}{q-1}\left(\frac{d}{|S|}\right)^{k+\tau}.$$
		Combining these estimates yields the stated probability bound.
		
		If $d<k+\tau$, no hyperplane can contain $k+\tau$ distinct points. Thus, $p_{k+\tau}(S)=0$ trivially.
	\end{proof}

	\subsection{Chain Decomposition and Rank Certification}\label{subsection4.2}
	
	In this subsection, we prove \cref{lm:prob_algorithm} using the rank-certification framework of Con, Guo, Li, and Zhang \cite{con2024random}. For fixed increasing subsequences $I$ and $J$, the framework decomposes them into small, mutually disjoint chains that cover most of their positions and describe the local relations between the two subsequences. The chains are divided into a primary queue and a reserve bank. The algorithm examines the chains in order and tests whether each one certifies generic non-vanishing of the relevant determinant after the previously exposed variables have been specialized. If a test fails, the current chain is discarded and replaced by a fresh chain from the reserve bank. Consequently, if the rank certification ultimately fails, the algorithm must record a prescribed number of failed chain tests.
	
	The probability estimate follows by exposing the chains one at a time. Because the chains are disjoint, the variables in a fresh chain have not been exposed before and are sampled uniformly from the remaining points of $S$. For a fixed sequence of failed tests, the generic non-vanishing of the determinant and the degree bound in each column allow us to control the conditional probability that each new chain fails in terms of $P_q(X,2)$. Multiplying these conditional bounds gives an upper bound for the probability of the prescribed failure sequence. A union bound over all possible failure sequences and all pairs $(I,J)$ then gives the probability estimate in \cref{lm:prob_algorithm}. Our argument follows the general chain-decomposition and replacement strategy of \cite{con2024random}, adapted to codes obtained from points on an MDS variety. In this setting, the relevant failure probability is controlled through $P_q(X,2)$ by bounding the rational points of $X$ at which a specialization can fail, rather than by counting bad evaluation points in the Reed-Solomon setting.
    
    We now introduce the definitions needed for the chain decomposition.
	
	\begin{definition}[Subsequence Selection]
		For an increasing subsequence $I\in [n]^{\ell}$ and $P=\{i_1,\dots,i_{\ell'}\}\subseteq [\ell]$ with $i_1<\dots<i_{\ell'}$, denote by $I^P$ the increasing subsequence $(I_{i_1},\dots,I_{i_{\ell'}})$.
	\end{definition}
	
	The fundamental unit of our analysis is the chain, which captures the local coupling between two subsequences.
	\begin{definition}[Chain]
		Let $I, J \in [n]^\ell$ be two increasing subsequences. We say the pair $(I, J)$ is a \emph{chain} if one of the following conditions holds:
		\begin{itemize}
			\item Type I: $\ell=1$ and $I_1 =J_1$.
			\item Type II: $\ell\geq2$ and either $I_i=J_{i+1}$ for all $1 \leq i<\ell$, or $I_{i+1} =J_i$ for all $1\leq i<\ell$.
		\end{itemize}
	\end{definition}
	
	For the replacement procedure, we need chains that do not share variables.
	
	\begin{definition}[Indices of variables]
		For increasing subsequences $I,J\in [n]^{\ell}$ and subset $P\subseteq [\ell]$, we define the set of variable indices involved in the block corresponding to $P$ as:
		$$\Var(I,J,P) \triangleq \Set(I^P) \cup \Set(J^P) \subseteq [n].$$ 
	\end{definition}
	
	\begin{definition}[$(I,J)$-disjointness]
		Let $I,J\in[n]^\ell$ be increasing subsequences. Two sets $P,P'\subseteq [\ell]$ are said to be \emph{$(I,J)$-disjoint} if the sets of variables $\Var(I,J,P)$ and $\Var(I,J,P')$ are disjoint.
	\end{definition}
	
	The following result gives disjoint chains of bounded size.
	
	\begin{lemma}[\cite{con2024random}, Lemma 30]\label{lm:split}
		Let $I, J\in [n]^\ell$ be increasing subsequences. Let $\eps_0 \in (0,1)$.
		Then there exist a subset $P\subseteq [\ell]$ of size at least $(1-\eps_0)\ell$ and a partition
		$P=P_1\sqcup P_2\sqcup \dots\sqcup P_s$ into nonempty sets, where $\sqcup$ denotes a disjoint union, such that
		\begin{enumerate}
			\item Each $(I^{P_i},J^{P_i})$ is a chain.
			\item The sets $P_1,\dots,P_s$ are mutually $(I,J)$-disjoint.
			\item $|P_i|\leq 1/\eps_0$ for every $i\in [s]$.
		\end{enumerate}
	\end{lemma}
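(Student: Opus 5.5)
The plan is to build the chains from a natural graph on the positions $[\ell]$ and then cut long chains into short pieces by deleting a small fraction of positions. Define a graph $\Gamma$ with vertex set $[\ell]$. Join two distinct positions $i\neq i'$ by an edge whenever they share a variable, that is, whenever $I_i=J_{i'}$ or $J_i=I_{i'}$. Since $I$ and $J$ are strictly increasing, each value of $[n]$ occurs at most once in $I$ and at most once in $J$. Position $i$ carries exactly the two variables $I_i$ and $J_i$, so it has degree at most $2$: the variable $I_i$ can reappear in $J$ at no more than one other position, and similarly for $J_i$. If $I_i=J_i$, neither variable can appear at any other position, so $i$ is isolated and $\{i\}$ gives a Type I chain.

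Next I show that every connected component is a path with a monotone ordering, and that it forms a Type II chain. Take a path $p_1,p_2,\ldots,p_m$ in $\Gamma$ and suppose $I_{p_1}=J_{p_2}$. The edge to $p_3$ must use the variable of $p_2$ not already used, so it is $I_{p_2}=J_{p_3}$. Inductively, all edges have the same orientation, $I_{p_j}=J_{p_{j+1}}$ for all $j$.

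Monotonicity follows from the same relations. If $p_2>p_1$, then $J_{p_3}=I_{p_2}>I_{p_1}=J_{p_2}$, so $p_3>p_2$, and by induction $p_1<p_2<\cdots<p_m$. The case $p_2<p_1$ is symmetric. Strict monotonicity along the path rules out cycles. Hence, writing a component as a sorted set $P=\{p_1<\cdots<p_m\}$, we get either $I^P_j=J^P_{j+1}$ for all $j$ or $I^P_{j+1}=J^P_j$ for all $j$. This is exactly a Type II chain. Moreover, distinct components are $(I,J)$-disjoint, because any shared variable would be an edge between them.

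Finally, I enforce the size bound. Put $L=\lfloor 1/\eps_0\rfloor\geq 1$. In each component path $p_1<\cdots<p_m$ with $m>L$, delete the positions $p_{L+1},p_{2(L+1)},\ldots$. Deleting $p_j$ removes both of its variables, and these are the only variables $p_j$ shares with its neighbours $p_{j-1}$ and $p_{j+1}$. So the resulting runs of consecutive path vertices are mutually $(I,J)$-disjoint, and disjoint from all other components. Each run is still a chain (of Type II, or Type I if it has length one, in which case the chain condition is vacuous only for Type II; a single position with $I_p\neq J_p$ is instead simply kept as a Type II piece of length one would not qualify, so such singleton runs are discarded as well, at cost at most one extra deleted position per run of length one). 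Each run has size at most $L\leq 1/\eps_0$. Per block of $L+1$ consecutive path positions we lose at most one position, so the kept set $P$ has $|P|\geq (1-\frac{1}{L+1})\ell\geq (1-\eps_0)\ell$, since $L+1>1/\eps_0$.

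The main obstacle is the structural claim that components are monotone chains: that the edge orientation propagates consistently along a path and that indices increase. Everything else is bookkeeping. One point needs extra care in the writeup: isolated positions with $I_p\neq J_p$, and singleton leftovers from cutting, are not chains. I would first check whether the chain definition admits them. If it does not, I would charge them to the deleted set: either choose the cut points to avoid leaving singletons, by shifting a cut by one, or cut every $L$ positions with $L=\lceil 1/\eps_0\rceil$ adjusted, so that the fraction of lost positions stays at most $\eps_0$.
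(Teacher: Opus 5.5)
Your structural analysis of $\Gamma$ is correct. Every vertex has degree at most $2$, and a vertex with $I_p=J_p$ is isolated. The orientation $I_{p_j}=J_{p_{j+1}}$ propagates along a path, and indices are strictly monotone along it, so there are no cycles. Every component with at least two vertices is a Type II chain. Deleting every $(L+1)$-st vertex of each path leaves mutually $(I,J)$-disjoint runs of size at most $L$ and loses fewer than $\eps_0\ell$ positions. The paper does not prove this lemma; it quotes it from \cite{con2024random}.

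The gap is in your last paragraph, and moving cuts cannot repair it. Under the chain definition as written in this paper, Type II requires length at least $2$. So a position $p$ with $I_p\neq J_p$ that is an isolated vertex of $\Gamma$ belongs to no chain, and all $\ell$ positions can be of this kind. For $I=(1,\dots,\ell)$ and $J=(\ell+1,\dots,2\ell)$, no nonempty $P_i$ gives a chain. Hence $P=\emptyset$ and $|P|\geq(1-\eps_0)\ell$ fails.

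Likewise, when $\eps_0>1/2$ every piece must have size $1$, so only the $|I\cap J|$ Type I positions could be kept. Even for smaller $\eps_0$, singleton leftovers cannot always be avoided. For $\eps_0=0.4$, $I=(2,3,4,5)$ and $J=(1,2,3,4)$, the only chains are the pairs $\{1,2\},\{2,3\},\{3,4\}$. Any two of them share a variable, so at most two of the four positions can be covered. Thus charging these positions to the deleted set is impossible: read literally, the statement is false.

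What is missing is that a single position with $I_p\neq J_p$ must count as a (Type II) chain. That is, Type II should allow length $1$, where the consecutive-agreement condition is vacuous. The paper silently relies on this elsewhere. At Line~\ref{line:17} of Algorithm~\ref{algorithm}, a size-one Type I chain may be replaced by the smallest element of a Type II bank chain. The proof of \cref{lm:nonsingular} then treats that piece as a two-variable block.

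With this convention, your argument is a complete proof exactly as constructed. Keep every run, including isolated vertices and singleton leftovers, and drop the repair step.
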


    Permute $P_1,\dots,P_s$, if necessary, so that $|P_1|\leq\dots\leq|P_s|$ and all Type I chains precede the Type II chains. Thus, if $i<j$ and $(I^{P_j},J^{P_j})$ is Type I, then $(I^{P_i},J^{P_i})$ is also Type I. This ordering ensures that a chain taken from the bank is at least as large as the chain it replaces.

    We now explain the role of \cref{eq:para_condition}. The chains retained by \cref{lm:split} contain at least $(1-\eps_0)\ell$ positions. We reserve the last $m$ chains as a bank, choosing $m$ so that their total size first reaches $r/\eps_0$. The next lemma shows that the bank contains enough chains for $r$ possible replacements, while the remaining chains still contain enough positions to form the required $h$ verification columns.
    
    \begin{lemma}\label{lm:sum-of-Pi}
		For the fixed parameters above and all sufficiently large $n$, there exists an index $m\in [s]$ such that
		$$\sum_{i=s-m+1}^s |P_i|\in [r/\eps_0, (r+1)/\eps_0].$$
		Furthermore, the remaining chains satisfy
		$$\sum_{i=1}^{s-m}|P_i|\geq h$$
		and we have $m\geq r$.
	\end{lemma}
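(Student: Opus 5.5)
We construct $m$ greedily from the end of the ordered list $P_1,\dots,P_s$ and check the three claims using \eqref{eq:para_condition} together with the bound $|P_i|\le 1/\eps_0$ from \cref{lm:split}. Here $\ell'-\tau\le\ell\le\ell'$ is the length of $I,J$, and $P=P_1\sqcup\cdots\sqcup P_s$ satisfies $|P|\ge(1-\eps_0)\ell\ge(1-\eps_0)(\ell'-\tau)$.

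For existence of $m$, define the tail sums
$$T_m=\sum_{i=s-m+1}^{s}|P_i|,\qquad m=0,1,\dots,s.$$
We have $T_0=0$, $T_s=|P|$, and $T_m-T_{m-1}=|P_{s-m+1}|\in[1,1/\eps_0]$. By \eqref{eq:para_condition},
$$T_s\ge(1-\eps_0)(\ell'-\tau)\ge h+\frac{r+1}{\eps_0}>\frac{r}{\eps_0},$$
so the threshold $r/\eps_0$ is eventually reached. Let $m$ be the least index with $T_m\ge r/\eps_0$. Then $m\ge1$, because $r\ge1$ for large $n$. Minimality gives $T_{m-1}<r/\eps_0$, so
$$T_m\le T_{m-1}+\frac{1}{\eps_0}<\frac{r+1}{\eps_0}.$$
Hence $T_m\in[r/\eps_0,(r+1)/\eps_0]$, and in particular $m\in[s]$.

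For the remaining chains, write
$$\sum_{i=1}^{s-m}|P_i|=|P|-T_m\ge(1-\eps_0)(\ell'-\tau)-\frac{r+1}{\eps_0}\ge h,$$
where the last inequality is again \eqref{eq:para_condition}. This is exactly why \eqref{eq:para_condition} was chosen in this form.

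For $m\ge r$: each of the $m$ bank chains has size at most $1/\eps_0$, so $r/\eps_0\le T_m\le m/\eps_0$, which gives $m\ge r$.

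The one step needing care is matching the hypotheses. \Cref{lm:split} must be applied to the length-$\ell$ pair with $\ell\ge\ell'-\tau$, and "sufficiently large $n$" must be taken large enough for \cref{lm:parameter_condition} to hold and for $r\ge1$ (so that $m\ge1$). Once these are in place the argument is a direct counting check.
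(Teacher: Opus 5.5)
Your proof is correct and follows essentially the same route as the paper. You choose the bank greedily from the tail and use $|P_i|\le 1/\eps_0$ to bound the overshoot. You invoke \eqref{eq:para_condition} both to guarantee the threshold is reached and to get $\sum_{i=1}^{s-m}|P_i|\ge h$, and you deduce $m\ge r$ from $r/\eps_0\le T_m\le m/\eps_0$. Your explicit minimality step ($T_{m-1}<r/\eps_0$) and your remark that $r\ge 1$ forces $m\ge 1$ spell out details the paper leaves implicit.
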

	\begin{proof}
	Since $\ell\geq\ell'-\tau$, the chains cover at least $(1-\eps_0)\ell\geq(1-\eps_0)(\ell'-\tau)$ positions. Thus, \cref{eq:para_condition} ensures that their total size exceeds $r/\eps_0$. Choose the last $m$ chains greedily until their total size first reaches $r/\eps_0$. Since each chain has size at most $1/\eps_0$, the selected total lies in $[r/\eps_0,(r+1)/\eps_0]$. Consequently,
	$$\sum_{i=1}^{s-m}|P_i|\geq (1-\eps_0)(\ell'-\tau)-\frac{r+1}{\eps_0}\geq h.$$
	Moreover, the selected chains have total size at most $m/\eps_0$, so their total size being at least $r/\eps_0$ implies $m\geq r$.
	\end{proof}
	
	To construct our verification framework, we must define the specific submatrices that the algorithm will evaluate.
	\begin{definition}\label{def:block_matrix}
		Let $n,k,\ell,s\in\N^+$, let $I,J\in[n]^\ell$ be increasing subsequences, and let $G$ be a $k\times n$ matrix. Let $P_1,\dots,P_s\subseteq [\ell]$ be pairwise disjoint nonempty sets such that each $(I^{P_i},J^{P_i})$ is a chain. Define the corresponding block matrix by horizontal concatenation:
		$$\widetilde{M}_{k,I,J,(P_i)_{i=1}^s}[G]
		=\begin{pmatrix}
			M_{k,|P_1|,I^{P_1},J^{P_1}}[G] &
			M_{k,|P_2|,I^{P_2},J^{P_2}}[G] &
			\cdots &
			M_{k,|P_s|,I^{P_s},J^{P_s}}[G]
		\end{pmatrix}.$$
	\end{definition}
	
	\begin{lemma}\label{lm:submatrix}
		Under the notation of \cref{def:block_matrix}, if $\widetilde{M}_{k,I,J,(P_i)_{i=1}^s}[G]$ has full row rank, then $M_{k,\ell,I,J}[G]$ also has full row rank.
	\end{lemma}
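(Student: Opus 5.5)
The plan is to observe that $\widetilde{M}_{k,I,J,(P_i)_{i=1}^s}[G]$ is, up to a permutation of its columns, simply the column submatrix of $M_{k,\ell,I,J}[G]$ indexed by $P=P_1\sqcup\cdots\sqcup P_s\subseteq[\ell]$. Full row rank then passes from the submatrix to the whole matrix, because appending columns cannot decrease the rank and the number of rows does not change.

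\emph{Identifying columns.} Write $P_i=\{p_1<\cdots<p_a\}$. By the definition of subsequence selection, $I^{P_i}=(I_{p_1},\dots,I_{p_a})$ and $J^{P_i}=(J_{p_1},\dots,J_{p_a})$. By \cref{def:verification_matrix}, the $b$-th column of $M_{k,|P_i|,I^{P_i},J^{P_i}}[G]$ is obtained by stacking $G_{I_{p_b}}$ on top of $G_{J_{p_b}}$ when $\One\notin\cC$. When $\One\in\cC$, it is the stack of $1$, $G'_{I_{p_b}}$ and $G'_{J_{p_b}}$. In either case this is exactly the $p_b$-th column of $M_{k,\ell,I,J}[G]$.

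\emph{Same row structure.} All the blocks are built from the same generator matrix $G$, with the same convention for the first row. So every block has the same number of rows as $M_{k,\ell,I,J}[G]$, namely $2k$ or $2k-1$, and the rows correspond to one another in the same order.

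\emph{Distinct columns.} The sets $P_i$ are pairwise disjoint, so the columns of the concatenation are distinct columns of $M_{k,\ell,I,J}[G]$. No index is repeated. The chain property and $(I,J)$-disjointness are not needed for this step.

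\emph{Concluding.} It follows that $\widetilde{M}=M_{k,\ell,I,J}[G]_{P}\,\Pi$ for some permutation matrix $\Pi$. Hence
$$\rank M_{k,\ell,I,J}[G]\geq\rank M_{k,\ell,I,J}[G]_P=\rank\widetilde{M}.$$
If $\widetilde{M}$ has full row rank, the right-hand side equals the number of rows. Since $\rank M_{k,\ell,I,J}[G]$ cannot exceed this common number of rows, $M_{k,\ell,I,J}[G]$ also has full row rank. Equivalently, any nonzero left-kernel vector of $M_{k,\ell,I,J}[G]$ would also annihilate every column of $\widetilde{M}$, which is impossible.

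There is no real obstacle here. The only point requiring care is bookkeeping in the case $\One\in\cC$: one has to check that the all-ones top row of each block lines up with the all-ones top row of $M_{k,\ell,I,J}[G]$. It does, because the same normalization of $G$ is used throughout.
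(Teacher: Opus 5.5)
Your proof is correct and follows the same argument as the paper. In both, $\widetilde{M}_{k,I,J,(P_i)_{i=1}^s}[G]$ is obtained from $M_{k,\ell,I,J}[G]$ by selecting and reordering columns, so its full row rank forces $M_{k,\ell,I,J}[G]$ to have full row rank. Your checks of the row alignment, including the all-one row when $\One\in\cC$, are the routine details the paper leaves implicit.
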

	\begin{proof}
		The block matrix is obtained from $M_{k,\ell,I,J}[G]$ by selecting and possibly reordering columns. Hence its row rank cannot exceed that of $M_{k,\ell,I,J}[G]$.
	\end{proof}

    We now give the rank-certification procedure. Its pseudocode is presented in Algorithm~\ref{algorithm}. We subsequently explain its initialization, chain verification, and termination.

	\begin{algorithm}[t]
		\caption{Rank Certification Procedure}
		\label{algorithm}
		\DontPrintSemicolon
		\KwIn{$n,k,\ell,r\in\N^+$, $h=2k-1$, an irreducible variety $X$ contained in an affine hyperplane, $(\alpha_1,\dots,\alpha_n)\in X^n$, $\eps_0\in (0,1)$, increasing subsequences $I,J\in[n]^{\ell}$, and ordered sets $P_1,\dots,P_s$ obtained from \cref{lm:split}.}  
		\KwOut{``SUCCESS'' or a sequence $(i_1,\dots,i_r)\in \{0,\dots,h-1\}^r$.}  
		\BlankLine
		$\cA\gets \emptyset$. \tcp*[f]{set of indices of assigned variables} \label{line:1}
		
		$c\gets 0$. \tcp*[f]{number of assigned columns} \label{line:2}
		
		Pick the smallest $m\in [s]$ such that $\sum_{i=s-m+1}^s |P_i|\in [r/\eps_0, (r+1)/\eps_0]$. \tcp*[f]{last $m$ chains form the bank} \label{line:3}
		
		$a\gets 1$. \tcp*[f]{current chain to assign} \label{line:4}
		
		$b\gets s-m+1$. \tcp*[f]{current chain in the bank} \label{line:5}
		
		$j\gets 0$. \tcp*[f]{number of failed attempts} \label{line:6}
		
		\lFor{$i=1$ to $s$}{$Q_i\gets P_i$.} \label{line:7}
		
		\While{$c<h$\label{line:8}} 
		{
			$\Delta\gets$ the determinant of the left $h\times h$ submatrix of $\widetilde{M}_{k,I,J,(Q_i)_{i=1}^{s-m}}[G]$. \label{line:9}
			
			$\cA'\gets \cA\cup\Var(I,J,Q_a)$. \label{line:10}
			
			$\overline{\Delta}\gets \Delta|_{G_i=\alpha_i \text{ for } i\in \cA'}$. \label{line:11}
			
			\tcp*[r]{assign the variables in the $a$-th chain}
			\uIf{$\overline{\Delta}$ is generically non-vanishing in the remaining variables\label{line:12}}
			{
				$\cA\gets \cA'$. \label{line:13}
				
				$c\gets c + |Q_a|$. \label{line:14}
				
				$a\gets a+1$. \label{line:15}
			}
			\Else(\tcp*[f]{replace the $a$-th chain by (part of) the $b$-th chain in the bank})
			{
				
				$Q_a\gets$ set of the smallest $|Q_a|$ elements in $Q_{b}$.
				\label{line:17} 
				
				$Q_b\gets \emptyset$. \label{line:18}
				
				$b\gets b+1$. \label{line:19}

				$j\gets j+1$. \label{line:20}
				
				$i_j\gets c$. \label{line:21}
				
				\lIf{$j=r$}{output $(i_1,\dots,i_r)$ and halt. \label{line:22}}
			}
		}\label{line:24}
		Output ``SUCCESS''. \label{line:25}
	\end{algorithm}
    \medskip
	
	The algorithm has three steps:
	\begin{enumerate}
		\item \textbf{Initialization:} Choose $m$ as in Line~\ref{line:3}. The first $s-m$ chains form the primary queue, and the last $m$ chains form the bank used for replacements.
		
		\item \textbf{Chain verification:} For the current chain $Q_a$, specialize $G_i=\alpha_i$ for every $i\in\Var(I,J,Q_a)$ and test whether the chosen determinant is generically non-vanishing in the remaining variables (Lines~\ref{line:9}-\ref{line:12}). If the test succeeds, the algorithm accepts this chain, increases $c$ by the number $|Q_a|$ of verification columns in its block, and advances to the next chain. If the test fails, the specialized determinant is identically zero in the remaining variables; the algorithm records $i_j=c$ and replaces $Q_a$ by a chain of the same size taken from the bank (Lines~\ref{line:17}-\ref{line:22}).
		
		\item \textbf{Termination:} The algorithm outputs ``SUCCESS'' once $c\geq h$. If the number of failed tests reaches $r$ first, it outputs $(i_1,\dots,i_r)$.
	\end{enumerate}

    The algorithm above is the transpose analogue of the rank-certification algorithm in \cite{con2024random}. There, the relevant matrix has one row for each position $t\in[\ell]$, whose entries are the powers of the scalar variables $X_{I_t}$ and $X_{J_t}$, and the algorithm assigns these scalar variables chain by chain. Here, $M_{k,\ell,I,J}[G]$ has one column for each position $t\in[\ell]$, formed from the vector variables $G_{I_t}$ and $G_{J_t}$; these generator-matrix columns are points of $X$. Thus, our procedure successively assigns columns $G_i=\alpha_i$ and considers the left $h\times h$ submatrix, whereas \cite{con2024random} assigns the $X_i$ and considers the top $h\times h$ submatrix. Apart from this transposition and the different variables being assigned, the chain decomposition, determinant checks, and replacement steps are the same.

	The correctness of \cref{algorithm} follows from the remaining properties below. They depend only on the sequence indices, so their proofs use the same combinatorial arguments as in \cite{con2024random}.
	
	The following lemma verifies the validity of the steps of \cref{algorithm} after its initialization. In particular, \cref{item:b} ensures that Line~\ref{line:9} is valid, while \cref{item:c,item:d} ensures that Lines~\ref{line:17}-\ref{line:18} are valid.
	\begin{lemma}\label{lm:basic-claims}
		During \cref{algorithm}, after Line~\ref{line:7}:
		\begin{enumerate}[(a)]
			\item\label{item:a}
			$|Q_i|=|P_i|$ for $i\in [s-m]$. In particular, $|Q_i|\leq 1/\eps_0$ for $i\in [s-m]$.
			\item\label{item:b}
			$\sum_{i=1}^{s-m}|Q_i|\geq h$.
			\item\label{item:c0}
			$1\leq a\leq s-m$ always holds at Line~\ref{line:9}.
			\item\label{item:c}
			$s-m+1\leq b\leq s$ always holds at Line~\ref{line:9}.
			\item\label{item:d}
			$|Q_{b}|\geq |Q_a|$ always holds at Line~\ref{line:9}, and consequently, also at Line~\ref{line:17}.
		\end{enumerate}
	\end{lemma}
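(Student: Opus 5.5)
The statement concerns invariants of the loop in \cref{algorithm}, so I would prove all five parts at once by induction on the number of passes through the while loop. The invariant to carry is: (i) $Q_i$ equals $P_i$ for $i\in[s-m]$, or equals the set of the smallest $|P_i|$ elements of some bank chain $P_{b'}$; (ii) the values $b=s-m+1+j$ with $j<r$ failed attempts so far; and (iii) $\sum_{i=1}^{a-1}|Q_i|=c<h$ whenever Line~\ref{line:9} is reached. The base case is immediate from Lines~\ref{line:1}--\ref{line:7}. Beyond this, I only need to check that Lines~\ref{line:13}--\ref{line:15} and Lines~\ref{line:17}--\ref{line:21} preserve the invariant.

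For \cref{item:a}, the only line that modifies some $Q_i$ with $i\le s-m$ is Line~\ref{line:17}. It replaces $Q_a$ by a subset of $Q_b$ of size exactly $|Q_a|$, so by induction $|Q_a|=|P_a|\le 1/\eps_0$ by \cref{lm:split}. This is a valid operation once \cref{item:d} is known. For \cref{item:b}, I apply \cref{item:a} together with \cref{lm:sum-of-Pi}: the sum is unchanged and equals $\sum_{i=1}^{s-m}|P_i|\ge h$. For \cref{item:c0}, $a$ increases only on success, and each success adds $|Q_a|\ge1$ to $c$, with $c=\sum_{i<a}|Q_i|$. If $a$ ever reached $s-m+1$, then $c=\sum_{i=1}^{s-m}|Q_i|\ge h$ by \cref{item:b}, and the loop would already have exited at Line~\ref{line:8}. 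Hence $a\le s-m$ at Line~\ref{line:9}.

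For \cref{item:c}, $b$ starts at $s-m+1$ and increases by one per failure. The algorithm halts at Line~\ref{line:22} once $j=r$, so at Line~\ref{line:9} we have $j\le r-1$, and therefore $b\le s-m+r\le s$ because $m\ge r$ by \cref{lm:sum-of-Pi}.

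The main step is \cref{item:d}, which relies on the ordering of the chains, namely $|P_1|\le\cdots\le|P_s|$ with Type I chains first. Since $b\ge s-m+1>a$ and $Q_b=P_b$ (bank chains are untouched until they are used, after which $b$ moves past them), we get $|Q_b|=|P_b|\ge|P_a|=|Q_a|$. I expect the subtle point to be that the replacement at Line~\ref{line:17} must again yield a chain with fresh variables. Taking the smallest $|Q_a|$ positions of a Type II chain gives a consecutive prefix, which is still a chain. It is of Type II if $|Q_a|\ge2$, and it consists of a single position if $|Q_a|=1$. That single-position case needs care: a Type I $Q_a$ could be replaced by one position of a Type II chain, which is not Type I. The ordering places all Type I chains before Type II ones, but $a$ and $b$ may both be Type I, so in that case the replacement is still Type I. I would check this against the proof in \cite{con2024random}, and accept size-$1$ non-chains if the later analysis only uses sizes and disjointness. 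Finally, $(I,J)$-disjointness from the already assigned variables $\cA$ holds because $P_b$ is disjoint from all other $P_i$ by \cref{lm:split}, and subsets preserve disjointness.
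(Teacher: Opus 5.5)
Your simultaneous induction over loop iterations is correct and is essentially a fleshed-out version of the paper's proof sketch, which rests on the same two facts: a replacement preserves $|Q_a|$, and the size ordering gives $|Q_b|=|P_b|\ge|P_a|=|Q_a|$. Your worry about a Type I chain being replaced by a single non-Type-I position is not needed for this lemma, which concerns only sizes and indices; the paper handles that case in \cref{lm:nonsingular} by identifying the two fresh variables in the new block with the single variable of the old Type I block.
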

	\noindent \textit{Proof Sketch.} These invariants hold dynamically because the algorithm only ever shrinks or discards chains during replacements. In particular, the inequality $|Q_b|\ge |Q_a|$ follows from the initial ordering of the chains by size: chains taken from the bank, indexed by $b$, are always at least as large as the corresponding chains in the primary queue, indexed by $a$.
    \medskip
	
	\begin{lemma}\label{lm:terminate} 
		\cref{algorithm} terminates, outputting either ``SUCCESS'' or a sequence  $(i_1,\dots,i_r)\in\{0,\dots,h-1\}^r$. In the former case, the specialized matrix $M_{k,\ell,I,J}|_{G=\valpha}$ has full row rank. 
	\end{lemma}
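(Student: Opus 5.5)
Termination is a counting argument, and the full-rank claim comes from carrying generic non-vanishing along as an invariant.

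\emph{Termination.} Each pass through the while loop either increases $a$ (success branch) or increases $j$ (failure branch). The failure branch halts as soon as $j=r$, so there are fewer than $r$ failures before halting. The success branch adds $|Q_a|\geq 1$ to $c$, so after finitely many successes $c\geq h$ and the loop exits with ``SUCCESS''. \cref{lm:basic-claims}\ref{item:b} guarantees that the primary queue contains enough columns for this, and \ref{item:c0}--\ref{item:d} guarantee that every replacement in Lines~\ref{line:17}--\ref{line:18} is well defined. By \ref{item:c}, at most $m$ replacements occur, and $m\geq r$ by \cref{lm:sum-of-Pi}, so the bank never runs out before $j$ reaches $r$. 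Each recorded value is $i_j=c$, taken while the loop condition $c<h$ holds, so $i_j\in\{0,\dots,h-1\}$.

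\emph{Full rank on success.} I would carry the following invariant through the loop: at Line~\ref{line:9}, the determinant $\Delta$, specialized at $G_i=\alpha_i$ for $i\in\cA$, is generically non-vanishing on the product of copies of $X$ indexed by the unassigned variables.

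\begin{itemize}
\item \emph{Initially.} Here $\cA=\emptyset$. The matrix $\widetilde M$ is a column selection of $M_{k,\ell,I,J}$, and its left $h\times h$ block is $M_{k,h,I^{P},J^{P}}$ for the increasing subsequences formed by the first $h$ selected positions. The selected chains are $(I,J)$-disjoint, $I$ and $J$ only agree inside Type I chains, and $|I\cap J|<k$; so these restricted subsequences still agree on fewer than $k$ coordinates. \cref{thm:Variety} then says this determinant is generically non-vanishing on $X^n$.

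\item \emph{Preserved under replacement.} On a failure, $Q_a$ is replaced by a prefix of a bank chain of the same size. A prefix of a chain is again a chain of the same type. Because of the ordering by type and size, a Type I chain replaces only a Type I chain, and a Type II chain only a Type II chain. The replacement is disjoint from everything already assigned, so the new determinant again involves only unassigned variables outside $\cA$ together with the same assigned block. The same agreement-count argument and \cref{thm:Variety} make the new $\Delta$ generically non-vanishing, and its restriction to $\cA$ is generically non-vanishing by the irreducibility and product structure.

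\item \emph{Preserved under success.} On a success, the test in Line~\ref{line:12} itself guarantees the invariant for $\cA'$.
\end{itemize}

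The main obstacle is the replacement step. There I must argue that the specialization at $\cA$, which is unchanged, remains non-vanishing for the new block structure. I would handle this with \cref{lm:poly_degenerate}, or by the same argument as \cite{con2024random}, once we note that the variables in $\cA$ enter only through columns of the previously accepted chains, which are untouched.

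\emph{Conclusion.} When the loop exits with $c\geq h$, every column of the left $h\times h$ submatrix comes from an accepted chain. Hence all of its variables lie in $\cA$, and generic non-vanishing in no remaining variables means the fully specialized determinant is nonzero. The specialized $\widetilde M$ therefore has an invertible $h\times h$ minor, i.e.\ full row rank. \cref{lm:submatrix} transfers this to $M_{k,\ell,I,J}|_{G=\valpha}$.
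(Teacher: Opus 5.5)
Your termination argument, the range $i_j\in\{0,\dots,h-1\}$, and your concluding paragraph are exactly the paper's proof. Once the loop exits, the last test at Line~\ref{line:12} succeeded with every variable of the left $h\times h$ minor already assigned. So the tested $\overline{\Delta}$ is a nonzero constant, and \cref{lm:submatrix} finishes. This uses only your ``preserved under success'' bullet. The full invariant, and in particular the replacement step you single out as the main obstacle, plays no role in this lemma. That invariant is \cref{lm:nonsingular}, which the paper needs only for the probability estimate in \cref{lm:prob_algorithm}, so you should simply drop it here.

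If you keep it, the replacement bullet is wrong as written. Generic non-vanishing of the new unspecialized $\Delta$ (from \cref{thm:Variety}) says nothing about its specialization at the particular, already exposed points $\alpha_i$, $i\in\cA$. Those points were only tested against the old determinant; for instance, $xy$ is generically nonzero but vanishes identically once $x=0$. For the same reason \cref{lm:poly_degenerate} cannot help: the polynomial $g$ it produces for the new determinant was never checked at those $\alpha_i$.

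Your remark that the variables in $\cA$ enter only through columns of accepted chains is the right starting point. What actually works is the relabeling argument of \cref{lm:nonsingular}. The old and new specialized determinants differ only in the replaced block, and identifying the fresh variables with the old ones gives $\Delta_{\mathrm{old}}=\pm\sigma(\Delta_{\mathrm{new}})$, hence $\Delta_{\mathrm{new}}\not\equiv 0$.

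Your type bookkeeping is also off. A Type I chain can be replaced by the smallest position of a Type II bank chain. That is a single column with two distinct variables, not a chain at all. The ordering only guarantees that a Type I bank chain never replaces a Type II chain. The paper handles this case by identifying the two new variables with the single old one.
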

	\noindent \textit{Proof Sketch.} The variable $c$ records the number of assigned columns and strictly increases after each successful assignment, namely, when $\overline{\Delta}$ is generically non-vanishing in the remaining variables. If $c$ reaches $h$, then all variables involved in the chosen $h\times h$ minor have been assigned and the resulting value of $\overline{\Delta}$ is nonzero. Hence, $M_{k,\ell,I,J}|_{G=\valpha}$ contains a nonsingular $h\times h$ submatrix and therefore has full row rank. On the other hand, whenever the rank-certification step fails to progress, the failure counter $j$ strictly increases; thus the algorithm must terminate once $j$ reaches the prescribed budget $r$.
    \medskip

	As the variable $c$ in \cref{algorithm} never decreases, we have the following lemma.
	\begin{lemma}\label{lm:output}
		Any sequence $(i_1,\dots,i_r)$ output by \cref{algorithm} satisfies $i_1\leq \dots\leq i_r$.
	\end{lemma}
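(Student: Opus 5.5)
The claim is that the output sequence satisfies $i_1\le\cdots\le i_r$. We read this off the update rules of \cref{algorithm}, using that the counter $c$ is monotone and that each $i_j$ records the current value of $c$.

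First, we check every line where $c$ is modified. After the initialization $c\gets 0$ at Line~\ref{line:2}, the only assignment to $c$ is Line~\ref{line:14}, namely $c\gets c+|Q_a|$. By \cref{lm:basic-claims}\cref{item:a}, $|Q_a|=|P_a|\ge 1$ here, because the sets $P_i$ produced by \cref{lm:split} are nonempty. We also need that the replacement at Line~\ref{line:17} keeps $|Q_a|$ unchanged. This holds because $Q_a$ is replaced by the $|Q_a|$ smallest elements of $Q_b$, which is possible by \cref{lm:basic-claims}\cref{item:d}. Hence $c$ is a nondecreasing function of the time step; in fact it strictly increases at each successful assignment and stays the same at each failure.

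Second, we look at how the outputs are recorded. The value $i_j$ is set only at Line~\ref{line:21}, where $i_j\gets c$, and this happens right after $j$ is incremented. So the failures are recorded in chronological order, and $i_j$ is the value of $c$ at the time of the $j$-th failure. Since the $j$-th failure occurs strictly before the $(j+1)$-th and $c$ never decreases, we get $i_j\le i_{j+1}$ for every $j<r$. This gives the claim.

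There is no real obstacle. The only point to check is that no line other than Line~\ref{line:14} modifies $c$; in particular the replacement branch (Lines~\ref{line:17}--\ref{line:22}) leaves $c$ untouched. As a by-product, all $i_j$ satisfy $0\le i_j\le h-1$: every recorded failure happens inside the \textbf{while} loop, whose guard is $c<h$. This matches the range stated in \cref{lm:terminate}.
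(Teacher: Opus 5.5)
Your proof is correct and is the same as the paper's, which justifies the lemma only by noting that $c$ never decreases and that each $i_j$ stores the current value of $c$ (Line~\ref{line:21}). Your checks that $|Q_a|\ge 1$ and that Line~\ref{line:17} preserves $|Q_a|$ are not needed: they give strict increase, whereas monotonicity already follows from $|Q_a|\ge 0$.
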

	
	Recall that $\Var(I,J,Q_i)$ denotes the set of variable indices involved in $M_{k,|Q_i|,I^{Q_i},J^{Q_i}}$.
    
	\begin{lemma}\label{lm:disjoint}
		The sets $Q_1,\dots,Q_s$ are mutually $(I,J)$-disjoint before and after the replacement operations in Lines~\ref{line:17}-\ref{line:18}.
	\end{lemma}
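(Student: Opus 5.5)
The plan is induction on the number of replacement operations. The invariant is that $Q_1,\dots,Q_s$ are mutually $(I,J)$-disjoint, meaning the sets $\Var(I,J,Q_i)$ are pairwise disjoint, with the convention that $\Var(I,J,\emptyset)=\emptyset$.

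\textbf{Base case.} After Line~\ref{line:7} we have $Q_i=P_i$ for every $i$. These sets are mutually $(I,J)$-disjoint by \cref{lm:split}(2).

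\textbf{Inductive step.} Consider one execution of Lines~\ref{line:17}--\ref{line:18}. Let $Q'_b\subseteq Q_b$ be the set of the smallest $|Q_a|$ elements of $Q_b$; this is well defined because \cref{lm:basic-claims}\cref{item:d} gives $|Q_b|\geq|Q_a|$. The key monotonicity fact is that for $P'\subseteq P$,
$$\Var(I,J,P')=\Set(I^{P'})\cup\Set(J^{P'})\subseteq\Set(I^{P})\cup\Set(J^{P})=\Var(I,J,P),$$
since $I^{P'}$ is a subsequence of $I^{P}$, and likewise for $J$. After the operation, the new $Q_a$ equals $Q'_b$ and the new $Q_b$ is $\emptyset$.

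\textbf{Checking disjointness.}
\begin{itemize}
\item For any $i\neq a,b$, the set $\Var(I,J,Q'_b)\subseteq\Var(I,J,Q_b)$ was already disjoint from $\Var(I,J,Q_i)$ by the inductive hypothesis.
\item Any pair involving the new $Q_b=\emptyset$ is trivially disjoint.
\item Pairs not involving indices $a$ or $b$ are unchanged.
\end{itemize}
\cref{lm:basic-claims}\cref{item:c0,item:c} guarantees $a\leq s-m<b$, so $a\neq b$ and the index bookkeeping is consistent. No other line of the algorithm modifies the $Q_i$, so the invariant holds throughout.

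\textbf{Structural remark.} Although the lemma itself does not require it, the replacement $Q'_b$ still gives a chain. Taking the smallest $|Q_a|$ positions of a chain yields an initial segment. An initial segment of a Type II chain satisfies the shifted-equality relation, or is Type I if it has length one. An initial segment of a Type I chain is the chain itself.

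The main obstacle is small. One must be careful that subset monotonicity of $\Var$ relies on $I^{P'}$ being a subsequence of $I^P$, which holds because selection is by positions. One must also check that $a$ and $b$ are distinct indices.
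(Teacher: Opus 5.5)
Your proof is correct and follows the paper's argument, written out in full. The $Q_i$ start disjoint by \cref{lm:split}, each replacement installs a subset of an untouched bank chain, which by monotonicity of $\Var$ stays disjoint from every other $Q_i$, and the emptied $Q_b$ is trivially disjoint.

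One correction to your (unneeded) structural remark. A length-one initial segment of a Type~II chain is not Type~I, since $I_1=J_2>J_1$ (or $J_1=I_2>I_1$). So when a Type~I chain is replaced from a Type~II bank chain, the new block is not a chain at all. This is why the paper handles that case separately in \cref{lm:nonsingular}, by identifying the two new variables with the old one.
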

	\noindent \textit{Proof Sketch.} They are initially disjoint by the chain decomposition theorem. Each replacement uses a subset of a fresh bank chain, which is disjoint from all chains currently in use, so disjointness is preserved.
    \medskip

	\begin{lemma}\label{lm:new-elements}
		At Line~\ref{line:9} of \cref{algorithm}, none of the points indexed by $\Var(I,J,Q_a)$ has been exposed in a previous iteration of the while loop. In particular, none of these indices has previously been added to $\cA'$.
	\end{lemma}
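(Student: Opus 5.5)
We prove \cref{lm:new-elements} by tracking how the set $\cA$ of assigned indices and the current chain $Q_a$ evolve along the iterations of the while loop. The argument is an induction over iterations. The invariant is that, at Line~\ref{line:9}, every index that was exposed before (that is, every index that appeared in some earlier $\cA'$) belongs to
$$\bigcup_{i<a}\Var(I,J,Q_i)\;\cup\;\bigcup_{\text{discarded } Q}\Var(I,J,Q),$$
where a ``discarded'' $Q$ means a chain content that sat in position $a$ at some earlier failed test. The claim then follows from the disjointness of the current $Q_a$ from all of these sets.

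First, consider the exposures. In every iteration, the exposed set is $\cA'=\cA\cup\Var(I,J,Q_a)$, where $\cA$ consists exactly of the variables of the chains accepted so far, namely $Q_1,\dots,Q_{a-1}$ at their accepted values. In a successful iteration, $\Var(I,J,Q_a)$ joins $\cA$ and $a$ increases. In a failed iteration, the variables of the old $Q_a$ were exposed, and $Q_a$ is then overwritten.

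Second, handle the failure case. After a failed test, the new $Q_a$ is a subset of the bank chain $Q_b$, and $Q_b$ is set to $\emptyset$. The new variable set $\Var(I,J,Q_a)$ is therefore contained in $\Var(I,J,P_b)$, the variables of an original bank chain. The index $b$ only increases, and the bank chain $P_b$ was never previously in position $a$ and was never tested. By \cref{lm:split}, the original chains $P_1,\dots,P_s$ are mutually $(I,J)$-disjoint. Hence $\Var(I,J,P_b)$ is disjoint from the variables of the accepted chains $P_1,\dots,P_{a-1}$ (or their bank replacements $P_{b'}$ with $b'<b$). It is also disjoint from every previously discarded chain, each of which was some $P_a$ or some earlier $P_{b'}$ with $b'<b$. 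This uses \cref{lm:disjoint} together with \cref{lm:basic-claims}\ref{item:c0},\ref{item:c}, which guarantee that indices $a\le s-m<b$ never collide.

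Third, handle the success case. After a success, the next chain is $Q_{a+1}$. Its content is either the original $P_{a+1}$, which has never been tested because tests happen only at position $a$, or a replacement subset of an untouched bank chain. In both cases it is one of the original disjoint chains (or a subset of one) that has not yet appeared, so the same disjointness argument applies.

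The main obstacle is bookkeeping: we must ensure that each original $P_i$ enters the ``tested'' role at most once. This holds because $a$ and $b$ are monotone and their ranges are separated by \cref{lm:basic-claims}. Once this is established, disjointness of distinct original chains gives that $\Var(I,J,Q_a)$ misses every earlier $\cA'$, which proves the lemma.
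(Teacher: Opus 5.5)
Your proof is correct and follows essentially the same route as the paper: the original chains from \cref{lm:split} are pairwise $(I,J)$-disjoint, and because the pointers $a$ and $b$ only increase, every chain, whether primary or from the bank, is tested at most once, so the chain examined at Line~\ref{line:9} is always fresh. Your explicit tracking of discarded chains back to the original $P_i$ is slightly more careful than the paper's sketch, which cites \cref{lm:disjoint} directly even though that lemma only covers the current sets $Q_1,\dots,Q_s$ and not chains that have already been overwritten.
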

	\noindent \textit{Proof Sketch.} This follows directly from Lemma~\ref{lm:disjoint}: once a chain is evaluated, it is either accepted or discarded, and the next chain considered is disjoint from all previously used chains. Hence all points indexed by $\Var(I,J,Q_a)$ are unexposed.
    \medskip
    
	The following lemma is a key ingredient in the analysis. It shows that, once $j$, the failure sequence $(i_1,\dots,i_j)$, and the current value of $c$ are fixed, the combinatorial state of the algorithm is determined independently of the actual values of $\alpha_1,\dots,\alpha_n$.
	
	\begin{lemma}\label{lm:determined}
		At Line~\ref{line:12} of \cref{algorithm}, the values of $Q_1,\dots,Q_s$ and $a$ are determined by the values of $j$, $i_1,\dots,i_j$, and $c$, together with the input excluding $\alpha_1,\dots,\alpha_n$. 
	\end{lemma}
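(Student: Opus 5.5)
We argue by tracing the execution of \cref{algorithm} and showing that its combinatorial state evolves deterministically once the outcomes of the tests at Line~\ref{line:12} are known. The key observation is that the values $\alpha_1,\dots,\alpha_n$ influence the run only through the branch taken at Line~\ref{line:12}. The updates of $Q_1,\dots,Q_s$, $a$, $b$, $c$ and $j$ in Lines~\ref{line:13}--\ref{line:21} depend only on the current combinatorial state and on which branch was taken. The initialization in Lines~\ref{line:1}--\ref{line:7} uses only $n,k,\ell,r,\eps_0,I,J$ and $P_1,\dots,P_s$. Thus it suffices to show that the sequence of branch outcomes up to the current visit of Line~\ref{line:12} can be reconstructed from $j$, $(i_1,\dots,i_j)$ and $c$.

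First, the branch history is encoded by these data. Each successful test increases $c$ by $|Q_a|$ and advances $a$; each failed test records the current $c$ as the next $i_{j'}$ and does not change $c$. Hence, by \cref{lm:output}, the run up to a given moment consists of consecutive phases: between two failures, only successes occur, and these successes raise $c$ from one recorded value to the next. The procedure is then a simulation. Start from the initial state with $c=0$, $j=0$, $a=1$, $b=s-m+1$. For $j'=1,\dots,j$, perform successes until $c$ equals $i_{j'}$, then perform a failure. After the last failure, perform successes until $c$ equals the given current value. The simulation never needs any $\alpha_i$, since each step applies only the deterministic updates of Lines~\ref{line:13}--\ref{line:15} or Lines~\ref{line:17}--\ref{line:21}.

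The main point to check is that this simulation is well defined, meaning that the number of successes in each phase is forced. This holds because every success strictly increases $c$: we have $|Q_a|\geq 1$, since the sets stay nonempty by \cref{lm:basic-claims}\ref{item:a},\ref{item:d} and the replacement preserves $|Q_a|$. So while the state is evolved by successes, $c$ passes through a strictly increasing sequence of values determined by the current $Q$'s. There is therefore at most one point at which $c$ equals the target value. The actual run reached exactly this point, so the simulation hits the target and the number of successes is unique.

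By induction on the number of visits to Line~\ref{line:12}, the simulated state coincides with the true state. In particular, $Q_1,\dots,Q_s$ and $a$ at the current Line~\ref{line:12} are functions of $j$, $i_1,\dots,i_j$, $c$, and the input excluding $\alpha_1,\dots,\alpha_n$. As a by-product, $b=s-m+1+j$ is also determined.
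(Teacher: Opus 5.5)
Your proposal is correct and follows essentially the same route as the paper's sketch (and the proof of Lemma~39 in \cite{con2024random} that it cites): the combinatorial state changes only according to the outcomes of the tests at Line~\ref{line:12}, and the failure record together with the current $c$ determines that outcome history. Your simulation argument, using that each success strictly increases $c$ because $|Q_a|\geq 1$, spells out the step the paper's ``therefore'' leaves implicit.
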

	\noindent \textit{Proof Sketch.} The pointers $a$ and $b$, as well as the queues $Q_1,\dots,Q_s$, are updated solely according to the outcomes of the determinant checks. More precisely, a successful check increases $c$ and advances $a$, while a failed check records the current value of $c$, replaces the current queue by a prescribed subset of the next bank queue, and advances $b$. Therefore, once the previous failure sequence $(i_1,\dots,i_j)$ and the current value of $c$ are fixed, the entire combinatorial state of the algorithm is fixed as well. This state depends only on the input combinatorial data and not on the actual values of $\alpha_1,\dots,\alpha_n$. For a detailed proof, we refer the reader to \cite[Proof of Lemma~39]{con2024random}.
    \medskip

	We now justify the algebraic verification step. We first give a simple degree bound for the determinants of the relevant minors.
	
	\begin{lemma}\label{lm:degree}
		Let $X\subseteq\overline{\F}_q^k$ be an irreducible MDS variety contained in an affine hyperplane, and set $h=2k-1$. Let $I,J\in[n]^\ell$ be increasing subsequences that agree on at most $k-1$ coordinates. For every subset $P\subseteq[\ell]$ of size $h$, the determinant
		$$\det(M_{k,h,I^P,J^P}[G])$$
		is generically non-vanishing when its column variables are chosen from $X$. Moreover, its degree in each column variable $G_i$ is at most $2$.
	\end{lemma}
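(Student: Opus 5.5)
The plan has two parts: generic non-vanishing via \cref{thm:Variety}, and the degree bound by inspecting the matrix entries.

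\emph{Generic non-vanishing.} Since $X$ is contained in an affine hyperplane, we may normalize it, as at the start of \cref{Section4}, to $\{x_1=1\}$. This change of coordinates is invertible and preserves the rank and non-vanishing properties. The first row of every $G_i$ is then $1$, so $M_{k,h,I^P,J^P}[G]$ is the square $(2k-1)\times(2k-1)$ standard verification matrix of \cref{def:verification_matrix}.

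Write $I^P,J^P\in[n]^{h}$. They are again increasing subsequences, and they agree on at most $k-1$ coordinates. Indeed, the positions where $I^P$ and $J^P$ agree are positions $t\in P$ with $I_t=J_t$, so $|I^P\cap J^P|\le|I\cap J|\le k-1$.

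By \cref{thm:Variety}, $X$ is HS-optimal. The proof of that theorem goes through \cref{lm:parameterize_variety} and \cref{thm:Power_series} for the case $1\in\operatorname{Span}$. What it actually shows is stronger: for every $n\ge 2k$ and all increasing $I',J'\in[n]^{2k-1}$ with $|I'\cap J'|<k$, the standard verification determinant is generically non-vanishing on $X^n$. Applying this with $I'=I^P$, $J'=J^P$ gives the first claim.

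The variables $G_i$ with $i\notin\Var(I,J,P)$ do not appear in the determinant. On the irreducible product $X^n$, a polynomial in a subset of the coordinate blocks is generically non-vanishing exactly when it is on the corresponding factor. So the determinant is generically non-vanishing in its own column variables.

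\emph{Degree bound.} Expand the determinant by the Leibniz formula. A fixed variable $G_i$ enters only through entries $G'_{I_t}$ with $I_t=i$ or $G'_{J_t}$ with $J_t=i$. Because $I^P$ is strictly increasing, $i$ occurs at most once among the $I$-entries, hence in at most one column in the $I$-block of rows. Likewise it occurs in at most one column in the $J$-block of rows.

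If $I_t=J_t=i$ for the same $t$, then $G_i$ appears only in column $t$. Each Leibniz term uses one entry from column $t$, and each entry is linear in the coordinates of $G_i$, so the degree in $G_i$ is at most $1$. Otherwise $G_i$ appears in at most two columns, each entry linear in $G_i$, and each term uses one entry per column, giving degree at most $2$.

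In both cases $\deg_{G_i}\le 2$. This bound survives the linear coordinate change, which preserves degrees.

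\emph{Main obstacle.} The only delicate point is the first part. We must extract from \cref{thm:Variety} the determinant-level non-vanishing statement, rather than just the code-level HS-optimality of \cref{def:variety_half_bound}. We must also check that the restriction to a subset $P$ of positions still satisfies $|I^P\cap J^P|<k$. Both follow from the reduction chain in \cref{Section3}, which was proved exactly at the level of verification determinants.
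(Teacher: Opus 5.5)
Your proposal is correct and follows the same route as the paper: generic non-vanishing comes from \cref{thm:Variety} applied to $I^P,J^P$, which still agree on at most $k-1$ coordinates, and the degree bound comes from multilinearity of the determinant together with the fact that $G_i$ occurs in at most two columns. Your extra care makes explicit what the paper's two-line proof leaves implicit: you extract the determinant-level statement from the reduction in \cref{Section3} rather than from the code-level \cref{def:variety_half_bound}, and you note the degree-$1$ subcase when $I_t=J_t$.
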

	\begin{proof}
		By \cref{thm:Variety}, the MDS variety $X$ is HS-optimal. Since $X$ is contained in an affine hyperplane and $I^P,J^P$ agree on at most $k-1$ coordinates, HS-optimality gives the generic non-vanishing of $\det(M_{k,h,I^P,J^P}[G])$.

		A fixed generator column $G_i$ occurs in the verification matrix at most twice: once through $I^P$ and once through $J^P$. The determinant is multilinear in the columns of the verification matrix, so its degree in the entries of $G_i$ is at most $2$.
	\end{proof}
	
	We next show that the determinant remains generically non-vanishing after a chain is replaced.
	
	\begin{lemma}\label{lm:nonsingular}
		At Line~\ref{line:12} of \cref{algorithm}, the determinant $\Delta|_{G_i=\alpha_i\text{ for }i\in\cA}$,	viewed as a polynomial in the unassigned column variables $G_i$, $i\notin\cA$, is generically non-vanishing when these variables range over $X$.
	\end{lemma}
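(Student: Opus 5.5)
We prove \cref{lm:nonsingular} by induction on the number of iterations of the while loop, maintaining the invariant that, at Line~\ref{line:12}, $\Delta|_{G_i=\alpha_i\text{ for }i\in\cA}$ is generically non-vanishing on $X^{n-|\cA|}$ in the unassigned columns.

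\textbf{Base case.} At the first visit, $\cA=\emptyset$. Here $\Delta$ is the determinant of the left $h\times h$ submatrix of $\widetilde{M}_{k,I,J,(Q_i)_{i=1}^{s-m}}[G]$. By \cref{lm:basic-claims}\cref{item:b} this submatrix exists. It consists of $h$ columns of $M_{k,\ell,I,J}[G]$ indexed by some $P\subseteq[\ell]$, so up to a column permutation it equals $M_{k,h,I^P,J^P}[G]$. Since $|I\cap J|<k$, we also have $|I^P\cap J^P|<k$. \cref{lm:degree} then shows that $\Delta$ is generically non-vanishing.

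\textbf{Successful check.} Suppose the test at Line~\ref{line:12} succeeds. Then $\cA$ becomes $\cA'$, and by the test itself the new specialization $\overline{\Delta}$ is generically non-vanishing. The queues $Q_1,\dots,Q_{s-m}$ do not change, so $\Delta$ is the same polynomial at the next visit. The invariant therefore holds immediately.

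\textbf{Failed check.} Suppose the test fails. Then $\cA$ is unchanged, but $Q_a$ is replaced by the $|Q_a|$ smallest elements $Q'$ of the bank chain $Q_b$. Consequently the left $h\times h$ submatrix, and hence $\Delta$, changes. Note the following:
\begin{itemize}
\item The new block is still a chain, since a prefix of a chain is a chain. It has the same size, so the columns occupying the left $h$ positions keep the same layout.
\item By \cref{lm:disjoint,lm:new-elements}, the variables of the new block are disjoint from $\cA$ and from every other block.
\end{itemize}
So the new $\Delta_{\mathrm{new}}$ is again a verification determinant $\det M_{k,h,I^{P'},J^{P'}}[G]$. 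The index set $P'$ is a union of chains from the decomposition, so $|I^{P'}\cap J^{P'}|\le|I\cap J|<k$. By \cref{lm:degree}, $\Delta_{\mathrm{new}}$ is generically non-vanishing before any specialization.

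The point to establish is that it stays non-vanishing after setting $G_i=\alpha_i$ for $i\in\cA$. The assigned variables belong only to the earlier accepted blocks $Q_1,\dots,Q_{a-1}$, and these occupy the leftmost $c$ columns.

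\textbf{Reduction to a block factorization.} The strategy is to write the specialized determinant as a block factorization. Consider a Laplace expansion along the first $c$ columns, which involve only the assigned variables. After specialization, the old $\Delta|_{\cA}$ and $\Delta_{\mathrm{new}}|_{\cA}$ share these first $c$ columns. Both are therefore linear combinations of $c\times c$ minors of this fixed numeric block, multiplied by complementary $(h-c)\times(h-c)$ minors of the unassigned columns.

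The hope is that generic non-vanishing of $\Delta|_{\cA}$ transfers to $\Delta_{\mathrm{new}}|_{\cA}$. The mechanism is as follows. Because chains are disjoint and the unassigned variables are generic points of the MDS variety $X$, we can use the formal-power-series/monomial reduction of \cref{subsection3.3}, via \cref{lm:parameterize_variety}, on the unassigned columns only. The specialized determinant is nonzero iff some leading coefficient, namely a determinant for a monomial tuple on the unassigned part with the assigned block as fixed coefficients, is nonzero. That leading coefficient depends only on the chain structure (Type I/II and the size) of the unassigned blocks, not on which indices they carry. The replaced block has the same type and size as the old one, since the ordering puts Type I before Type II and the bank chain is at least as large.

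\textbf{Main obstacle.} The hard step is precisely this transfer. Specifically, one must show that the generic rank behavior of a specialized verification determinant depends only on the isomorphism type of the unassigned chain pattern. This is what \cite{con2024random} proves in their univariate setting via the same Laplace/leading-term argument. I would first try to reproduce their argument through the leading-monomial reduction of \cref{lm:reduction_monomial}, applied with the assigned columns treated as constants.
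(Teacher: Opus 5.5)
Your base case and successful-check step agree with the paper, but the failed-check step, the only nontrivial part, is not proved. You leave the passage from $\Delta_{\mathrm{old}}|_{\cA}$ to $\Delta_{\mathrm{new}}|_{\cA}$ as a hope and flag it as the main obstacle.

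The missing idea is a direct substitution argument; no Laplace expansion or power series is needed. The discarded block $Q_a$ and the new block $Q_b'$ consist of unassigned variables disjoint from all other blocks. Hence there is a substitution $\sigma$, acting only on the new block's column variables, with $\Delta_{\mathrm{old}}|_{\cA}=\pm\sigma(\Delta_{\mathrm{new}}|_{\cA})$. The cases are:
\begin{enumerate}
\item If both chains are Type II with the same orientation, $\sigma$ just renames the fresh variables.
\item If they are Type II with opposite orientations, first reverse the order of the new chain's columns. This is a column permutation and costs only a sign.
\item If the old block is Type I, $\sigma$ sends both variables $G_{I_p}$ and $G_{J_p}$ of the new one-column block to the single old variable.
\end{enumerate}
In each case, every point of the product of copies of $X$ on which $\Delta_{\mathrm{old}}|_{\cA}$ lives yields, by renaming or duplicating coordinates, a point of the product on which $\Delta_{\mathrm{new}}|_{\cA}$ lives. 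At that point $\Delta_{\mathrm{new}}|_{\cA}$ takes the value $\pm\Delta_{\mathrm{old}}|_{\cA}$. So if $\Delta_{\mathrm{new}}|_{\cA}$ vanished identically, so would $\Delta_{\mathrm{old}}|_{\cA}$, contradicting the induction hypothesis. This is the paper's argument.

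Your sketch also contains a claim that is false as stated: the replacement block need not have the same type as the old one. Ordering Type I chains first, with non-decreasing sizes, only guarantees that a Type I bank chain replaces a Type I block. A Type I block (one variable) can be replaced by the first position of a Type II bank chain, which carries two distinct variables and is not itself a chain. Two Type II chains can also have opposite orientations. So invariance under the isomorphism type of the chain pattern does not cover the cases that actually occur. The Type I case needs the identification of variables above, which works only because it is a specialization in the direction from new to old.

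Finally, the monomial reduction of \cref{lm:reduction_monomial} is proved for fully generic columns. Once the assigned columns are fixed at the rational points $\alpha_i$, it gives no control over leading coefficients by itself. That route would therefore not close the argument without falling back on the substitution above.
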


\begin{proof}
		The proof follows the re-indetermination argument in \cite[Proof of Lemma~40]{con2024random}. We argue by induction on the iterations of the while loop. In the first iteration, $\cA=\emptyset$. Up to a permutation of columns, $\Delta$ is the determinant of a matrix of the form $M_{k,h,I^P,J^P}[G]$ for some $P\subseteq[\ell]$ of size $h$. By \cref{lm:degree}, this determinant is generically non-vanishing.

		Suppose that, at Line~\ref{line:12} in a given iteration, the partially specialized determinant $\Delta|_{G_i=\alpha_i,\ i\in\cA}$ is generically non-vanishing in the remaining variables. If the test at Line~\ref{line:12} succeeds, then the algorithm replaces $\cA$ by
		$$\cA'=\cA\cup\Var(I,J,Q_a),$$
		and the test itself asserts that the resulting determinant is generically non-vanishing. Thus, the same property holds in the next iteration.

		It remains to consider the case in which the test fails. Then $\cA$ is unchanged, but the current chain $Q_a$ is replaced by a subset $Q_b'$ of a fresh bank chain. By \cref{lm:new-elements}, none of the variables indexed by $\Var(I,J,Q_a)$ or $\Var(I,J,Q_b')$ has been assigned before. Let $\Delta_{\mathrm{old}}$ and $\Delta_{\mathrm{new}}$ denote the partially specialized determinants before and after this replacement, respectively.

		The two determinants differ only in the block corresponding to the replaced chain. As in \cite[Proof of Lemma~40]{con2024random}, there are three cases: the two chains are Type II with the same orientation, the two chains are Type II with opposite orientations, or the old chain is Type I. In the first case, the old block is obtained from the new block by relabeling the fresh column variables. In the second case, the same holds after reversing the order of the new chain, which only permutes columns and therefore changes the determinant by a sign. In the Type I case, the same conclusion follows by identifying the two variables in the new block with the single variable in the old Type I block. Finally, the ordering of the chains implies that, if the new bank chain is Type I, then the old chain is also Type I. Hence these cases cover all possibilities.

		Consequently, there exists a relabeling homomorphism $\sigma$ of the remaining column variables, possibly together with a column permutation, such that $\Delta_{\mathrm{old}}=\pm\sigma(\Delta_{\mathrm{new}})$.
		If $\Delta_{\mathrm{new}}$ were identically zero on the product of the remaining copies of $X$, then so would be $\Delta_{\mathrm{old}}$, contradicting the induction hypothesis. Thus, $\Delta_{\mathrm{new}}$ is generically non-vanishing. The invariant is therefore preserved at every iteration, which proves the lemma.
	\end{proof}

	We now combine these algebraic guarantees with the geometric density parameter $P_q(X, d)$ to bound the probability of a specific failure sequence. This allows us to formally prove the fundamental bound introduced in \cref{subsection:probabilistic_analysis}.
	
	\begin{proof}[Proof of Lemma~\ref{lm:prob_algorithm}]
		By \cref{lm:terminate}, if the matrix fails to achieve full row rank, \cref{algorithm} must output a sequence $(i_1, \dots, i_r) \in \{0, \dots, h-1\}^r$. Let
		$$p=\frac{(1/\eps_0+1)P_q(X,2)}{|S|-n}.$$
		For $0 \le t \le r$, let $\mathcal{F}_t$ be the event that the algorithm matches a fixed target sequence up to the $t$-th index. We prove by induction that $\Pr[\mathcal{F}_t] \le p^t$.
		
		For $t=0$, the claim holds trivially. Assume the bound holds for some $t<r$. We now prove that it holds for $t+1$ as well. Denote the fixed target sequence by $(i_1^*,\dots,i_r^*)$. Let $\mathcal{F}_t'$ be the sub-event of $\mathcal{F}_t$ that the algorithm reaches Line~\ref{line:12} with $j=t$, $(i_1,\dots,i_t)=(i^*_1,\dots,i^*_t)$, and $c=i^*_{t+1}$.
		Note that $\mathcal{F}_{t+1}$ is a sub-event of $\mathcal{F}_t'$. If $\Pr[\mathcal{F}_t'] = 0$, then $\Pr[\mathcal{F}_{t+1}] = 0$ and the step is proven. Otherwise, by the inductive hypothesis, $\Pr[\mathcal{F}_t'] \le \Pr[\mathcal{F}_t] \le p^t$. It therefore suffices to bound the conditional probability $\Pr[\mathcal{F}_{t+1} \mid \mathcal{F}_t'] \le p$. 
		
		Condition on the event $\mathcal{F}_t'$. Consider the moment when the algorithm reaches Line~\ref{line:12} with $j=t$, $(i_1,\dots,i_t)=(i^*_1,\dots,i^*_t)$, and $c=i^*_{t+1}$.
		By \cref{lm:determined}, $Q_1,\dots,Q_s$ and the index $a$ are entirely determined. 
		The set of assigned variables $\mathcal{A} = \bigcup_{i=1}^{a-1} \Var(I,J,Q_i)$ is also determined. 
		
		Let $E$ be the set of all point indices exposed before the current test, including those in discarded chains. Here, the exposed points include those in chains accepted in previous iterations and those in chains discarded after failed tests. Further condition on arbitrary values of $\alpha_i$, $i\in E$, that are consistent with $\mathcal F_t'$. By \cref{lm:new-elements}, $E$ is disjoint from $\Var(I,J,Q_a)$. Therefore, conditioned on these previously exposed values, the points indexed by $\Var(I,J,Q_a)$ are sampled uniformly without replacement from the remaining points of $S$. The following bound is uniform over all choices of the conditioned values, so it also holds after averaging over them.
		
		Let $\Delta^*:=\Delta|_{G_i=\alpha_i \text{ for } i\in \cA}$. After the current chain is exposed, the determinant tested at Line~\ref{line:12} is the polynomial $\overline{\Delta}:=\Delta^*|_{G_i=\alpha_i \text{ for } i\in \Var(I,J,Q_a)}$ in the other unassigned column variables. By \cref{lm:nonsingular}, $\Delta^*$ is generically non-vanishing. List the indices in $\Var(I,J,Q_a)$ as $u_1,\ldots,u_d$. Since $(I^{Q_a},J^{Q_a})$ is a chain, $d\leq |Q_a|+1\leq1/\eps_0+1$. Moreover, by \cref{lm:degree}, $\Delta^*$ has degree at most $2$ in each column variable $G_{u_j}$.

		Expose $\alpha_{u_1},\ldots,\alpha_{u_d}$ one at a time. Before exposing $\alpha_{u_j}$, apply \cref{lm:poly_degenerate} with $G_{u_j}$ as the first group of variables and all remaining unassigned columns as the second group. As long as the previous specializations preserve generic non-vanishing, this gives a polynomial $g_j$ on $X$ that is generically non-vanishing and has degree at most $2$. If $g_j(\alpha_{u_j})\neq0$, specializing $G_{u_j}=\alpha_{u_j}$ again preserves generic non-vanishing. Conditioned on the previously exposed points, $\alpha_{u_j}$ is uniform over at least $|S|-n$ available points. Hence the probability that this specialization fails is at most
		$$\frac{P_q(X,2)}{|S|-n}.$$
		If every specialization preserves generic non-vanishing, then $\overline{\Delta}$ remains generically non-vanishing in the other unassigned column variables. Therefore, for $\overline{\Delta}$ to be identically zero, at least one of these $d$ specializations must fail. A union bound over the $d\leq1/\eps_0+1$ exposed points therefore gives
		$$\Pr[\overline{\Delta}\text{ is identically zero}\mid\mathcal{F}_t']\leq
		\frac{(1/\eps_0+1)P_q(X,2)}{|S|-n}=p.$$
		The event $\mathcal F_{t+1}$ requires this test to fail, so $\Pr[\mathcal F_{t+1}\mid\mathcal F_t']\leq p$, completing the induction.
		
		By \cref{lm:terminate}, if the matrix $M_{k,\ell,I,J}|_{G_1=\alpha_1, \dots, G_n=\alpha_n}$ does not have full row rank, then \cref{algorithm} outputs a sequence $(i_1,\dots,i_r)\in\{0,\dots,h-1\}^r$. By \cref{lm:output}, any output sequence satisfies $i_1\leq \dots\leq i_r$. The number of such sequences is $\binom{h+r-1}{r}\leq 2^{h+r-1}\leq 2^{2k+r-1}$.
		Using the union bound on all such sequences, the failure probability is at most 
		$$2^{2k+r-1}\left(\frac{(1/\eps_0+1)P_q(X,2)}{|S|-n}\right)^r.$$ 
	\end{proof}

	\section{Application to AG Codes}\label{section5}
	
	We now apply our probabilistic framework to algebraic geometry codes constructed from curves over finite fields. After recalling the standard construction of AG codes, we show that their generator matrices can be viewed as matrices whose columns are sampled from an affine MDS curve. This allows us to apply the probabilistic bounds from \cref{thm:main_prob,coro:curve} and derive explicit results for several families of curves.
	
	To define algebraic geometry codes, we use projective varieties. Let $\bbP^n$ denote the $n$-dimensional projective space over $\F$, whose points are written in homogeneous coordinates $[x_0:x_1:\cdots:x_n]$, where two nonzero vectors define the same point if they differ by a nonzero scalar multiple. A \emph{projective variety} in $\bbP^n$ is the common zero set of homogeneous polynomials in $\F[x_0,\dots,x_n]$, or equivalently, the zero set of a homogeneous ideal.

	Every affine variety $X\subseteq\mathbb{A}^n$ can be embedded into projective space by the standard map
	$$(a_1,\dots,a_n)\mapsto[1:a_1:\cdots:a_n].$$
	The \emph{projective closure} of $X$, denoted by $\overline X$, is the Zariski closure of this image in $\bbP^n$. Concretely, if $X$ is defined by polynomials in $\F[x_1,\dots,x_n]$, then $\overline X$ is obtained by homogenizing these defining equations with respect to $x_0$. The points in $\overline X\setminus X$ are called points at infinity.

	We briefly recall the standard setup for AG codes. An algebraic curve is an irreducible variety of dimension $1$. It is \emph{smooth} if all its points are non-singular, and \emph{complete} if it is proper (in the case of curves, this is equivalent to being a projective variety). In this section, let $X$ be a smooth complete geometrically irreducible curve of genus $g$ defined over $\F_q$, and let $\F_q(X)$ denote its function field.

	Divisors and Riemann-Roch spaces can be defined over arbitrary fields, and their geometric properties are often studied after extending the base field to an algebraic closure. Here, however, our goal is to construct AG codes over $\F_q$. We therefore work with divisors defined over $\F_q$ and define their Riemann-Roch spaces using rational functions in $\F_q(X)$. We pass to $\overline{\F}_q$ only when viewing the generator-matrix columns as points on an algebraic variety.
	
	For a rational function $f\in \F_q(X)$ and a point $P\in X(\overline{\F}_q)$, the \emph{discrete valuation} of $f$ at $P$, denoted by $v_{P}(f)$, is defined as the order of the zero (if positive) or pole (if negative) of $f$ at $P$.
	
	A \emph{divisor} $D$ on $X$ is a formal finite sum of the form $D=\sum a_P P$, where $P\in X(\overline{\F}_q)$ and $a_P\in\mathbb{Z}$. A divisor is said to be \emph{defined over $\F_q$} if it is invariant under the Frobenius automorphism. Throughout this section, all divisors are assumed to be defined over $\F_q$. The set of divisors on $X$ is denoted by $\Div(X)$. The \emph{support} of $D$, denoted by $\supp(D)$, is the set $\{P\in X(\overline{\F}_q):a_P\neq0\}$. Its \emph{degree} is $\deg(D)=\sum a_P$. If every coefficient $a_P$ is non-negative, then $D$ is called an \emph{effective divisor}, denoted by $D\geq 0$.

	For a nonzero rational function $f\in\F_q(X)$, its \emph{principal divisor} is $(f)=\sum_Pv_P(f)P$. The \emph{Riemann-Roch space} associated with a divisor $D$ is
	$$L(D)=\{0\}\cup\{f\in\F_q(X):(f)+D\geq0\}.$$
	It is a finite-dimensional vector space over $\F_q$, whose dimension is denoted by $\ell(D)$. The Riemann-Roch theorem also applies to divisors defined over $\F_q$ and to the $\F_q$-vector spaces $L(D)$ considered here.
	\begin{lemma}[Proposition 1.3.4 in Chapter IV, \cite{hartshorne2013algebraic}]
		Let $X$ be a smooth complete curve of genus $g$, and let $D$ be a divisor on $X$. If $\deg(D)>2g-2$, then $\ell(D)=\deg(D)-g+1$.
	\end{lemma}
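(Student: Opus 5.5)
The statement is the standard consequence of the Riemann--Roch theorem, and I would derive it directly from the full theorem together with the fact that principal divisors have degree zero. The plan has three steps: recall Riemann--Roch with a canonical divisor, show that the correction term vanishes under the degree hypothesis, and check that working over $\F_q$ rather than $\overline{\F}_q$ changes nothing.

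First, I would recall the full Riemann--Roch theorem for the smooth complete curve $X$ of genus $g$. Let $K$ be a canonical divisor, i.e.\ the divisor of a nonzero rational differential; it satisfies $\deg(K)=2g-2$. Riemann--Roch then states that for every divisor $D$,
$$\ell(D)-\ell(K-D)=\deg(D)-g+1.$$
So it suffices to show that $\ell(K-D)=0$ whenever $\deg(D)>2g-2$.

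Second, the hypothesis gives $\deg(K-D)=2g-2-\deg(D)<0$. I would use the general fact that $L(E)=\{0\}$ for every divisor $E$ with $\deg(E)<0$. Suppose $f\in L(E)$ is nonzero. Then $(f)+E\geq0$, so $\deg((f)+E)\geq 0$. On the other hand, a nonzero rational function has as many zeros as poles counted with multiplicity, so $\deg((f))=0$ and hence $\deg((f)+E)=\deg(E)<0$, a contradiction. Applying this with $E=K-D$ gives $\ell(K-D)=0$, and substituting into Riemann--Roch yields $\ell(D)=\deg(D)-g+1$.

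The only point needing care, and the main obstacle to making the argument airtight in the paper's setting, is that the paper defines $L(D)$ as an $\F_q$-vector space of functions in $\F_q(X)$, while Riemann--Roch is usually stated over an algebraically closed field. I would handle this by Galois descent, that is, flat base change for $H^0(X,\mathcal{O}(D))$. Since $D$ is Frobenius-invariant, the $\overline{\F}_q$-space $L(D)\otimes_{\F_q}\overline{\F}_q$ coincides with the Riemann--Roch space of $D$ over $\overline{\F}_q$. Therefore the $\F_q$-dimension equals the geometric dimension. One can also choose the canonical divisor $K$ over $\F_q$, by taking the divisor of a differential defined over $\F_q$. The geometric statement then transfers verbatim.
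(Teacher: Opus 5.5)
Your proposal is correct and follows the standard argument behind the cited result in Hartshorne, which the paper invokes without proof: $\deg K=2g-2$ forces $\deg(K-D)<0$, hence $\ell(K-D)=0$, and Riemann--Roch then gives $\ell(D)=\deg(D)-g+1$. Your base-change remark, that $L(D)\otimes_{\F_q}\overline{\F}_q$ is the geometric Riemann--Roch space for an $\F_q$-rational $D$ on the geometrically irreducible curve $X$, supplies the justification for the paper's one-line assertion that Riemann--Roch applies to the $\F_q$-spaces $L(D)$.
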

	
	Let $\cP=(P_1,\dots,P_n)$ be a tuple of distinct $\F_q$-rational points on $X$ such that $\supp(D)\cap \cP=\emptyset$ and $n>\deg(D)$. The evaluation map
	$$\Ev_{\cP}:L(D)\to \F_q^n, \quad f \mapsto (f(P_1),\dots,f(P_n)).$$
	defines the AG code $\mathcal{C}_L(X, \mathcal{P}, D)$ as its image. Its kernel is $L(D-P_1-\cdots-P_n)$. This space is zero because $\deg(D-P_1-\cdots-P_n)<0$: a nonzero function $f$ in this space would make $(f)+D-P_1-\cdots-P_n$ an effective divisor. Since every principal divisor has degree zero, this effective divisor would have negative degree, which is impossible. Hence the evaluation map is injective, and $\mathcal{C}_L(X, \mathcal{P}, D)$ is an $[n,k]_q$ linear code with $k=\ell(D)\leq n$.

	\begin{lemma}\label{lm:AG_all_one}
		Write $D=\sum_P a_PP$ and let $D_+=\sum_P\max\{a_P,0\}P$. If $n>\deg(D_+)$, then $\mathbf{1}\in\mathcal{C}_L(X,\mathcal{P},D)$ if and only if $D$ is effective. 
	\end{lemma}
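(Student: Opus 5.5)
Both directions come from comparing divisors: we relate the constant function $1$ to the functions in $L(D)$, and use injectivity of the evaluation map on a suitable enlarged Riemann-Roch space.

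For the direction ``$D$ effective $\Rightarrow \One\in\mathcal{C}_L(X,\mathcal{P},D)$'', observe that the constant function $1$ has principal divisor $(1)=0$. Hence $(1)+D=D\geq 0$, so $1\in L(D)$. Since no $P_i$ lies in $\supp(D)$, the function $1$ is regular at every $P_i$, and $\Ev_{\cP}(1)=(1,\dots,1)=\One$. This direction does not need the hypothesis $n>\deg(D_+)$.

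For the converse, suppose $\One\in\mathcal{C}_L(X,\mathcal{P},D)$. Then there is some $f\in L(D)$ with $f(P_i)=1$ for all $i$, so $f-1$ vanishes at $P_1,\dots,P_n$. I will show that $f-1$ lies in $L(D_+-P_1-\cdots-P_n)$.
\begin{itemize}
\item Since $D\le D_+$, we have $f\in L(D)\subseteq L(D_+)$.
\item Since $D_+\geq 0$, we have $1\in L(D_+)$, and therefore $f-1\in L(D_+)$.
\item The points $P_i$ lie outside $\supp(D)$, which equals $\supp(D_+)\cup\supp(D-D_+)$, so $D_+$ has coefficient $0$ at each $P_i$.
\item As $f-1$ vanishes at each $P_i$, we get $v_{P_i}(f-1)\geq 1$, and hence $(f-1)+D_+-\sum_i P_i\geq 0$.
\end{itemize}
Now $\deg(D_+-\sum P_i)=\deg(D_+)-n<0$ by hypothesis. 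By the same degree argument the paper used for injectivity of $\Ev_{\cP}$, this space is $\{0\}$. Hence $f=1$, so $1\in L(D)$, i.e.\ $(1)+D=D\geq 0$, and $D$ is effective.

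The only delicate point is that $\mathcal{C}_L$ is injective only on $L(D)$, and $1$ need not a priori lie in $L(D)$. This is why we pass to the larger space $L(D_+)$, which contains both $f$ and $1$. The hypothesis $n>\deg(D_+)$, rather than $n>\deg(D)$, is exactly what injectivity on $L(D_+)$ requires. Everything else is a routine valuation check.
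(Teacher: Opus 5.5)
Your proof is correct and follows essentially the same route as the paper: set $f-1$, note that its poles are bounded by $D_+$ and that it vanishes at every $P_i$, so $f-1\in L(D_+-P_1-\cdots-P_n)=\{0\}$ because $n>\deg(D_+)$, which gives $f=1\in L(D)$. Your valuation checks (that $D_+$ has coefficient zero at each $P_i$, and that $1\in L(D_+)$) simply make explicit what the paper compresses into one sentence.
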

	\begin{proof}
		If $D$ is effective, then $1\in L(D)$, so its evaluation is $\mathbf{1}$. Conversely, suppose that $f\in L(D)$ evaluates to $\mathbf{1}$, and set $h=f-1$. The poles of $h$ are bounded by $D_+$, and $h$ vanishes at $P_1,\dots,P_n$. Hence
		$$h\in L(D_+-P_1-\cdots-P_n)=\{0\},$$
		where the equality follows from $n>\deg(D_+)$. Thus $f=1$. Since $1\in L(D)$, the divisor $D$ is effective.
	\end{proof}

    For simplicity, we restrict to one-point divisors of the form $D=dP$, where $P\in X(\F_q)$ and $d$ is a non-negative integer. This restriction is sufficient for our purposes, since one-point divisors already provide the families of AG codes needed for the results below; no optimization over the choice of divisor is required.
    
	To connect this standard construction to the analysis in \cref{Section4}, we show that $\mathcal{C}_L(X, \mathcal{P}, D)$ can be realized directly as a code generated by sampling points from an affine curve $Y$ defined over $\F_q$.

	\begin{lemma}\label{lm:code_equiv}
		Let $X$ be a smooth complete curve of genus $g$ defined over $\F_q$, let $P\in X(\F_q)$, and let $D=dP$ with $d\geq 2g+1$. Set $k=\ell(D)$ and choose a basis $f_0=1,f_1,\dots,f_{k-1}$ of $L(D)$. Then the map
		$$\psi:X\setminus\{P\}\longrightarrow\overline{\F}_q^k,
		\qquad Q\longmapsto(1,f_1(Q),\dots,f_{k-1}(Q))$$
		has image an irreducible MDS affine curve $Y$ of degree $d$. For every tuple $\cP=(P_1,\dots,P_n)$ of distinct points in $X(\F_q)\setminus\{P\}$, the vectors $\psi(P_1)^T,\dots,\psi(P_n)^T$ are the columns of a generator matrix of $\cC_L(X,\cP,D)$. Moreover, for every $S\subseteq X(\F_q)\setminus\{P\}$, the map $\psi$ gives a bijection from $S$ to $S'=\psi(S)\subseteq Y(\F_q)$. Thus, sampling points uniformly without replacement from $S$ is equivalent to sampling columns uniformly without replacement from $S'$.
	\end{lemma}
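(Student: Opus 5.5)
The proof breaks into four parts: the generator-matrix claim, the bijection, irreducibility plus the MDS property, and the degree count. Take them in that order.

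\emph{Generator matrix.} By Riemann--Roch, $d\geq 2g+1>2g-2$ gives $k=\ell(D)=d-g+1$. Since $D=dP$ is effective, $1\in L(D)$, so we may take a basis with $f_0=1$. Every $f_i$ has poles only at $P$, so $\psi$ is a regular morphism on $X\setminus\{P\}$. The evaluation matrix of the basis at $P_1,\dots,P_n$ has $j$-th column $\psi(P_j)^T$. Its row space is $\Ev_{\cP}(L(D))=\cC_L(X,\cP,D)$, and the code has dimension $k$ because the evaluation map is injective. So this matrix is a generator matrix.

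\emph{Bijection.} Since $d\geq 2g+1$, the complete linear system $|dP|$ is very ample. Its projective embedding $X\to\bbP^{k-1}$, $Q\mapsto[f_0(Q):\cdots:f_{k-1}(Q)]$, restricted to the affine chart $x_0=1$, is exactly $\psi$ on $X\setminus\{P\}$. Hence $\psi$ is injective. It is defined over $\F_q$, so it sends $X(\F_q)\setminus\{P\}$ into $Y(\F_q)$. Therefore $\psi|_S:S\to S'$ is a bijection, and uniform sampling without replacement corresponds exactly.

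\emph{Irreducibility and MDS.} Let $Y$ be the Zariski closure of the image. It is irreducible as the image of an irreducible curve, and it has dimension $1$ since $\psi$ is nonconstant ($k\geq 2$). I would identify $Y$ with the image itself by noting that the projective closure $\overline{Y}$ is the very ample embedding of $X$. This embedding is a closed immersion of a complete curve, so its image is closed. Removing the hyperplane section $x_0=0$ leaves the affine image, since that section is supported only at $P$: $f_0=1$ and the other $f_i$ have poles at $P$, so the embedding sends $P$ into $x_0=0$ and no other point. For the MDS property, use \cref{lm:MDS_hyperplane}(c). Suppose $\va\cdot\psi(Q)=0$ for all $Q$. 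Then $\sum a_if_i$ vanishes on $X\setminus\{P\}$, hence is the zero function, which forces $\va=\Zero$ by linear independence of the basis. So $Y$ is not contained in any hyperplane through the origin. It does lie in the affine hyperplane $x_1=1$, which is consistent with \cref{lm:code_containing_1}.

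\emph{Degree.} The degree of $Y$ is the number of intersection points with a generic hyperplane $\va\cdot\vx=c$. Such a hyperplane pulls back to the zeros of the nonzero function $f=\sum a_if_i-c\in L(D)$ on $X\setminus\{P\}$. The function $f$ has exactly $\deg((f)_\infty)$ zeros counted with multiplicity. This equals $d$ when the coefficient of an $f_i$ with pole order exactly $d$ is nonzero, which is generic: such an $f_i$ exists because $\ell(dP)>\ell((d-1)P)$ when $d\geq 2g$. Since $\psi$ is injective, and unramified where the embedding is a closed immersion, a generic hyperplane meets $Y$ in $d$ distinct points. So $\deg Y=d$. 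In any case we only need the Bezout-type bound of at most $d$ zeros, and this follows directly from the pole count.

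The main obstacle is the degree claim together with showing that the image, not just its closure, is the affine curve. Both rest on the embedding $Q\mapsto[1:f_1(Q):\cdots:f_{k-1}(Q)]$ being a closed immersion whose hyperplane $x_0=0$ meets $X$ only at $P$, with multiplicity $d$. I would cite the standard very-ampleness criterion $\deg D\geq 2g+1$ (Hartshorne IV.3.2) for this.
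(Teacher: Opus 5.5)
Your proposal is correct and follows essentially the same route as the paper. You embed $X$ by the complete linear system $|dP|$ (very ample since $d\geq 2g+1$), note that $\varphi_D(P)$ is the only point of the image on $x_0=0$ so the affine part is exactly $Y$, derive the MDS property from the linear independence of $f_0,\dots,f_{k-1}$, and get $\deg Y=d$ because hyperplane sections pull back to divisors of the form $(f)+dP$ with $f\in L(dP)$.

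The differences are cosmetic. You count zeros of $\sum_i a_if_i-c$ directly rather than applying the pullback-degree formula with $\deg\varphi_D=1$, and you use the hyperplane criterion of \cref{lm:MDS_hyperplane} rather than linear independence of generic points of $C$. One small correction in your degree step: the genericity condition should be that the leading pole coefficient of $\sum_i a_if_i$ at $P$ is nonzero. Requiring the coefficient of some particular $f_i$ with pole order $d$ to be nonzero is not enough, since several basis elements may have pole order $d$ and cancel.
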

	
	The proof uses standard results about embeddings defined by linear systems; for completeness, we give the details in \cref{appendix:code_equiv}.
	
	Because \cref{lm:code_equiv} realizes these AG codes as codes generated by points on an irreducible MDS affine curve, \cref{thm:Variety} gives the following consequence.
	\begin{corollary}
		Under the assumptions of \cref{lm:code_equiv}, let $n\geq2k$. For a generic tuple of evaluation points $\cP=(P_1,\dots,P_n)$, the code $\cC_L(X,\cP,D)$ is HS-optimal.
	\end{corollary}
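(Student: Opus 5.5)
The plan is to transfer \cref{thm:Variety} from the affine curve $Y$ back to the code $\cC_L(X,\cP,D)$ through the map $\psi$ of \cref{lm:code_equiv}. By that lemma, $Y=\psi(X\setminus\{P\})\subseteq\overline{\F}_q^k$ is an irreducible MDS affine curve. Since the first coordinate of every point of $Y$ equals $1$, $Y$ lies in the affine hyperplane $\{x_1=1\}$. \cref{thm:Variety} then says $Y$ is HS-optimal. Combining this with \cref{lm:code_containing_1}, for every $n\geq 2k$ there is a nonempty Zariski open subset $U\subseteq Y^n$ with the following property: for $(y_1,\dots,y_n)\in U$, the matrix with columns $y_i$ generates an $[n,k]$ code that contains $\One$ and corrects $n-2k+1$ insdel errors. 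Concretely, every verification determinant $\det M_{k,2k-1,I,J}$ with $|I\cap J|<k$ is nonzero on $U$, and by \cref{lm:condition} this is enough. Here we use that, for generic columns on an MDS variety, $\rank(G_{I\cap J})<k$ forces $|I\cap J|<k$, since any $k$ generic columns are independent.

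Next I pull $U$ back to $X$. Consider the map $\Psi=\psi^{n}:(X\setminus\{P\})^n\to Y^n$. It is a morphism of irreducible varieties: the $f_i$ are regular away from $P$, since their only pole is at $P$. It is also dominant, since by \cref{lm:code_equiv} its image is $Y^n$. Hence $\Psi^{-1}(U)$ is a nonempty Zariski open subset of $(X\setminus\{P\})^n$, and so is dense. I also intersect with the open set where the $P_i$ are pairwise distinct. This set is nonempty because $X$ is infinite over $\overline{\F}_q$, so the intersection is still nonempty and open. This intersection is the set of ``generic tuples'' in the statement.

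For such a tuple $\cP$, \cref{lm:code_equiv} shows that the columns $\psi(P_i)^T$ form a generator matrix of $\cC_L(X,\cP,D)$. That lemma is stated for rational points, but the same evaluation argument works over $\overline{\F}_q$ after base change, because $L(D)\otimes\overline{\F}_q$ has the same basis. Therefore $\cC_L(X,\cP,D)$ coincides with the code generated by a point of $U$, so it contains $\One$ (consistent with \cref{lm:AG_all_one}, since $D$ is effective) and is HS-optimal.

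The main obstacle is interpreting ``generic'' carefully. The codes live over $\F_q$, while genericity refers to $\overline{\F}_q$-points. So I would state the corollary as holding for tuples in a dense open subset of $(X\setminus\{P\})^n$, defined over $\overline{\F}_q$. The other point needing care is dominance of $\psi$ onto $Y$; this is exactly the content of \cref{lm:code_equiv}.
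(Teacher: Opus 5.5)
Your proposal is correct and follows the paper's route. The paper derives the corollary in one line, by applying \cref{thm:Variety} to the irreducible MDS affine curve $Y\subseteq\{x_1=1\}$ supplied by \cref{lm:code_equiv}; your extra steps make explicit details the paper leaves implicit, namely pulling the Zariski open set back along $\psi^n$, restricting to pairwise distinct points, and base-changing $L(D)$ to $\overline{\F}_q$.
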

	We now apply the probabilistic bound from \cref{coro:curve} to AG codes.
	
	\begin{theorem}\label{thm:AG_code}
		Let $X$ be a smooth complete curve of genus $g$ defined over $\F_q$, let $P\in X(\F_q)$, and let $D=dP$ with $d\geq2g+1$. Set $k=\ell(D)=d-g+1$, and let $S\subseteq X(\F_q)\setminus\{P\}$ satisfy $|S|>n>d$. Fix $0<\eps<1$, set $\ell'=2k-1+\lfloor\eps n\rfloor$, and define $\tau,r,\eps_0$ as in \eqref{eq:para_pick}. Suppose that $g\leq\tau$, $\ell'\leq n$, $r\geq1$, and $n$ is sufficiently large.
		Let $\cP=(P_1,\dots,P_n)$ be sampled uniformly without replacement from $S$. Then $\cC_L(X,\cP,D)$ is $\eps$-relaxed HS-optimal with probability at least
		$$1-2^{2n+2k+r}\left(\frac{2d(1/\eps_0+1)}{|S|-n}\right)^r.$$
	\end{theorem}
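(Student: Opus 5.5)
The plan is to reduce \cref{thm:AG_code} to the second bound of \cref{coro:curve} through the identification provided by \cref{lm:code_equiv}. Since $d\geq 2g+1>2g-2$, Riemann--Roch gives $k=\ell(D)=d-g+1$. Because $D=dP$ is effective, $1\in L(D)$, so we may take a basis with $f_0=1$. By \cref{lm:code_equiv}, the map $\psi$ sends $X\setminus\{P\}$ onto an irreducible MDS affine curve $Y\subseteq\overline{\F}_q^k$ of degree $d$. This curve lies in the affine hyperplane $\{x_1=1\}$, which is exactly the normalized standard case assumed in \cref{Section4}. The lemma also says that the columns $\psi(P_1)^T,\dots,\psi(P_n)^T$ form a generator matrix of $\cC_L(X,\cP,D)$. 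Moreover, $\psi$ restricted to $S$ is a bijection onto $S'=\psi(S)\subseteq Y(\F_q)$, so sampling $\cP$ uniformly without replacement from $S$ has the same distribution as sampling the columns of $G$ uniformly without replacement from $S'$, with $|S'|=|S|>n$.

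Next I would check the hypotheses of \cref{thm:main_prob} and \cref{lm:prob_algorithm} for $Y$ and $S'$. Namely, $Y$ is an irreducible MDS variety contained in an affine hyperplane, $|S'|>n$, $0<\eps<1$, $\ell'\leq n$, and $n$ is sufficiently large. The assumption $r\geq 1$ keeps the exponent in the bound meaningful; the parameter inequality itself is supplied by \cref{lm:parameter_condition}. Since $Y$ is an affine curve of degree $d$, \cref{coro:curve} then yields the first bound, which contains the extra term $2^{2n}p_{k+\tau}(S')$. Being $\eps$-relaxed HS-optimal is a property of the code alone, not of the chosen generator matrix, so the conclusion transfers to $\cC_L(X,\cP,D)$.

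The remaining step, which I expect to be the only real point to verify, is removing the $p_{k+\tau}$ term through the condition $d<k+\tau$ in \cref{coro:curve}. Since $k=d-g+1$, we have $d<k+\tau$ if and only if $g-1<\tau$, and this follows from the hypothesis $g\leq\tau$. The mechanism is that a hyperplane through the origin meets $Y$ in at most $d$ points by \cref{lm:bezout}, because $Y$ is MDS and hence not contained in such a hyperplane. So $k+\tau>d$ distinct points of $S'$ cannot all lie in one hyperplane, which forces $p_{k+\tau}(S')=0$. Two details need care here. First, the degree of $Y$ must be exactly $d$, which is supplied by \cref{lm:code_equiv}. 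Second, the sampled columns must be distinct points, which holds because $\psi$ is injective on $S$. With both in place, the second bound of \cref{coro:curve} gives precisely the stated probability $1-2^{2n+2k+r}\bigl(2d(1/\eps_0+1)/(|S|-n)\bigr)^r$.
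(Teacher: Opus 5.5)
Your proposal is correct and follows essentially the same route as the paper. Like the paper, you realize the code via \cref{lm:code_equiv} as sampling columns from the irreducible MDS affine curve $Y\subseteq\{x_1=1\}$ of degree $d$, then invoke the second bound of \cref{coro:curve}, with $d=k+g-1<k+\tau$ following from $g\leq\tau$ to eliminate the $p_{k+\tau}$ term.
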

	\begin{proof}
		By \cref{lm:code_equiv}, the chosen basis of $L(D)$ gives an irreducible MDS affine curve $Y\subseteq\overline{\F}_q^k$ of degree $d$. It also gives a bijection between $S$ and $S'=\psi(S)\subseteq Y(\F_q)$ under which the sampled points become the columns of a generator matrix of $\cC_L(X,\cP,D)$. In particular, sampling from $S$ is equivalent to sampling columns uniformly without replacement from $S'$.
		
		Since $Y\subseteq\{x_1=1\}$, it is contained in an affine hyperplane. All the assumptions of \cref{thm:main_prob}, and hence of \cref{coro:curve}, are therefore satisfied for $Y$ and $S'$. Finally,
		$$\deg(Y)=d=k+g-1<k+\tau,$$
		where the strict inequality follows from $g\leq\tau$. The second bound in \cref{coro:curve} now gives the stated probability. Notice that the argument applies the corollary directly to $Y$; no comparison between the parameters $P_q(X,2)$ and $P_q(Y,2)$ is needed.
	\end{proof}

	The probability bound in \cref{thm:AG_code} depends on the ratio $\deg(D)/(|S|-n)$, while its degree condition requires $g\leq\tau=\Theta(n)$. Since $\deg(D)=k+g-1$, we need curves whose genus is not too large relative to the block length and that still provide many more than $n$ rational points for sampling. The ratio $|X(\F_q)|/g$ measures this tradeoff between the available evaluation points and the genus. We first recall the Serre bound:
	\begin{lemma}[Serre bound]\label{lm:Serre}
		Let $N = |X(\mathbb{F}_q)|$ be the number of rational points on a smooth complete curve of genus $g$ over $\mathbb{F}_q$. Then
		$$|N-(q+1)|\leq g\lfloor 2\sqrt{q} \rfloor.$$ 
	\end{lemma}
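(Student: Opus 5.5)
The plan is to derive the bound from the Weil conjectures for curves, which we cite, and then apply Serre's integrality refinement to go from the Hasse--Weil bound $g\cdot 2\sqrt q$ down to $g\lfloor 2\sqrt q\rfloor$.

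\emph{Step 1: the zeta function.} For the smooth complete curve $X/\F_q$, write the zeta function as
$$Z(t)=\frac{L(t)}{(1-t)(1-qt)},$$
where $L(t)\in\mathbb{Z}[t]$ has degree $2g$ and factors as $L(t)=\prod_{i=1}^{2g}(1-\omega_it)$. Comparing the coefficients of $t$ in the logarithmic derivative gives
$$N=q+1-\sum_{i=1}^{2g}\omega_i.$$
We also use the functional equation $L(t)=q^gt^{2g}L(1/(qt))$. By Weil's theorem (the Riemann hypothesis for curves over finite fields), $|\omega_i|=\sqrt q$ for all $i$. This is the deep input; we import it from the literature rather than reprove it.

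\emph{Step 2: reduce to real algebraic integers.} Since $|\omega_i|^2=q$, we have $\overline{\omega_i}=q/\omega_i$. Together with the functional equation, this lets us order the roots in conjugate pairs $\omega_i,\overline{\omega_i}$ for $i\le g$. Set
$$x_i=\omega_i+\overline{\omega_i}\in\mathbb{R},\qquad |x_i|\le 2\sqrt q,\qquad N-(q+1)=-\sum_{i=1}^g x_i.$$
Next consider $h(T)=\prod_{i=1}^g(T-x_i)$. Writing $t^{2g}L(1/t)=\prod_i(t^2-x_it+q)$ and substituting $t+q/t=T$ shows that $h$ is a monic polynomial with integer coefficients. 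Hence the multiset $\{x_i\}$ consists of totally real algebraic integers and is stable under Galois conjugation.

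\emph{Step 3: Serre's AM--GM trick.} Let $m=\lfloor 2\sqrt q\rfloor$. The numbers $y_i=m+1+x_i$ and $z_i=m+1-x_i$ are strictly positive, since $|x_i|\le 2\sqrt q<m+1$. Their products $\prod_i y_i=(-1)^gh(-m-1)$ and $\prod_i z_i=h(m+1)$ are nonzero integers and positive, hence at least $1$. Applying AM--GM to the $y_i$ gives
$$\frac1g\sum_i y_i\ge\Bigl(\prod_i y_i\Bigr)^{1/g}\ge1,$$
which rearranges to $\sum_i x_i\ge -gm$. The same argument with the $z_i$ gives $\sum_i x_i\le gm$. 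Combining these with $N-(q+1)=-\sum_i x_i$ yields $|N-(q+1)|\le g\lfloor2\sqrt q\rfloor$. The case $g=0$ is immediate, since then $N=q+1$.

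The only genuine obstacle is Weil's theorem $|\omega_i|=\sqrt q$, which we cite. The remaining delicate point is checking that $h$ has integer coefficients, since that is exactly what makes the two products integers.
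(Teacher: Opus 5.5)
Your proof is correct. It is the classical argument for Serre's refinement: Weil's theorem $|\omega_i|=\sqrt q$, integrality of the polynomial $h$ whose roots are the real numbers $x_i=\omega_i+\overline{\omega_i}$, and AM--GM applied to the positive numbers $m+1\pm x_i$. The paper does not prove this lemma. It recalls it as a standard fact and uses it only in \cref{thm:prob_curve} to get $|S|\geq q-2g\sqrt q$. So there is no competing route to compare; you are supplying the argument the paper imports from the literature.

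Three places could be tightened.
\begin{itemize}
\item In Step~2, the pairing needs care when $\pm\sqrt q$ occurs among the $\omega_i$. Complex conjugation fixes these roots, so you need each of $\sqrt q$ and $-\sqrt q$ to have even multiplicity. This follows from $\prod_i\omega_i=q^g>0$, the leading coefficient of $L$, which your functional equation forces.
\item A cleaner route handles both the pairing and integrality at once. The functional equation makes the Laurent polynomial $t^gL(1/t)=\sum_j a_jt^{g-j}$ invariant under $t\mapsto q/t$, so it equals $h(t+q/t)$. The recursion $D_j=TD_{j-1}-qD_{j-2}$, with $D_0=2$ and $D_1=T$, where $D_j(t+q/t)=t^j+(q/t)^j$, shows that $t^j+(q/t)^j$ is a monic integer polynomial in $T=t+q/t$ for $j\geq1$. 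Hence $h\in\mathbb{Z}[T]$ is monic. Factoring $h$ over $\mathbb{C}$ into quadratics $t^2-T_0t+q$ then produces the conjugate pairs automatically.
\item The zeta-function shape you use requires $X$ to be geometrically irreducible, so that its constant field is $\mathbb{F}_q$. This is the paper's standing assumption in \cref{section5}.
\end{itemize}
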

	When $q$ is a square, curves attaining the Hasse-Weil upper bound $N=q+1+2g\sqrt{q}$ are called \emph{maximal curves}.
	
	For fixed genus $g$, the Serre bound gives $|X(\F_q)|=q+O(\sqrt q)$. This allows a linear field size to provide enough sampling points for the following result.
	
	\begin{theorem}\label{thm:prob_curve}
		Fix a rational rate $R\in(0,1/2)$, then fix $\eps\in(0,1-2R)$ and a non-negative integer $g$. For all sufficiently large $n$ such that $k=Rn$ is an integer, let $q$ be a prime power satisfying
		$$q>(1+2/\eps)2^{2+32/\eps^2}n.$$
		Let $X$ be a smooth complete curve of genus $g$ defined over $\F_q$, and fix $P\in X(\F_q)$. Set $D=(k+g-1)P$ and $S=X(\F_q)\setminus\{P\}$. If $\cP$ is sampled uniformly without replacement from $S$, then $\cC_L(X,\cP,D)$ has rate $R$ and is $\eps$-relaxed HS-optimal with probability at least $1-2^{-n}$.
	\end{theorem}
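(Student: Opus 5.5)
The plan is to apply \cref{thm:AG_code} directly and then check that the probability bound there is at most $2^{-n}$ under the stated lower bound on $q$. First the hypotheses of \cref{thm:AG_code} need to be verified. We have $D=(k+g-1)P$ with $d=k+g-1$. Since $k=Rn\to\infty$ while $g$ is fixed, $d\geq 2g+1$ holds for large $n$, and Riemann--Roch then gives $\ell(D)=d-g+1=k$, so the code has dimension $k$ and rate $R$. Next, $\tau=\lfloor\eps n/8\rfloor\geq g$ for large $n$, and $r=\lfloor\eps^2n/8\rfloor\geq1$. For $\ell'\leq n$, note that $\ell'=2k-1+\lfloor\eps n\rfloor\leq(2R+\eps)n<n$ because $\eps<1-2R$. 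For $|S|>n>d$, the Serre bound (\cref{lm:Serre}) gives $|S|\geq q-g\lfloor2\sqrt q\rfloor$, which is much larger than $n$ since $q\geq Cn$ with a huge constant $C$. Finally, $d=Rn+g-1<n$.

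Second, we bound the failure term
$$2^{2n+2k+r}\left(\frac{2d(1/\eps_0+1)}{|S|-n}\right)^r.$$
With $\eps_0=\eps/2$ we have $1/\eps_0+1=1+2/\eps$. Using $d\leq n$ and $2k\leq n$, the prefactor is at most $2^{3n+r}$. For the denominator, $|S|-n\geq q-2g\sqrt q-n\geq q/2$ for large $n$. So the ratio is at most
$$\frac{4n(1+2/\eps)}{q}<2^{-32/\eps^2},$$
by the assumed bound $q>(1+2/\eps)2^{2+32/\eps^2}n$. The failure probability is therefore at most
$$2^{3n+r}\,2^{-32r/\eps^2}.$$
Since $r\geq\eps^2n/8-1$, we get $32r/\eps^2\geq4n-32/\eps^2$. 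The exponent is then at most
$$3n+r-4n+\frac{32}{\eps^2}\leq -n+\frac{\eps^2n}{8}+\frac{32}{\eps^2}.$$

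This is where the main obstacle lies: the crude estimates above leave an extra $+r$ and a constant in the exponent, so the bound comes out slightly weaker than $2^{-n}$. I would fix this by keeping the constants tighter. One option is to bound $|S|-n\geq q(1-o(1))$ instead of $q/2$, which gains a factor $2^{-r}$ that cancels the $2^{r}$ in the prefactor. Another is to use the exact prefactor $2^{2n+2k+r}$ with $2k<(1-\eps)n$, which gives slack $2^{-\eps n}$. That slack absorbs the additive constant $32/\eps^2$ for large $n$, since $\eps n$ eventually exceeds any fixed constant. After this bookkeeping the exponent is at most $-n$ for all sufficiently large $n$, and \cref{thm:AG_code} yields success probability at least $1-2^{-n}$.
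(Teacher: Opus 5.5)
Your argument is correct and is essentially the paper's proof. You check the hypotheses of \cref{thm:AG_code}, use the Serre bound to get $|S|-n\geq q/2$, bound the ratio by a power of $2$ using the assumed lower bound on $q$, and close the logarithmic estimate with the slack coming from $R<1/2$.

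One point needs correcting in your ``obstacle'' paragraph. The leftover $+r$ only appears because you bounded $d\leq n$. The paper instead uses $\deg(D)=Rn+g-1\leq n/2$, which gives ratio $<2^{-1-32/\eps^2}$ and cancels the $2^r$ exactly.

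Your first fix is not valid as stated. The theorem only requires $q$ to exceed a fixed constant multiple of $n$, so $n/q$ need not tend to $0$, and $|S|-n\geq q(1-o(1))$ fails.

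Your second fix works on its own and needs no help from the first. The slack $(1-2R)n>\eps n$ exceeds $r+32/\eps^2\leq \eps^2n/8+32/\eps^2$ for all large $n$, so it absorbs both the $+r$ and the constant.
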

	\begin{proof}
		The choice of $D$ and the Riemann-Roch theorem give
		$$\ell(D)=\deg(D)-g+1=k,$$
		so the code has rate $k/n=R$. Since $g$ is fixed, for all sufficiently large $n$ we have $\deg(D)=k+g-1\geq2g+1$, $n>\deg(D)$, $g\leq\tau$, $\ell'\leq n$, and $r\geq1$. Thus, all assumptions of \cref{thm:AG_code} hold.
		
		By \cref{lm:Serre}, $|S|\geq q-2g\sqrt q$. For all sufficiently large $n$, the assumptions on $q$ and the fixed value of $g$ imply
		$$2g\sqrt q\leq\frac q4,
		\qquad n\leq\frac q4,
		\qquad \deg(D)\leq\frac n2.$$
		Consequently, $|S|-n\geq q/2$ and
		\begin{align*}
			\frac{2\deg(D)(1/\eps_0+1)}{|S|-n}
			&\leq\frac{2n(2/\eps+1)}{q}\\
			&<2^{-1-32/\eps^2}.
		\end{align*}
		Let $F$ denote the failure-probability bound in \cref{thm:AG_code}. Since $k=Rn$ and $r\geq\eps^2n/8-1$, the preceding estimate gives
		\begin{align*}
			\log_2 F
			&\leq 2n+2k+r-r(1+32/\eps^2)\\
			&=(2+2R)n-\frac{32r}{\eps^2}\\
			&\leq(-2+2R)n+\frac{32}{\eps^2}\\
			&\leq-n
		\end{align*}
		for all sufficiently large $n$, where the last inequality uses $R<1/2$. Thus $F\leq2^{-n}$, as claimed.
	\end{proof}
    By taking $q$ to be the smallest prime power greater than the displayed lower bound, we can ensure that $q=\Theta(n)$.
	
	\begin{example}[Reed-Solomon Codes]
		Let $X=\bbP^1$ over $\F_q$, and let $P_\infty$ be its unique point at infinity. The Riemann-Roch space $L((k-1)P_\infty)$ consists of the polynomials of degree at most $k-1$. Therefore, evaluating this space at $n$ distinct points of $\bbP^1(\F_q)\setminus\{P_\infty\}=\F_q$ gives an $[n,k]_q$ Reed-Solomon code. Equivalently, when $n<q$, this code is obtained by puncturing the full-length Reed-Solomon code. Thus \cref{thm:prob_curve} shows that, for $q=\Theta(n)$, a random punctured Reed-Solomon code is $\eps$-relaxed HS-optimal with high probability, as also proved in \cite{con2024random}.
	\end{example}
	
	We now examine specific families of curves that allow for smaller field sizes.
	
	\paragraph{Sublinear Fields via Hermitian Curves} Let $q$ be a square prime power. We denote by $X_q$ the smooth projective model over $\F_q$ of the affine Hermitian curve
	$$y^{\sqrt q}+y=x^{\sqrt q+1}.$$
	Its genus is
	$$g_q=\frac{q-\sqrt q}{2},$$
	and it has $q\sqrt q+1$ rational points over $\F_q$. The curve has a unique point at infinity, denoted by $P_\infty$. A one-point Hermitian code is an AG code of the form
	$$\cC_L(X_q,\cP,mP_\infty),$$
	where the evaluation points in $\cP$ are chosen from $X_q(\F_q)\setminus\{P_\infty\}$. We show that random punctured Hermitian codes are $\eps$-relaxed HS-optimal with high probability when $q=\Theta(n^{2/3})$.

	\begin{theorem}\label{thm:prob_Hermitian}
		Fix a rational rate $R\in(0,1/2)$, then fix $\eps\in(0,1-2R)$. Set
		$$C_\eps=(1+2/\eps)2^{2+32/\eps^2}.$$
		For all sufficiently large $n$ such that $k=Rn$ is an integer, choose a square prime power $q$ satisfying
		$$(C_\eps n)^{2/3}<q\leq4(C_\eps n)^{2/3}.$$
		Let $X_q$ be the Hermitian curve above, fix $P\in X_q(\F_q)$, and set $D=(k+g_q-1)P$ and $S=X_q(\F_q)\setminus\{P\}$. If $\cP$ is sampled uniformly without replacement from $S$, then $\cC_L(X_q,\cP,D)$ has rate $R$ and is $\eps$-relaxed HS-optimal with probability at least $1-2^{-n}$.
	\end{theorem}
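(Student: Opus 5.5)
The plan is to apply \cref{thm:AG_code} directly, as in the proof of \cref{thm:prob_curve}, after checking that its hypotheses hold for the Hermitian curve. The only new feature is that the genus $g_q=(q-\sqrt q)/2$ now grows with $n$.

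First, a square prime power $q$ in the interval $((C_\eps n)^{2/3},4(C_\eps n)^{2/3}]$ exists: for example, take a power of $4$ in that interval. Let $d=\deg(D)=k+g_q-1$. Since $g_q<q/2=O_\eps(n^{2/3})=o(n)$ and $k=Rn$, for sufficiently large $n$ we have $d\geq 2g_q+1$, because $k\geq g_q+2$. The Riemann-Roch theorem then gives $\ell(D)=d-g_q+1=k$, so the rate is $R$. The remaining hypotheses of \cref{thm:AG_code} also hold for large $n$: $n>d$ because $d=Rn+o(n)$ and $R<1/2$; $g_q\leq\tau=\lfloor\eps n/8\rfloor$ because $g_q=o(n)$; and $\ell'\leq n$ and $r\geq 1$ because $\eps<1-2R$ and $n$ is large. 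The key point is that $g_q=\Theta(n^{2/3})$ is sublinear, so the condition $g\leq\tau$ remains valid.

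Next I would bound the ratio in the failure probability. We have $|S|=q\sqrt q$, and $q^{3/2}>C_\eps n$ by the choice of $q$. Hence $|S|-n>(C_\eps-1)n\geq C_\eps n/2$, with room to spare. Also $d\leq n/2$ for large $n$, since $d=Rn+o(n)$. Therefore
$$\frac{2d(1/\eps_0+1)}{|S|-n}\leq\frac{n(2/\eps+1)}{C_\eps n/2}=\frac{2(1+2/\eps)}{(1+2/\eps)2^{2+32/\eps^2}}=2^{-1-32/\eps^2}.$$
This is exactly the estimate used in the proof of \cref{thm:prob_curve}. The final computation carries over verbatim: with $r\geq\eps^2n/8-1$,
$$\log_2F\leq 2n+2k+r-r(1+32/\eps^2)\leq(-2+2R)n+32/\eps^2\leq -n$$
for large $n$, because $R<1/2$.

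The main thing to watch is that the degree $d$ is not fixed. It exceeds $k$ by $g_q-1=\Theta(n^{2/3})$, so I must verify $d\leq n/2$ and $d<k+\tau$ using only $g_q=o(n)$. One small catch is the bound $d\leq n/2$: if $R$ is close to $1/2$, then $Rn+g_q$ need not be at most $n/2$. In that case I would instead use $d\leq n$, which costs an extra factor of $2$ in the ratio. This is still fine because the bound $q^{3/2}>C_\eps n$ gives $|S|-n\geq (C_\eps-1) n$, which absorbs the factor. Alternatively, the ratio can simply be bounded by $2^{-32/\eps^2}$; the exponent in $\log_2F$ still gives $(-1+2R)n$ plus $O(1)$ terms, which is negative but may not reach $-n$. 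The cleaner option is to use $d\leq n$ together with $|S|-n\geq (C_\eps-1)n\geq 4(1+2/\eps)2^{32/\eps^2}n-n$, which recovers $2^{-1-32/\eps^2}$ up to a constant-factor check. This bookkeeping is the only step that needs care.
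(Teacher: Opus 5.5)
Your proposal is correct and follows the paper's proof essentially step for step. You check the hypotheses of \cref{thm:AG_code} using $g_q=O(n^{2/3})=o(n)$, bound the ratio via $|S|=q^{3/2}>C_\eps n$ and $\deg(D)\leq n/2$, and reuse the logarithmic estimate from \cref{thm:prob_curve}. The ``catch'' in your last paragraph is not a real issue: $R<1/2$ is fixed before $n\to\infty$, so $(1/2-R)n$ eventually dominates $g_q-1=O(n^{2/3})$, and $\deg(D)\leq n/2$ holds for all sufficiently large $n$. The fallback discussion can simply be dropped.
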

	\begin{proof}
		Such a $q$ always exists: take the smallest power of $4$ greater than $(C_\eps n)^{2/3}$. The upper bound on $q$ gives $g_q=(q-\sqrt q)/2=O(n^{2/3})$. Hence, for all sufficiently large $n$, we have $g_q\leq\tau$, $\deg(D)\geq2g_q+1$, $n>\deg(D)$, $\ell'\leq n$, and $r\geq1$. Thus, all assumptions of \cref{thm:AG_code} hold. The Riemann-Roch theorem also gives $\ell(D)=k$, so the code has rate $R$.
		
		Here $|S|=q\sqrt q$. Since $q^{3/2}>C_\eps n$, we have
		$$|S|-n>(C_\eps-1)n.$$
		Moreover, $g_q=o(n)$, so $\deg(D)=k+g_q-1\leq n/2$ for all sufficiently large $n$. Therefore,
		\begin{align*}
			\frac{2\deg(D)(1/\eps_0+1)}{|S|-n}
			&<\frac{(2/\eps+1)}{C_\eps-1}\\
			&<2^{-1-32/\eps^2}.
		\end{align*}
		The final inequality follows from $C_\eps=(1+2/\eps)2^{2+32/\eps^2}$. Applying the same logarithmic estimate as in the proof of \cref{thm:prob_curve} gives a failure probability at most $2^{-n}$.
	\end{proof}
	
	\paragraph{Constant Field Sizes: The Garc\'{i}a-Stichtenoth Tower}	
	For a fixed field size $q$, the asymptotic behavior of the quotient $|X(\F_q)|/g$ is limited by the Drinfeld-Vl\v{a}du\c{t} bound:
	\begin{lemma}[Drinfeld-Vl\v{a}du\c{t} bound]
		Let $N_q(g)$ denote the maximum number of rational points on a smooth projective curve of genus $g$ over $\mathbb{F}_q$. The asymptotic quantity $A(q)$ is defined and bounded as follows:
		$$A(q)=\limsup_{g\to \infty}\frac{N_q(g)}{g}\leq \sqrt{q}-1.$$
	\end{lemma}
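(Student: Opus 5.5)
This is the classical Drinfeld--Vl\v{a}du\c{t} bound. I would prove it by Ihara's explicit-formula argument, using only the Weil conjectures for curves (the Riemann hypothesis for the zeta function of $X$) and a nonnegative trigonometric polynomial.

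\textbf{Setup.} Let $X$ be a smooth projective curve of genus $g$ over $\F_q$, and write $N_r=|X(\F_{q^r})|$. By Weil there are algebraic integers $\omega_1,\dots,\omega_{2g}$ with $|\omega_i|=\sqrt q$ such that
$$N_r=q^r+1-\sum_{i=1}^{2g}\omega_i^r\quad\text{for all } r\geq1.$$
The multiset $\{\omega_i\}$ is closed under complex conjugation. Write $\omega_i=\sqrt q\,e^{\mathrm{i}\theta_i}$. The only arithmetic input beyond this is the trivial inequality $N_r\geq N_1$, which holds because every $\F_q$-rational point is also $\F_{q^r}$-rational.

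\textbf{Positivity step.} Fix $m\geq1$ and coefficients $c_r\geq0$ for $1\leq r\leq m$ such that
$$f(\theta)=1+2\sum_{r=1}^m c_r\cos(r\theta)\geq0\quad\text{for all real }\theta.$$
I would take the Fej\'er kernel, $c_r=1-r/m$. Summing $f(\theta_i)$ over all $2g$ angles and using conjugation symmetry gives
$$\sum_{i=1}^{2g}\cos(r\theta_i)=q^{-r/2}\,(q^r+1-N_r).$$
Hence
$$0\leq 2g+2\sum_{r=1}^m c_r\,q^{-r/2}(q^r+1-N_r).$$
Replacing each $N_r$ by the smaller quantity $N_1$ and rearranging yields
$$N_1\sum_{r=1}^m c_r q^{-r/2}\;\leq\; g+\sum_{r=1}^m c_r\bigl(q^{r/2}+q^{-r/2}\bigr).$$
The right-hand side is $g$ plus a constant depending only on $q$ and $m$, not on the curve.

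\textbf{Limits.} Divide by $g$ and take the $\limsup$ over curves with $g\to\infty$, keeping $m$ fixed. The constant term disappears, giving
$$A(q)\leq\Bigl(\sum_{r=1}^m (1-r/m)\,q^{-r/2}\Bigr)^{-1}.$$
Now let $m\to\infty$. The sum increases monotonically to $\sum_{r\geq1}q^{-r/2}=1/(\sqrt q-1)$, so $A(q)\leq\sqrt q-1$.

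\textbf{Where care is needed.} The main point is the choice of test function: it must be nonnegative, so that the Weil bound can be used without knowing the angles, and its coefficients must make $\sum c_r q^{-r/2}$ approach its supremum. The Fej\'er kernel satisfies both, since its nonnegativity is the standard identity $f(\theta)=\frac1m\bigl|\sum_{j=0}^{m-1}e^{\mathrm{i}j\theta}\bigr|^2$. The other thing to check is the bookkeeping of the factor $2$ in the sum over the $2g$ conjugate-closed roots, so that the final inequality has coefficient $g$ rather than $2g$.
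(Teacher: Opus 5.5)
The paper gives no proof of this lemma. It is quoted as a classical result and serves only to motivate the choice of the Garc\'{i}a--Stichtenoth tower; the proof of \cref{thm:prob_tower} uses just the tower's genus formula and rational-point count, never the bound itself.

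Your argument is the standard Drinfeld--Vl\v{a}du\c{t} proof, and it is correct:
\begin{itemize}
\item The identity $\sum_i\cos(r\theta_i)=q^{-r/2}(q^r+1-N_r)$ holds.
\item Halving leaves the coefficient $g$.
\item Replacing $N_r$ by $N_1$ is legitimate because each weight $c_rq^{-r/2}$ is nonnegative.
\item Passing to $N_q(g)$ is fine, since your inequality holds uniformly for every curve of genus $g$.
\item $\sum_{r=1}^m(1-r/m)q^{-r/2}$ increases monotonically to $1/(\sqrt{q}-1)$.
\end{itemize}

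Two minor remarks. First, Weil's formula requires the curve to be geometrically irreducible. This is harmless: a smooth projective curve over $\F_q$ whose constant field is strictly larger than $\F_q$ has no $\F_q$-rational points. Second, the attribution is slightly off. Ihara's argument used only $N_2\geq N_1$ and yields the weaker asymptotic bound $(\sqrt{8q+1}-1)/2$. Exploiting $N_r\geq N_1$ for all $r$ with a nonnegative test function, as you do, is what produces the Drinfeld--Vl\v{a}du\c{t} bound $\sqrt{q}-1$.
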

	
	Let $p$ be a prime power and set $q=p^2$. We use the first Garc\'{i}a-Stichtenoth tower over $\F_q$, which attains this bound \cite{garcia1996asymptotic}. Its function fields are defined recursively by
	$$F_{p,1}=\F_q(x_1),\qquad
	F_{p,t+1}=F_{p,t}(x_{t+1}),$$
	where
	$$x_{t+1}^p+x_{t+1}=\frac{x_t^p}{x_t^{p-1}+1}.$$
	Let $K_{p,t}$ be the smooth complete curve over $\F_q$ with function field $F_{p,t}$. Its genus is
	$$g(K_{p,t})=
	\begin{cases}
		(p^{t/2}-1)^2 & \text{ if } t \text{ is even,}\\
		(p^{(t+1)/2}-1)(p^{(t-1)/2}-1) & \text{ if } t \text{ is odd.}\\
	\end{cases}$$
	The curve $K_{p,t}$ has one $\F_q$-rational point at infinity and at least $p^{t-1}(p^2-p)$ affine $\F_q$-rational points. Thus
	$$|K_{p,t}(\F_q)|\geq p^{t-1}(p^2-p)+1.$$
	
	\begin{theorem}\label{thm:prob_tower}
		Fix a rational rate $R\in(0,1/2)$, then fix $\eps\in(0,1-2R)$. There exists a prime power
		$$p=2^{O_R(1/\eps^2)}$$
		and an infinite family of random AG codes of rate $R$ over the fixed field $\F_{p^2}$, with lengths $n_t\to\infty$. For every sufficiently large $t$, the code of length $n_t$ is $\eps$-relaxed HS-optimal with probability at least $1-2^{-n_t}$.
	\end{theorem}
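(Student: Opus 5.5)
The plan is to apply \cref{thm:AG_code} to the curves $K_{p,t}$ of the first Garc\'{i}a-Stichtenoth tower over $\F_q$, $q=p^2$, with a fixed $p$. We take $P=P_\infty$ and $S=K_{p,t}(\F_q)\setminus\{P_\infty\}$, so that $|S|\geq p^{t-1}(p^2-p)$. The length $n_t$ is chosen as a constant fraction of $|S|$, and $k_t=Rn_t$. The key point is that the genus, and hence the excess $\deg(D)-k=g-1$, is only about a $1/(p-1)$ fraction of $|S|$. Consequently, for $p$ large in terms of $\eps$, the condition $g\leq\tau=\lfloor\eps n/8\rfloor$ holds.

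First, I fix the parameters. Write $g_t=g(K_{p,t})$. From the genus formula, $g_t\leq p^t$ in both parities, so $g_t/|S|\leq p/(p^2-p)\cdot p^{0}\cdot(1+o(1))$, i.e. $g_t\leq |S|/(p-1)$ up to lower-order terms. I set $n_t=\lfloor |S|/C\rfloor$ for a constant $C>1$ chosen below, rounded down to a multiple of the denominator of $R$ so that $k_t=Rn_t$ is an integer. Then $n_t\to\infty$, and $g_t\leq Cn_t/(p-1)+O(1)$. Requiring $C/(p-1)<\eps/8$ gives $g_t\leq\tau$ for large $t$. The conditions $d=k_t+g_t-1\geq 2g_t+1$ (which needs $k_t>g_t+1$, true since $R n_t\gg Cn_t/(p-1)$ once $p-1>C/R$), $n_t>d$, $\ell'\leq n_t$ (using $\eps<1-2R$), and $r\geq1$ then also hold, and Riemann--Roch gives rate $R$.

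Next, I bound the probability exactly as in \cref{thm:prob_curve}. We have $|S|-n_t\geq (C-1)n_t$ and $d\leq n_t/2$, since $k_t\leq n_t/2$ and $g_t=o_p(1)\cdot n_t$. This gives
$$\frac{2d(1/\eps_0+1)}{|S|-n_t}\leq\frac{2/\eps+1}{C-1}.$$
Choosing $C=C_\eps=(1+2/\eps)2^{2+32/\eps^2}$ makes this less than $2^{-1-32/\eps^2}$. The same logarithmic computation as in \cref{thm:prob_curve} then gives failure probability at most $2^{-n_t}$ for large $t$, using $R<1/2$.

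Finally, I choose $p$. The requirements are $p-1>8C_\eps/\eps$ and $p-1>C_\eps/R$ (with a factor of $2$ slack for the genus estimate), so any prime power $p$ with $p>\max\{16C_\eps/\eps,\,2C_\eps/R\}+1$ works. Taking $p$ to be the next power of $2$ gives $p=2^{O(1/\eps^2)+O(\log(1/\eps))+O(\log(1/R))}=2^{O_R(1/\eps^2)}$.

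The main obstacle I expect is bookkeeping the genus-to-points ratio carefully enough, in both parities of $t$, to guarantee $g_t\leq\tau$ and $d\geq 2g_t+1$ simultaneously. This is where $p$ must grow like $2^{32/\eps^2}/\eps$. The probabilistic part is a direct reuse of \cref{thm:AG_code}.
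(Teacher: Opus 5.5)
Your proposal is correct and is essentially the paper's argument: apply \cref{thm:AG_code} to $K_{p,t}$ with $D_t=(k_t+g_t-1)P_\infty$, bound $g_t<p^t\leq|S_t|/(p-1)$, and rerun the logarithmic estimate of \cref{thm:prob_curve}. One small fix: justify $d\leq n_t/2$ by $g_t\leq\tau\leq\eps n_t/8<(1/2-R)n_t$ rather than by ``$g_t=o_p(1)\,n_t$''.

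The only real difference is bookkeeping. The paper fixes $n_t\approx cp^t$ with $c=O_R(1/\eps)$, so $g_t\leq\tau$ comes cheaply and the large $p$ is spent on making $\deg(D_t)/(|S_t|-n_t)$ small. You instead fix $|S_t|/n_t\approx C_\eps$, as in the Hermitian case, so the ratio bound is automatic and the large $p$ is spent on forcing $g_t\leq\tau$ and $k_t>g_t+1$; both choices yield $p=2^{O_R(1/\eps^2)}$.
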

	\begin{proof}
		Write $R=a/b$ in lowest terms and set
		$$c=1+\max\left\{\frac8\eps,\frac2R,\frac2{1-R}\right\}.$$
		Choose $p$ to be the smallest power of $2$ such that $p>c+2$ and
		$$\frac{2\bigl(R(c+1)+1\bigr)(2/\eps+1)}{p-c-2}
		<2^{-1-32/\eps^2}.$$
		Since $c=O_R(1/\eps)$, this choice gives $p=2^{O_R(1/\eps^2)}$.
		For each level $t$, set
		$$n_t=b\left\lceil\frac{cp^t}{b}\right\rceil,
		\qquad k_t=Rn_t.$$
		Then $k_t$ is an integer, $k_t/n_t=R$, and $n_t\to\infty$.
		Write $g_t=g(K_{p,t})$. Let $P_\infty$ be the rational point at infinity of $K_{p,t}$, and define
		$$D_t=(k_t+g_t-1)P_\infty,
		\qquad S_t=K_{p,t}(\F_{p^2})\setminus\{P_\infty\}.$$
		
		The genus formula gives $g_t<p^t$. The choice of $c$ implies, for all sufficiently large $t$, that
		$$g_t\leq\left\lfloor\frac{\eps n_t}{8}\right\rfloor,
		\qquad \deg(D_t)\geq2g_t+1,
		\qquad n_t>\deg(D_t).$$
		Indeed, the first two inequalities follow respectively from $c>8/\eps$ and $c>2/R$, while the last follows from
		$$\limsup_{t\to\infty}\frac{\deg(D_t)}{n_t}
		\leq R+\frac1c<1.$$
		Since $2R+\eps<1$, we have $\ell'\leq n_t$ and $r\geq1$ for all sufficiently large $t$. Thus, the remaining assumptions of \cref{thm:AG_code} hold. The Riemann-Roch theorem gives $\ell(D_t)=k_t$, so each code has rate $R$.
		
		For all sufficiently large $t$, we have $n_t\leq(c+1)p^t$. Hence the rational-point bound for the tower gives
		$$|S_t|-n_t\geq p^t(p-1)-n_t
		\geq p^t(p-c-2).$$
		Moreover,
		$$\deg(D_t)<k_t+p^t
		\leq\bigl(R(c+1)+1\bigr)p^t.$$
		By the choice of $p$,
		\begin{align*}
			\frac{2\deg(D_t)(1/\eps_0+1)}{|S_t|-n_t}
			&\leq\frac{2\bigl(R(c+1)+1\bigr)(2/\eps+1)}{p-c-2}\\
			&<2^{-1-32/\eps^2}.
		\end{align*}
		If $\cP_t$ is sampled uniformly without replacement from $S_t$, applying the logarithmic estimate from the proof of \cref{thm:prob_curve} shows that $\cC_L(K_{p,t},\cP_t,D_t)$ fails to be $\eps$-relaxed HS-optimal with probability at most $2^{-n_t}$. Since $p$ is fixed, these codes form the claimed infinite family over $\F_{p^2}$.
	\end{proof}

	\appendices
	
	\crefalias{section}{appendix}
	\section{Lemma~\ref{lm:shorten}}\label{appendix:shorten}
    
    \begin{proof}
		For two different codewords $\vc_1,\vc_2\in\cC'$, $$d_I(\vc_1,\vc_2)=d_I((\vc_1,0),(\vc_2,0)).$$
		Thus 
		\begin{align*}
			d_I(\cC')
			&=\min_{\vc_1\neq\vc_2\in\cC'}d_I(\vc_1,\vc_2)\\
			&=\min_{\vc_1\neq\vc_2\in\cC'}
			d_I((\vc_1,0),(\vc_2,0))\\
			&\geq d_I(\cC).
		\end{align*}
		Suppose now that $\cC$ is HS-optimal. If $\One_{n-1}\in\cC'$, then $(\One_{n-1},0)\in\cC$, and hence
		$$
		(0,\ldots,0,1)=\One_n-(\One_{n-1},0)\in\cC,
		\qquad d_I((0,\ldots,0,1),\Zero)=2.
		$$
		This contradicts $d_I(\cC)=2(n-2k+2)\geq4$. Thus $\One_{n-1}\notin\cC'$. Moreover,
		$$d_I(\cC')\geq d_I(\cC)=2(n-2k+2)=2((n-1)-2(k-1)+1),$$
		so $\cC'$ is strict HS-optimal.
	\end{proof}
	
	\section{Lemma~\ref{lm:poly_degenerate}}\label{appendix:poly_degenerate}
	\begin{proof}
	Let $d_2$ be the dimension of the irreducible variety $Y \subseteq \mathbb{F}^{n_2}$. By applying \cref{lm:parameterize_variety} to $Y$, there exist formal power series $p_1, \dots, p_{n_2} \in \mathbb{F}[[z_1, \dots, z_{d_2}]]$ such that for any polynomial $h(\vy) \in \mathbb{F}[y_1, \dots, y_{n_2}]$, $h(\vy)$ is generically non-vanishing over $Y$ if and only if the substituted series $h(p_1, \dots, p_{n_2})$ is non-zero in $\mathbb{F}[[z_1, \dots, z_{d_2}]]$.
	
	We can naturally view $f(\vx, \vy)$ as a polynomial in $\vy$ whose coefficients are polynomials in $\vx$. By substituting the variables $y_i$ with the corresponding power series $p_i$, we obtain $f(\vx, p_1, \dots, p_{n_2})$, which is a formal power series in the variables $z_1, \dots, z_{d_2}$ whose coefficients are polynomials in $\mathbb{F}[x_1, \dots, x_{n_1}]$.
	
	Observe that each of these new polynomial coefficients is formed strictly by evaluating the $\vy$-components of $f(\vx, \vy)$. Because substituting for $\vy$ does not alter the $\vx$ variables or increase their exponents, the total degree of any resulting coefficient polynomial is bounded by the maximum degree of $\vx$ in the original polynomial $f(\vx, \vy)$. That is, the degree of any such coefficient polynomial is at most $\deg_\vx(f)$.
	
	By the premise of the lemma, $f(\vx, \vy)$ is generically non-vanishing over the product variety $X \times Y$. If every single coefficient polynomial of our constructed power series vanished completely over $X$, then for any generic point $\bar{\vx} \in X$, the evaluated power series $f(\bar{\vx}, p_1, \dots, p_{n_2})$ would be strictly zero. By \cref{lm:parameterize_variety}, this would imply that $f(\bar{\vx}, \vy)$ vanishes completely over $Y$ for all generic points $\bar{\vx}$, directly contradicting the fact that $f(\vx, \vy)$ is generically non-vanishing over $X \times Y$.
	
	Therefore, there must exist at least one specific coefficient polynomial in this series that is generically non-vanishing over $X$. We define our target polynomial $g(\vx)$ to be this exact coefficient polynomial.
	
	To conclude the proof, it remains to verify that $g(\vx)$ satisfies the three stated conditions. By its very selection, $g(\vx)$ is inherently generically non-vanishing over $X$, and as established previously, its total degree is bounded by $\deg_\vx(f)$. Finally, let $\bar{\vx} \in X$ be any point such that $g(\bar{\vx}) \neq 0$. When evaluating the formal power series at $\bar{\vx}$, the term associated with $g(\bar{\vx})$ remains strictly non-zero. This ensures that the entire evaluated series $f(\bar{\vx}, p_1, \dots, p_{n_2})$ is non-zero. By \cref{lm:parameterize_variety}, this algebraic condition guarantees that the specialized polynomial $f(\bar{\vx}, \vy)$ is generically non-vanishing over $Y$, completing the proof. 
	\end{proof}

	\section{Details of Proving Lemma~\ref{lm:code_equiv}}\label{appendix:code_equiv}
	
	To prove this lemma, we recall some algebraic geometric tools from \cite{vladut2007algebraic}. Throughout this subsection, $X$ is a smooth complete curve of genus $g$.
	
	\textbf{Degree of a map.} Note that, if $f:X \to Y$ is a non-constant rational map, then $f^*:\F(Y)\hookrightarrow \F(X)$ is a field embedding. Since the fields $\F(X)$ and $\F(Y)$ are of transcendence degree 1 over $\F$ and are finitely generated, the degree of extension $[\F(X) : f^*(\F(Y))]$ is finite; it is called the \emph{degree of the map $f$} and is denoted by $\deg f$. 
	
	\paragraph{Cartier divisors} Let $D \in\Div(X)$, $D = \sum a_P P$, and let $U$ be an open subset in $X$. By $D|_U$ we denote the following divisor on $U$:
	$$D|_U = \sum_{P \in U} a_P P.$$
	For any $D \in \Div(X)$ and any $P\in X$ there exists an open set $U=U(P,D)$ such that $P\in U$ and $D|_U$ is a principal divisor on $U$, i.e., $D|_U = (f)|_U$ for some $f \in \F(X)$. Combining the fact that a projective curve is compact, for any $D \in \Div(X)$ there exists a finite open covering $\{U_i\}$ of $X$ such that $D|_{U_i} = (f_i)|_{U_i}$. On the set $U_i \cap U_j$, the function $f_i f_j^{-1}$ is regular and has no zeroes. Conversely, for any system $(\{U_i\}, \{f_i\})$, where $\{U_i\}$ is a finite open covering of $X$, $f_i$ are rational functions, and $f_i f_j^{-1} \in \F[U_i \cap U_j]^*$, there is a unique divisor $D \in \Div X$ with $D|_{U_i} = (f_i)|_{U_i}$. A system $(\{U_i\}, \{f_i\})$ with these properties is called a \emph{Cartier divisor}.
	
	\paragraph{Functoriality} Using Cartier divisors, we can define the \emph{inverse image} $\varphi^*(D) \in \Div(X)$ of a divisor $D \in \Div(Y)$ under a regular map $\varphi:X \to Y$ of smooth irreducible complete curves. Indeed, let $(\{U_i\}, \{f_i\})$ be a Cartier divisor on $Y$, corresponding to $D \in \Div Y$. Then $\varphi^*(D)$ is defined on the covering $\{\varphi^{-1}(U_i)\}$ of $X$ by the system of functions $\{\varphi^*(f_i)\}$, where $\varphi^*(f_i)(x) = f_i(\varphi(x))$. One can easily check that $\varphi^*$ defines a group homomorphism $\varphi^*: \Div(Y) \to \Div(X)$; here, $\varphi^*(P(Y)) \subseteq P(X)$ since $\varphi^*((f)) = (\varphi^*(f))$ for any $f \in \F(Y)$. 
	
	Using Cartier divisors, for a curve $X \subset \bbP^n$ we can also define the divisor $(F)\in \Div(X)$ of a form $F$ on $\bbP^n$. Indeed, put $U_i=\{x_i\neq 0\} \cap X$ and $f_i = F/x_i^s$, where $s=\deg F$ and $(x_0:\dots:x_n)$ are homogeneous coordinates in $\bbP^n$. This system $(\{U_i\}, \{f_i\})$ defines a Cartier divisor. 
	
	There is an equivalent definition of the degree of a curve. If $L$ is a linear form, then $(L)$ is called a hyperplane section divisor. All hyperplane section divisors are linearly equivalent and therefore have the same degree, which is the degree of $X$. This agrees with the definition given above.
	
	Next we are ready to introduce a fundamental lemma in algebraic geometry, showing that the number of points in the inverse image of a point with multiplicity is the degree of the map.
	
	\begin{lemma}[Corollary 2.1.74, \cite{vladut2007algebraic}]\label{lm:pullback}
		Let $X$, $Y$ be two smooth projective curves, and let $f:X\to Y$ be a non-constant map. Then for any point $P\in Y$ we have
		$$\deg f^*(P)=\deg f,$$
		and for any divisor $D\in \Div (Y)$, we have
		$$\deg f^*(D)=(\deg D)(\deg f)$$
	\end{lemma}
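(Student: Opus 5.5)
The plan is to prove the first identity, $\deg f^*(P)=\deg f$, by a local algebraic computation at $P$. The second identity then follows formally. Indeed, $f^*:\Div(Y)\to\Div(X)$ is a group homomorphism, as noted in the functoriality paragraph, and $\deg$ is additive. So for $D=\sum a_PP$ we get
$$\deg f^*(D)=\sum a_P\deg f^*(P)=(\deg f)\sum a_P=(\deg D)(\deg f).$$

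\textbf{Step 1: $f^*(P)$ in terms of ramification indices.} Let $t$ be a local parameter at $P$, i.e.\ $v_P(t)=1$. Take an open neighbourhood $U$ of $P$ on which $P|_U=(t)|_U$; this is a Cartier representative of the divisor $P$. By the definition of the inverse image, on $f^{-1}(U)$ the divisor $f^*(P)$ is $(f^*t)$. Hence
$$f^*(P)=\sum_{Q\in f^{-1}(P)}e_Q\,Q,\qquad e_Q=v_Q(f^*t)\geq1.$$
Only points of $f^{-1}(P)$ occur, because $f^*t$ is a unit near every other point of $f^{-1}(U)$. Since $f$ is a non-constant map of complete curves, it is surjective and finite, so $f^{-1}(P)$ is finite and nonempty. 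It therefore suffices to show $\sum_Q e_Q=[\F(X):f^*\F(Y)]=:n$.

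\textbf{Step 2: the fundamental equality.} Let $A=\mathcal{O}_{Y,P}$, a DVR with uniformizer $t$ and residue field $\F$. Let $B$ be the integral closure of $A$ in $\F(X)$, identifying $\F(Y)$ with $f^*\F(Y)$. Because $X$ is smooth, $B$ is the semilocal ring $\bigcap_{Q\in f^{-1}(P)}\mathcal{O}_{X,Q}$. Its maximal ideals $\mathfrak m_Q$ correspond to the points over $P$, and $B_{\mathfrak m_Q}=\mathcal{O}_{X,Q}$.

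First, $B$ is a finitely generated $A$-module. This follows from the finiteness of integral closure for finitely generated $\F$-algebras, or equivalently from finiteness of the morphism $f$ over an affine neighbourhood of $P$. Second, $B$ is torsion-free over the PID $A$, so it is free. Its rank is $n$, because $B\otimes_A\F(Y)=\F(X)$. Hence $\dim_\F B/tB=n$.

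On the other hand, $tB=\prod_Q\mathfrak m_Q^{e_Q}$ in the Dedekind ring $B$. The Chinese remainder theorem then gives
$$B/tB\cong\prod_Q \mathcal{O}_{X,Q}/\mathfrak m_Q^{e_Q}.$$
Each factor has $\F$-dimension $e_Q$: the quotients $\mathfrak m_Q^i/\mathfrak m_Q^{i+1}$ are one-dimensional over the residue field $\F$, and the residue degree is $1$ because $\F$ is algebraically closed. Comparing dimensions yields $\sum_Qe_Q=n$, which proves the first identity.

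\textbf{Expected obstacle.} The main technical point is Step 2. One must justify that $B$ is finite over $A$ and that its maximal ideals are exactly the points of $f^{-1}(P)$. The latter uses that $f$ is finite, which follows since a non-constant morphism of complete curves is proper with finite fibers. I would cite these standard facts, e.g.\ from \cite{hartshorne2013algebraic} (Chapter II, Prop.\ 6.9), rather than reprove them. Everything else is bookkeeping with Cartier divisors and additivity.
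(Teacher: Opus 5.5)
The paper does not prove this lemma; it quotes it from \cite{vladut2007algebraic} (Corollary 2.1.74) and uses it in the proof of \cref{lm:code_equiv} to get $\deg(C)=\deg(D)$ for the degree-one map $\varphi_D$. Your proposal is correct and self-contained, and it is the standard argument, essentially the proof of Prop.~II.6.9 in \cite{hartshorne2013algebraic}. You reduce to a single point by additivity of $f^*$. You then read off $f^*(P)=\sum_{Q\mapsto P}v_Q(f^*t)\,Q$ from the Cartier definition. Finally you compute $\dim_\F B/tB$ twice: once from freeness of rank $[\F(X):f^*\F(Y)]$, and once via the Chinese remainder theorem.

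Compared with simply citing, your route makes explicit the two places where hypotheses actually enter. The first is finiteness of $B$ over $A$. You rightly take this from finiteness of integral closure for finitely generated algebras, so no separability of $\F(X)/f^*\F(Y)$ is needed. The second is residue degree $1$ at each $Q$, which uses that $\F$ is algebraically closed. This matches the paper's convention that points lie in $X(\overline{\F}_q)$. For closed points of higher degree over $\F_q$ the identity would instead read $\deg f^*(P)=\deg f\cdot\deg P$, so this hypothesis is genuinely needed.

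One small correction: Prop.~II.6.9 of \cite{hartshorne2013algebraic} \emph{is} the statement you are proving, so it should not be cited for your auxiliary facts. The surjectivity and finiteness of a non-constant morphism from a complete nonsingular curve is Prop.~II.6.8 there. The identification of the points over $P$ with the discrete valuation rings of $\F(X)$ dominating $\mathcal{O}_{Y,P}$ comes from the point/valuation correspondence for complete nonsingular curves in Chapter~I, Section~6.
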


	\paragraph{Embedding by linear systems} Let $D\in \Div(X)$ be a divisor on $X$ and let $k=\ell(D)$. Taking a basis $\{f_0,\dots, f_{k-1}\}$ of $L(D)$, we can define a rational map
	$$\varphi_D: X \to \bbP^{k-1} \quad P \mapsto (f_0(P):\cdots:f_{k-1}(P)).$$
	
	The map $\varphi_D$ is regular on $X\setminus \supp(D)$. For a point $P\in \supp(D)$, choose a local parameter $t_P$ at $P$, i.e. a function with $v_P(t_P)=1$. Let $m_P=\min_i \{v_P(f_i)\}$. By the definition of $L(D)$, $m_P \geq -v_P(D)$. The map $\varphi_D$ at $P$ is defined by
	$$\varphi_D(P)=\left((t_P^{-m_P}f_0)(P):\cdots:
	(t_P^{-m_P}f_{k-1})(P)\right).$$
	Since $m_P$ is the minimum valuation, at least one of the terms $(t_P^{-m_P}f_i)(P)$ is non-zero and all are finite, so the point in $\bbP^{k-1}$ is well-defined. Therefore $\varphi_D$ is a regular map from $X$ to its image $C$, which is a complete projective curve. Since $f_0,\dots,f_{k-1}$ are linearly independent, $C$ is not contained in a hyperplane $H\subset \bbP^{k-1}$.
	
	A point $P$ is called a base point of the divisor $D$ if every function $f \in L(D)$ satisfies $v_P(f)>-v_P(D)$. $D$ is said to be a base point free divisor if there is no base point. Notice that if $D$ is base point free, there exists an $i$ such that $v_P(f_i)=-v_P(D)$. Thus the parameter $m_P$ we choose is exactly $-v_P(D)$. 
	
	The Riemann-Roch theorem implies that if $\deg(D)$ is large enough, the map $\varphi_D: X\to C$ is an embedding, i.e. defines an isomorphism of $X$ onto $C$. 
	
	\begin{lemma}[Corollary 2.2.29, \cite{vladut2007algebraic}]\label{lm:embedding}
		If $\deg D \geq 2g+1$, then $\varphi_D$ is an embedding of $X$ in $\bbP^{k-1}$.
	\end{lemma}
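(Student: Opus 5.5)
The plan is the standard Riemann--Roch argument, in three steps: show $\varphi_D$ is regular everywhere, show it separates points, and show it separates tangent vectors. The numerical input is the Riemann--Roch lemma recalled above. Let $E$ be any effective divisor of degree at most $2$, that is $E=0$, $E=P$, $E=P+Q$ with $P\neq Q$, or $E=2P$. Then $\deg(D-E)\geq 2g-1>2g-2$, so
$$\ell(D-E)=\deg D-\deg E-g+1=\ell(D)-\deg E .$$
In particular $\ell(D-P)=\ell(D)-1$ and $\ell(D-P-Q)=\ell(D)-2$ for all points $P,Q$ of $X$ over $\overline{\F}_q$, including $P=Q$.

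\emph{Step 1: no base points.} From $\ell(D-P)<\ell(D)$ there is some $f\in L(D)$ with $v_P(f)=-v_P(D)$. So $D$ is base point free, and as noted before the statement this gives $m_P=-v_P(D)$. Hence $\varphi_D$ is a regular map $X\to C\subseteq\bbP^{k-1}$, given near $P$ by the functions $t_P^{v_P(D)}f_i$, which are regular at $P$ and not all zero there.

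\emph{Step 2: injectivity.} Let $P\neq Q$. Since $\ell(D-P-Q)<\ell(D-P)$, there is $f\in L(D-P)\setminus L(D-P-Q)$. Then $f=\sum a_if_i$ defines a hyperplane $H=\{\sum a_ix_i=0\}$. The hyperplane section divisor $(\sum a_ix_i)$ on $X$ equals $(f)+D$, by the Cartier description of the divisor of a form. This divisor contains $P$ but not $Q$. Hence $\varphi_D(P)\in H$ and $\varphi_D(Q)\notin H$, so $\varphi_D(P)\neq\varphi_D(Q)$.

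\emph{Step 3: unramified.} Since $\ell(D-2P)<\ell(D-P)$, there is $f\in L(D)$ with $v_P(f)=-v_P(D)+1$ exactly. Pick a basis element $f_j$ with $v_P(f_j)=-v_P(D)$. Then $f/f_j$ is a rational function on $C$, regular at $\varphi_D(P)$, whose pullback to $X$ has valuation exactly $1$ at $P$. So its pullback is a local parameter at $P$. This means the maximal ideal of $\mathcal O_{C,\varphi_D(P)}$ maps onto $\mathfrak m_P/\mathfrak m_P^2$, i.e.\ the differential of $\varphi_D$ is injective at every point.

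\emph{Step 4: conclusion, and the main obstacle.} Upgrading ``injective and unramified'' to ``isomorphism onto $C$'' is the step needing care, since $C$ is a priori only a possibly singular curve. I would argue as follows. Because $X$ is complete, $\varphi_D:X\to C$ is proper and quasi-finite, hence finite. Take a hyperplane section divisor $H$ of $C$ avoiding the finitely many singular points of $C$, and apply \cref{lm:pullback} after passing to the normalization of $C$; together with injectivity this shows $\deg\varphi_D=1$, so $\varphi_D$ is birational. For each point $y\in C$, set $x=\varphi_D^{-1}(y)$. Then $\mathcal O_{X,x}$ is a finite $\mathcal O_{C,y}$-module with $\mathcal O_{C,y}/\mathfrak m_y\cong\mathcal O_{X,x}/\mathfrak m_x$ and $\mathfrak m_y\mathcal O_{X,x}=\mathfrak m_x$ by Step 3. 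Nakayama's lemma then gives $\mathcal O_{C,y}=\mathcal O_{X,x}$. Thus $\varphi_D$ is a closed immersion and an isomorphism onto $C$; this is the criterion of Hartshorne~II.7.3. The Nakayama step at singular-looking image points is where I expect the real work to lie.
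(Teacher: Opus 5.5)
Your argument is correct and is the standard proof behind the cited Corollary 2.2.29 (equivalently Hartshorne IV.3.1--3.2). Riemann--Roch gives $\ell(D-P-Q)=\ell(D)-2$ for all $P,Q$ (including $P=Q$), hence base-point-freeness, separation of points and separation of tangent vectors, and the closed-immersion criterion then finishes.

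The one blemish is the birationality detour in Step 4. It is unnecessary: injectivity of $\mathcal O_{C,y}\to\mathcal O_{X,x}$ follows from dominance, and Nakayama gives surjectivity. As phrased it is also wrong in characteristic $p$, because injectivity alone does not force $\deg\varphi_D=1$ (Frobenius is bijective of degree $p$); there you would have to invoke the unramifiedness from Step 3.
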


	For $\deg(D)\geq 2g+1$, the preceding results show that $\ell(D)=\deg(D)-g+1$, that $D$ is base point free, and that $\varphi_D$ is an embedding.

	Finally we come to the proof of the lemma. It is easy to see that assuming the curve $X$, the divisor $D$ and the Riemann-Roch space are defined over $\F_q$ does not affect any of those previous propositions.
	
	\begin{proof}[Proof of \cref{lm:code_equiv}]
		Because $D=dP$ is effective, $1\in L(D)$ and we may choose the basis in the statement. By \cref{lm:embedding}, the associated map
		$$\varphi_D:X\longrightarrow\bbP^{k-1},\qquad
		Q\longmapsto[f_0(Q):\cdots:f_{k-1}(Q)]$$
		is an embedding. At $P$, let $t$ be a local parameter. Since $f_i\in L(dP)$, each $t^df_i$ is regular at $P$, and since $dP$ is base point free, these functions do not all vanish there. Hence
		$$\varphi_D(P)=[(t^df_0)(P):\cdots:(t^df_{k-1})(P)].$$
		Because $f_0=1$ and $d>0$, the first coordinate of $\varphi_D(P)$ is zero. Thus $\varphi_D(P)$ is a point at infinity.

		Let $C=\varphi_D(X)$. The pullback of a hyperplane section of $C$ is linearly equivalent to $D$. Since $\varphi_D$ has degree $1$, \cref{lm:pullback} gives $\deg(C)=\deg(D)=d$.
		The curve $C$ is irreducible and is not contained in a projective hyperplane, because $f_0,\dots,f_{k-1}$ are linearly independent.
		
		Let $U_0=\{x_0\neq0\}\subseteq\bbP^{k-1}$ and identify it with $\overline{\F}_q^{k-1}$. Since $f_0=1$, the restriction of $\varphi_D$ to $X\setminus\{P\}$ is
		$$Q\longmapsto(f_1(Q),\dots,f_{k-1}(Q)).$$
		Since $\varphi_D(P)\notin U_0$, its image is exactly $C\cap U_0$. This is an irreducible affine curve whose projective closure is $C$, so it has degree $d$. Adjoining a first coordinate equal to $1$ gives the curve
		$$Y=\{(1,f_1(Q),\dots,f_{k-1}(Q)):Q\in X\setminus\{P\}\}
		\subseteq\overline{\F}_q^k.$$
		This affine embedding preserves the degree. Moreover, since $C$ is not contained in a projective hyperplane, a generic set of $k$ points of $C$ is linearly independent. Their affine lifts in $Y$ are therefore linearly independent, so $Y$ is an MDS curve.
		
		For $\cP=(P_1,\dots,P_n)$, the $i$-th column of the generator matrix associated with the chosen basis is
		$$G_i=(1,f_1(P_i),\dots,f_{k-1}(P_i))^T=\psi(P_i)^T.$$
		Finally, $\psi$ is the restriction of an embedding and is defined over $\F_q$. Hence it maps every $S\subseteq X(\F_q)\setminus\{P\}$ bijectively onto $S'=\psi(S)\subseteq Y(\F_q)$, and this bijection preserves uniform sampling without replacement.
	\end{proof}
	
\bibliographystyle{IEEEtran}
\bibliography{ref}

\end{document}